\thanks{\thanks{Some of the results in this paper will appear in Section 3 of the AAMAS 2026 paper ``Minimax and Preferential Almost-Stable Matchings''.}}
\pgfplotsset{compat=1.18}
\definecolor{acmDarkBlue}{RGB}{0,114,178}
\definecolor{acmGreen}{RGB}{0,158,115}
\definecolor{acmPink}{RGB}{204,121,167}
\definecolor{acmOrange}{RGB}{213,94,0}
\definecolor{acmYellow}{RGB}{240,228,66}
\definecolor{acmLightBlue}{RGB}{86,180,233}
\pgfplotsset{
    cycle list={
        {acmDarkBlue, mark=*},
        {acmGreen, mark=square*},
        {acmPink, mark=triangle*},
        {acmOrange, mark=diamond*},
        {acmYellow, mark=o},
        {acmLightBlue, mark=star}
    }
}
\newtcolorbox{problemBox}{
  colback=white,
  colframe=black,
  boxrule=0.5pt,
  arc=2pt,
  left=4pt,
  right=4pt,
  top=4pt,
  bottom=4pt
}
\newtheorem{definition}{Definition}[section]
\newtheorem{theorem}[definition]{Theorem}
\newtheorem{corollary}[definition]{Corollary}
\newtheorem{claim}{Claim}
\begin{document}

\title{Stable Matching with Deviators and Conformists}

\author{Frederik Glitzner}
\email{f.glitzner.1@research.gla.ac.uk}
\orcid{0009-0002-2815-6368}
\author{David Manlove}
\email{david.manlove@glasgow.ac.uk}
\orcid{0000-0001-6754-7308}
\affiliation{%
  \institution{University of Glasgow}
  \city{Glasgow}
  \country{United Kingdom}
}

\renewcommand{\shortauthors}{F. Glitzner and D. Manlove}

\begin{abstract}
    In the fundamental {\sc Stable Marriage} and {\sc Stable Roommates} problems, there are inherent trade-offs between the size and stability of solutions. While in the former problem, a stable matching always exists and can be found efficiently using the celebrated Gale-Shapley algorithm, the existence of a stable matching is not guaranteed in the latter problem, but can be determined efficiently using Irving's algorithm. However, the computation of matchings that minimise the instability, either due to the presence of additional constraints on the size of the matching or due to restrictive preference cycles, gives rise to a collection of infamously intractable \emph{almost-stable matching} problems.

    In practice, however, not every agent is able or likely to initiate deviations caused by blocking pairs. Suppose we knew, for example, due to a set of requirements or estimates based on historical data, which agents are likely to initiate deviations -- the \emph{deviators} -- and which are likely to comply with whatever matching they are presented with -- the \emph{conformists}. Can we decide efficiently whether a matching exists in which no deviator is blocking, i.e., in which no deviator has an incentive to initiate a deviation? Furthermore, can we find matchings in which only a few deviators are blocking?

    We characterise the computational complexity of this question in bipartite and non-bipartite preference settings. Surprisingly, these problems prove computationally intractable in strong ways: for example, unlike in the classical setting, where every agent is considered a deviator, in this extension, we prove that it is {\sf NP-complete} to decide whether a matching exists where no deviator is blocking. On the positive side, we identify polynomial-time and fixed-parameter tractable cases, providing novel algorithmics for multi-agent systems where stability cannot be fully guaranteed.
\end{abstract}

%% The code below is generated by the tool at http://dl.acm.org/ccs.cfm.
\begin{CCSXML}
<ccs2012>
   <concept>
       <concept_id>10002950.10003624.10003633.10003642</concept_id>
       <concept_desc>Mathematics of computing~Matchings and factors</concept_desc>
       <concept_significance>500</concept_significance>
       </concept>
   <concept>
       <concept_id>10003752.10003809</concept_id>
       <concept_desc>Theory of computation~Design and analysis of algorithms</concept_desc>
       <concept_significance>500</concept_significance>
       </concept>
   <concept>
       <concept_id>10003752.10010070.10010099</concept_id>
       <concept_desc>Theory of computation~Algorithmic game theory and mechanism design</concept_desc>
       <concept_significance>500</concept_significance>
       </concept>
   <concept>
       <concept_id>10010405.10010455.10010460</concept_id>
       <concept_desc>Applied computing~Economics</concept_desc>
       <concept_significance>300</concept_significance>
       </concept>
 </ccs2012>
\end{CCSXML}

\ccsdesc[500]{Mathematics of computing~Matchings and factors}
\ccsdesc[500]{Theory of computation~Design and analysis of algorithms}
\ccsdesc[500]{Theory of computation~Algorithmic game theory and mechanism design}
\ccsdesc[300]{Applied computing~Economics}

%%
%% Keywords. Separate with commas.
\keywords{Stable Marriage, Stable Roommates, Almost-Stable Matching, Fairness, Parametrised Complexity, Efficient Algorithm}

%\received{XX October XXXX}
%\received[revised]{XX October XXXX}
%\received[accepted]{XX October XXXX}

\maketitle
\section{Introduction}
\label{sec:intro}

The two foundational stable matching problems {\sc Stable Marriage} ({\sc smi}) and {\sc Stable Roommates} ({\sc sri}) were first studied by Gale and Shapley \cite{gale_shapley}. In the former problem, we are given two disjoint sets of agents, traditionally referred to as men and women, with ordinal preferences over each other. In the latter problem, agents are part of just one set and express preferences over other agents in the same set. Agents may express incomplete preferences, deeming some possible partners unacceptable. Now, we seek a \emph{matching} consisting of disjoint pairs of mutually acceptable agents. In practice, not all matchings are equally good -- usually, fairness and robustness requirements apply. For example, and for the purpose of this paper, we mainly consider \emph{stable matchings}, that is, matchings that do not admit any \emph{blocking pairs} consisting of mutually acceptable agents that prefer each other to their current partners (or remain unmatched). This requirement underpins many real-world allocation tasks, such as matching junior doctors (so-called residents) to hospital positions \cite{irving1998matching}, University students to dormitory rooms \cite{abraham_grp}, and network users to communication channels \cite{noma}. Much of the literature focuses on the design of efficient algorithms for centralised matching schemes for such problems \cite{matchup}. A key natural question is the following.

\begin{center}
\textbf{Question:} Does there exist a matching in which no agent is blocking?
\end{center}

\citet{gale_shapley} gave an affirmative answer to this question for the {\sc smi} setting and provided an efficient algorithm to find such a stable matching. However, in the {\sc sri} setting, the existence of such a matching is not guaranteed (see \cite{gusfield89} for an extensive structural overview), though \citet{irving_sr} presented an efficient algorithm to decide the existence of a stable matching in this setting. 

However, to make stable matchings useful in practice, two key challenges need to be overcome. First, when allowing agents to express incomplete preferences over agents, there can be a significant trade-off between the stability and the size of a matching, as, even in the {\sc smi} setting, stable matchings might not be maximum-cardinality matchings \cite{biro_sm_10}. Furthermore, as already highlighted above, stable matchings might not exist at all in the {\sc sri} setting \cite{glitzner2025empirics}. The following question arises.

\begin{center}
\textbf{Question:} Does there exist a maximum-cardinality matching in which no agent is blocking?
\end{center}

Gale and Sotomayor \cite{galesoto85} (for the {\sc smi} setting) and Gusfield and Irving \cite{gusfield89} (for the {\sc sri} setting) observed that the set of matched agents is identical in all stable matchings when preferences are assumed to be strict (an important assumption we make throughout this paper). Consequently, all stable matchings have the same size. However, stable matchings might be significantly smaller than maximum-cardinality matchings, and real-world applications usually require as many agents to be matched as possible. \citet{bertsimas2025relaxstabilitymatchingmarkets}, for example, recently noted that even relaxing the stability requirement by a little bit can give significant benefits with regard to the size of the achievable matchings. One way to do so is to seek a maximum-cardinality matching that is ``as stable as possible''. Eriksson and Häggström \cite{Eriksson2008} reviewed different measures of instability and concluded that, among these measures, the number of blocking pairs is often the most useful measure of instability \cite{matchup}. 

However, are all agents really likely to identify and act on opportunities to deviate in practice? For example, in a school choice or job market setting, are all schools, companies and applicants equally likely to reach out to better potential partners post-matching to identify blocking pairs? Naturally, the answer in many practical scenarios is not necessarily -- some agents simply do not have the capability, resources, or incentives to detect a beneficial opportunity. Yet, these agents may still deviate when someone else reaches out to them about a beneficial deviation opportunity. Thus, in these settings, there may be less of a practical need to minimise the instability of these agents. Specifically, in situations where full stability is not achievable, for example due to the presence of additional constraints on the size of the matching, or due to restrictive preference cycles, it may be feasible to simply enforce that no \emph{deviator} (i.e., the agents that are likely to deviate when given a beneficial opportunity) has a beneficial opportunity to deviate, while placing no such requirement on the \emph{conformists} (i.e., the agents that are unlikely to initiate deviations). In this paper, we focus on the following natural question and its extensions across a variety of different settings.

\begin{center}
\textbf{Question:} Does there exist a desirable matching in which no deviator is blocking?
\end{center}

\subsection{Our Contributions}

In this paper, we outline the computational landscape surrounding the computation of (maximum-cardinality) matchings that minimise the instability of a given subset of \emph{deviator agents} in terms of their numbers and sets of blocking pairs. Unfortunately, we derive strong intractability results: even deciding whether there exists a matching (maximum-cardinality matching) in which no deviator agent is in any blocking pair is {\sf NP-complete} for {\sc sri} (respectively {\sc smi}). On the positive side, we present efficient parametrised algorithms and polynomial-time algorithms for instances with short preference lists. We also extend our results to the minimisation of blocking deviator agents and resolve some further open questions from the literature.

\subsection{Structure of the Paper}

In Section \ref{sec:background}, we give formal definitions and discuss relevant related work. Deviator-stability in terms of blocking pairs is introduced and investigated in Section \ref{sec:pref}, followed by parametrised algorithms in Section \ref{sec:paramalgo}. In Section \ref{sec:prefminba}, we extend the results to an alternative blocking agent perspective. Section \ref{sec:prefshort} contains flexible, efficient algorithms for instances with very short lists that can minimise blocking pairs involving deviator agents, or the number of blocking deviators. In Section \ref{sec:furtheropen} we resolve some further open cases from the literature. Finally, we summarise new and known complexity results in Table \ref{table:results} and pose open questions in Section \ref{sec:conclusion}.

\section{Background}
\label{sec:background}

\subsection{Formal Definitions}

In Section \ref{sec:intro}, we introduced the {\sc smi} and {\sc sri} problems informally. Now, we will provide formal definitions of these problem instances that apply throughout this paper.

\begin{definition}[{\sc sri} Instance]
    A {\sc Stable Roommates with Incomplete Lists} ({\sc sri}) \emph{instance}, denoted by $I=(A,\succ)$, consists of a set of $n\in\mathbb N$ agents $A=\{ a_1, a_2, \dots, a_n \}$ and a tuple $\succ$ of $n$ \emph{preference rankings} $\succ_i$ (where $i$ is the index of corresponding agent $a_i$) over a subset of $A\setminus\{a_i\}$. We say that $a_i$ \emph{prefers} $a_j$ to $a_k$ if $a_j\succ_ia_k$. If $a_j\succ_i a_k$ or $j=k$, we will write $a_j\succeq_ia_k$. Two agents $a_i$ and $a_j$ are referred to as \emph{acceptable} if they appear in each other's preference rankings. Throughout, we assume that acceptability is symmetric.
\end{definition}

Note that any {\sc sri} instance $I=(A,\succ)$ can be modelled as a graph $G=(A,E)$, where every agent is a vertex and every acceptable pair of agents induces an edge in $E$. From hereon, $G$ is referred to as the \emph{acceptability graph} of $I$. If $G$ is isomorphic to a complete graph on $\vert A\vert$ vertices, then we say that $I$ has \emph{complete preferences}. Otherwise, we say that $I$ has \emph{incomplete preferences}. The restriction of {\sc sri} instances to instances that have bipartite acceptability graphs is the class of {\sc Stable Marriage with Incomplete Lists} ({\sc smi}) instances. Given a bipartition of $A$ into mutually exclusive but collectively exhaustive subsets $A_1$ and $A_2$, we say that an {\sc smi} instance has complete preferences (with respect to this bipartition) if its acceptability graph is isomorphic to the complete bipartite graph on $\vert A_1\vert,\vert A_2\vert$ vertices.

Throughout, we define matchings and associated concepts as follows.

\begin{definition}[Matchings]
    Let $I=(A,\succ)$ be an {\sc sri} (or {\sc smi}) instance. Then a \emph{matching} $M$ of $I$ is a set consisting of (unordered) disjoint pairs of agents. For any pair $\{a_i,a_j\}\in M$, we refer to $a_j$ as the \emph{partner}, denoted by $M(a_i)$, of $a_i$, and vice versa. If there exists some $a_j$ such that $\{a_i,a_j\}\in M$, then we refer to $a_i$ as \emph{matched} in $M$. Otherwise, we refer to $a_i$ as \emph{unmatched} and denote $M(a_i)=a_i$. 
    
    The set of all matchings of $I$ is given by $\mathcal M$. A \emph{maximum-cardinality matching} is a matching $M$ of $I$ such that, for any other $M'\in \mathcal M$, it is true that $\vert M\vert\geq \vert M'\vert$. The set of all maximum-cardinality matchings is denoted by $\mathcal{M}^+$. Furthermore, maximum-cardinality matchings are \emph{perfect} if every agent is matched (to someone other than themselves). The set of perfect matchings (which might be empty) is denoted by $\mathcal{M}^p$.
\end{definition}

We formalise the classical concept of stability below.

\begin{definition}[Stability]
    Given an {\sc sri} (or {\sc smi}) instance $I=(A,\succ)$ and a matching $M$ of $I$, a pair of distinct agents $\{a_i,a_j\}$ is a \emph{blocking pair} of $M$ if $a_i\succ_j M(a_j)$ and $a_j\succ_i M(a_i)$. The set of blocking pairs of $M$ is denoted by $bp(M)$. Given an agent $a_i$ and a matching $M$, we can restrict the set of blocking pairs to those containing $a_i$, the set of which we denote by $bp_{a_i}(M)$. Finally, we refer to $a_i$ as a \emph{blocking agent} if $bp_{a_i}(M)\neq \varnothing$. The set of all blocking agents of a matching $M$ is denoted by $ba(M)$.
\end{definition}

Any matching that has an empty set of blocking pairs is referred to as \emph{stable}. Any other matching is referred to as \emph{unstable}. Furthermore, we refer to an instance that admits at least one stable matching as \emph{solvable}, and as \emph{unsolvable} otherwise.

We also assume knowledge of the following ``almost-stable'' matching problem that aims to minimise the number of blocking pairs, first formalised by \citet{abraham06}.

\begin{problemBox}
{\sc MinBP-AlmostStable-sri} \\
\textbf{Input:} {\sc sri} instance $I$. \\
\textbf{Output:} A matching $M$ of $I$ such that $\vert bp(M)\vert=\min_{M'\in \mathcal M}\vert bp(M')\vert$.
\end{problemBox}

In the introduction, we highlighted already that even a solvable instance might not admit a stable matching which is simultaneously a maximum-cardinality matching. Given the practical importance of large matchings, another key almost-stable matching problem is to find a maximum-cardinality matching that minimises the number of blocking pairs it admits \cite{biro_sm_10}, which we define below.

\begin{problemBox}
{\sc MinBP-AlmostStable-Max-smi} \\
\textbf{Input:} {\sc smi} instance $I$. \\
\textbf{Output:} A maximum-cardinality matching $M$ of $I$ such that $\vert bp(M)\vert=\min_{M'\in \mathcal M^+}\vert bp(M')\vert$.
\end{problemBox}

To establish key computational complexity results in this paper, we will refer to the decision problem {\sc (2,2)-e3-sat}, which is a restriction of the classical {\sc Satisfiability} problem ({\sc sat}), that is known to be {\sf NP-complete}, see \citet{Berman2003Max3SAT}. The problem is defined as follows.

\begin{problemBox}
{\sc (2,2)-e3-sat} \\[4pt]
\textbf{Input:} Boolean formula $B$ in CNF, where every clause contains exactly three (distinct) literals, and every variable occurs exactly twice unnegated and twice negated. \\[2pt]
\textbf{Question:} Does $B$ admit a satisfying truth assignment?
\end{problemBox}

This rather restrictive definition of classical {\sc sat} has a nice structure. For example, the following formula is a yes-instance: $B=(V_1\lor V_2\lor V_3)\land(\bar{V_1}\lor \bar{V_2}\lor \bar{V_3})\land(V_1\lor \bar{V_2}\lor V_3)\land(\bar{V_1}\lor V_2\lor \bar{V_3})$, as the truth assignment setting $V_1$ to true and $V_2$ and $V_3$ to false satisfies $B$.

\subsection{Related Work}
\label{sec:related}

\citet{abraham06} established that the problem {\sc MinBP-AlmostStable-sri} is {\sf NP-hard} even for complete preference lists, and \citet{biro12} extended this intractability result to instances where every agent has a preference list of length at most 3. The first set of authors also proved that it is {\sf NP-hard} to approximate {\sc MinBP-AlmostStable-sri} within any $n^{1/2-\varepsilon}$ (for any $\varepsilon>0$, where $n$ is the number of agents), and \citet{chen17} proved that the problem is {\sf W[1]-hard} with respect to the parameter $\min_{M\in \mathcal M}\vert bp(M)\vert$ even when preference lists are of length at most 5.

For {\sc smi}, Biró, Manlove and Mittal \cite{biro_sm_10} studied the trade-off between matching size and stability from a computational complexity perspective. They showed that {\sc MinBP-AlmostStable-Max-smi} (and a similar problem minimising the number of blocking agents) is {\sf NP-hard} even to approximate within $n^{1-\varepsilon}$ (for any $\varepsilon>0$) \cite{biro_sm_10}. Later, Hamada et al. \cite{hamada09} strengthened this result to hold even in the presence of short preference lists of length at most 3. In contrast, for preference lists of length at most 2, Biró, Manlove and Mittal \cite{biro_sm_10} showed that {\sc MinBP-AlmostStable-Max-smi} (and the minimisation of blocking agents) is polynomial-time solvable. However, {\sc MinBP-AlmostStable-Max-smi} is {\sf W[1]-hard} even for a variety of combined parameters, which was studied extensively by Gupta et al. \cite{gupta2020parameterized}. Recently, Chen, Roy and Simola \cite{chen2025fptapproximabilitystablematchingproblems} proved in a preprint that it is {\sf W[1]-hard} (with respect to the optimal value) even to approximate the problems {\sc MinBP-AlmostStable-sri} and {\sc MinBP-AlmostStable-Max-smi} within any computable function depending only on the optimal solution value.

Complementary to the optimisation problems, we denote the decision problems asking whether there exists a matching (respectively a perfect matching in the {\sc smi} setting) with at most $k$ blocking pairs by {\sc $k$-BP-AlmostStable-sri} and {\sc $k$-BP-AlmostStable-Perfect-smi}. It is known that both of these problems are {\sf NP-complete} in general, but solvable in polynomial time for a fixed parameter $k$ (i.e., in {\sf XP} with respect to the fixed parameter \cite{flumgrohe}) \cite{abraham06,biro12,biro_sm_10,chen2019computational}. 

\citet{glitznermanloveminmax} recently studied an alternative minimax notion of almost-stability, where the maximum number of blocking pairs that any agent is contained in is minimised. They proved, for example, that even deciding whether there exists a matching in which no agent is in more than one blocking pair is {\sf NP-complete}. On the positive side, they provided efficient algorithms for instances with preference lists of length at most 3, and an approximation algorithm for the {\sc sri} setting.

Finally, we note that Floréen et al. \cite{Floreen2010} and Ostrovsky and Rosenbaum \cite{OstrovskyRosenbaum2015} considered other definitions of almost-stability, and almost-stable matchings have also been studied in other contexts, such as the {\sc Hospitals / Residents} problem with couples \cite{minbp_hrc_17,couples24}. Furthermore, there is a wide literature on other solution concepts for {\sc sri} instances \cite{gupta_popsr_21, faenza_pop,tan91_2,herings25,vandomme2025locally,glitzner24sagt,glitzner25sagt,glitzner2024structural}.

\section{Deciding Deviator-Stability is Computationally Intractable}
\label{sec:pref}

It is clear from previous work that minimising either the aggregate or the individual instability among the agents is generally intractable in a variety of critical real-world scenarios. However, generally, not every agent is able to initiate a deviation even when a beneficial opportunity to do so presents itself, for example, when they are prevented from exploring or initiating possible deviations by some external constraints. As in the introduction, we will refer to these agents as \emph{conformists}, and to the agents that are likely to deviate as \emph{deviators}. When the number of conformists is large, naturally, it should be easier to find a desirable matching in which the number of blocking deviators is small. In practical settings where full stability is not achievable, this might be a second-best solution. Hence, we will now introduce optimisation and decision problems that aim for matchings that give a form of preferential treatment to the deviator agents, or more generally, any designated subset of agents. Specifically, we seek matchings that minimise the number of blocking pairs involving these agents, or ideally, exclude them from any blocking pairs altogether.

We first propose the following natural optimisation problem, which seeks a matching that minimises the number of blocking pairs involving a given set of deviator agents $D$. Intuitively, the problem asks: how stable can a matching be made for a privileged subset of agents?

\begin{problemBox}
{\sc Deviator-sri} \\
\textbf{Input:} {\sc sri} instance $I=(A,\succ)$ and a subset $D\subseteq A$. \\
\textbf{Output:} A matching $M\in\mathcal M$ such that $\vert\bigcup_{a_i\in D}bp_{a_i}(M)\vert=\min_{M'\in \mathcal M}\vert\bigcup_{a_i\in D}bp_{a_i}(M')\vert$.
\end{problemBox}

We will also study the maximum-cardinality matching analogue, which we define formally below.

\begin{problemBox}
{\sc Deviator-Max-sri} \\
\textbf{Input:} {\sc sri} instance $I=(A,\succ)$ and a subset $D\subseteq A$. \\
\textbf{Output:} A matching $M\in\mathcal {M}^+$ such that $\vert\bigcup_{a_i\in D}bp_{a_i}(M)\vert=\min_{M'\in \mathcal{M}^+}\vert\bigcup_{a_i\in D}bp_{a_i}(M')\vert$.
\end{problemBox}

In search of more computational tractability, we will also consider the above problem in the bipartite {\sc smi} restriction, where we are given an {\sc smi} instance $I$ instead, and a subset $D\subseteq A$. We will refer to this problem as {\sc Deviator-Max-smi}. Of course, all three problems are {\sf NP-hard} in general, because the {\sf NP-hard} problems {\sc MinBP-AlmostStable-sri} and {\sc MinBP-AlmostStable-Max-smi} appear as special cases when $D=A$. However, we will investigate natural restrictions and algorithms for the case when $D$ is small, and we will also show that, contrary to their non-deviator analogues, these problems turn out to be intractable even when the optimal value is 0, i.e., when merely aiming to decide whether a deviator-stable matching exists! This contrasts known complexity results as follows: deciding whether the optimal value is 0 is efficiently solvable using the Gale-Shapley or Irving's algorithm when $D=A$, and {\sc MinBP-AlmostStable-sri} is in {\sf XP} with respect to the optimal parameter value, i.e., efficiently solvable for any fixed optimal parameter value \cite{abraham06}.

We will first examine maximum-cardinality matchings with these properties in bipartite acceptability graphs (i.e., the {\sc Deviator-Max-smi} problem) in Section \ref{sec:prefsmi}, before extending our techniques to the general case in Section \ref{sec:prefsri}.

\subsection{Deviator-Stable Maximum Matchings in Bipartite Graphs}
\label{sec:prefsmi}

We start by defining the following restriction of {\sc Deviator-Max-smi}.

\begin{problemBox}
{\sc $k$-Deviator-Max-smi} \\
\textbf{Input:} {\sc smi} instance $I=(A,\succ)$ and a subset $D\subseteq A$. \\
\textbf{Output:} A matching $M\in\mathcal {M}^+$ such that $\vert\bigcup_{a_i\in D}bp_{a_i}(M)\vert\leq k$, if one exists
\end{problemBox}

We will denote the problem of deciding whether there exists a solution to the above problem in the special case when maximum-cardinality matchings are perfect by {\sc $k$-Deviator-Perfect-smi-Dec}. 

Notice that whenever $D=A$ and $k=0$, then $(I,D)$ is a yes-instance to {\sc 0-Deviator-Max-smi} if and only if there exists a stable maximum-cardinality matching of $I$. This can be decided in polynomial time by computing a stable matching $M$ of $I$ using the Gale-Shapley algorithm \cite{gale_shapley} and checking whether $M$ is a maximum-cardinality matching. If it is not, then no solution exists, as every stable matching is of equal size \cite{alroth86}.

Somewhat surprisingly, when $D\subset A$, then even {\sc 0-Deviator-Perfect-smi-Dec} becomes intractable, i.e., even when we merely ask whether a perfect matching exists such that no $D$ agent is in a blocking pair.

\begin{theorem}
\label{thm:preferentialbip}
    {\sc 0-Deviator-Perfect-smi-Dec} is {\sf NP-complete}, even if all preference lists are of length at most 3.
\end{theorem}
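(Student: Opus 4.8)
The plan is to reduce from {\sc (2,2)-e3-sat}, which is well-suited here because each variable occurs exactly twice positively and twice negatively, giving us tight control over the local structure of the constructed instance. Given a formula $B$ with variables $V_1,\dots,V_m$ and clauses $C_1,\dots,C_r$, I would build an {\sc smi} instance $I$ together with a deviator set $D\subsetneq A$ such that $B$ is satisfiable if and only if $I$ admits a perfect matching in which no agent of $D$ is in a blocking pair. The overall strategy is to use \emph{variable gadgets} whose stable configurations encode a Boolean value, and \emph{clause gadgets} that can only be ``satisfied'' (i.e., kept free of $D$-blocking pairs while remaining perfectly matched) when at least one of the three literal-slots feeding into them is set to true. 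The deviators $D$ should be precisely the agents at the interface between literals and clauses, so that the burden of stability is placed exactly where we want the logical constraint enforced, while the conformist agents absorb the ``slack'' elsewhere in the gadget.

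Concretely, each variable gadget would consist of a small even cycle (or a pair of 4-cycles, one for the two positive occurrences and one for the two negative occurrences) on conformist agents, so that it has exactly two perfect matchings, ``true'' and ``false'', and in each of them the internal agents are all content. The two perfect matchings differ in which pendant/port agents are ``exposed'' to the clause side. Each clause gadget, for a clause on literals $\ell_1,\ell_2,\ell_3$, would contain a deviator agent $d_{C}$ (or a small set of them) whose preference list ranks the three literal-ports above its designated ``default'' partner; a port is available to be matched to $d_C$ only when the corresponding variable gadget is set so that the literal is true. If no literal is true, then $d_C$ is matched to its default partner but forms a blocking pair with some better-ranked port agent who is itself only matched to a worse partner — so $d_C$ blocks, violating $D$-stability. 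If at least one literal is true, we route the matching so that $d_C$ gets one of its top-three choices, killing all of its blocking pairs. Care must be taken that each literal-port can serve at most one of the (at most two) clauses containing that literal, which is why having exactly two occurrences per polarity is convenient: the gadget can be laid out so supply matches demand. Throughout, I would keep all preference lists of length at most $3$; this is the delicate part of the bookkeeping, since ports, default partners, and cycle-agents each need very few neighbours, and one must verify that no \emph{conformist} agent is ever forced into a blocking pair in either direction of the reduction — conformists may block freely, but we still need each gadget's intended matching to actually \emph{be} perfect and to make the designated deviators content.

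The correctness argument then splits into the two standard directions. For the forward direction, given a satisfying assignment of $B$, set each variable gadget to the corresponding configuration; for each clause pick a literal it satisfies and match $d_C$ to that literal's port, matching the remaining clause agents among themselves via the pre-designed perfect matching of the clause gadget; argue that the result is a perfect matching and that every $d\in D$ obtains a partner at least as good as any neighbour who would be willing to switch, hence no $D$-agent blocks. For the reverse direction, given a perfect matching $M$ of $I$ with no $D$-blocking pair, first argue that each variable gadget must be in one of its two intended configurations (any other matching of the cycle either fails to be perfect or leaves a $D$-port forced to block); read off a truth assignment from this; then argue that, since $d_C$ does not block, $d_C$ must be matched to one of its top-three literal-ports, which forces the corresponding literal to be true, so every clause is satisfied. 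Membership in {\sf NP} is immediate, since a perfect matching can be verified and its (polynomially many) blocking pairs checked in polynomial time, so it remains only to confirm {\sf NP}-hardness via this reduction.

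I expect the main obstacle to be the simultaneous satisfaction of three competing requirements: (i) preference lists of length at most $3$, (ii) maximum-cardinality matchings being perfect (so that {\sc $k$-Deviator-Perfect-smi-Dec} is the right target problem and we cannot ``cheat'' by leaving agents unmatched), and (iii) ensuring that in the reverse direction the \emph{only} perfect matchings free of $D$-blocking pairs are the ones encoding consistent truth assignments. With length-$3$ lists there is very little room, so the gadget wiring — in particular making each literal-port usable by exactly the clauses containing that literal, and guaranteeing that a ``false'' setting genuinely forces the relevant deviator to block rather than merely allowing it to — will require a careful, explicit construction and a somewhat tedious but routine case analysis. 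A secondary subtlety is bipartiteness: the variable-gadget cycles and all gadget edges must respect a $2$-colouring of the agent set, which constrains how ports on the ``true'' and ``false'' sides can be attached to clause deviators and may force auxiliary conformist agents to be inserted purely to fix parity.
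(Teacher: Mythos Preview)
Your high-level plan---reduce from {\sc (2,2)-e3-sat} with variable and clause gadgets, placing deviators at the literal--clause interface---is the right framework and matches the paper's starting point. But the clause gadget you sketch does not close. A single deviator $d_C$ ranking three literal-ports above a default already needs a length-$4$ list; and even setting that aside, the blocking logic fails: you say that when a literal is false its port is ``matched to a worse partner'' than $d_C$. Then if only the second literal is true and $d_C$ is matched to $\text{port}_2$, the pair $\{d_C,\text{port}_1\}$ still blocks ($\text{port}_1$ prefers $d_C$; $d_C$ prefers $\text{port}_1$ to $\text{port}_2$). Matching $d_C$ to \emph{some} top-three choice kills all of $d_C$'s blocking pairs only when it is the first choice---with strict preferences the three ports cannot be made symmetric from $d_C$'s side. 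Your parenthetical ``or a small set of them'' hints at per-literal deviators, but then you must design the clause gadget so that its unavoidable internal blocking pairs miss all of those deviators in every one of its perfect matchings; you correctly flag this as delicate but do not supply the construction.

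The paper solves this with a device absent from your sketch. It begins from the Bir\'o--Manlove--Mittal construction, for which satisfiability is already known to be equivalent to the existence of a perfect matching with no blocking \emph{communication edge} between variable and clause gadgets. It then replaces each such edge by a $12$-agent \emph{connector gadget}: an even cycle with two outward links, one toward the clause side and one toward the variable side, whose two perfect matchings each give exactly one of its two endpoints $t^{r,1},t^{r,7}$ its first choice while leaving the other exposed. Crucially, neither endpoint is ever in an internal blocking pair of the connector. Taking $D$ to be the set of all $t^{r,1},t^{r,7}$, a $D$-agent blocks if and only if \emph{both} sides of the original communication edge are unhappy, i.e., exactly when that edge would have blocked. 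This relay lets each connector independently point toward whichever side is already satisfied; that disjunctive freedom is what your single-$d_C$ design cannot encode, and is the missing idea.
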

\begin{proof}
    Membership in {\sf NP} follows from the fact that, given a matching $M$, we can efficiently check whether $M$ is a perfect matching and whether any agent in the deviator subset is contained in any blocking pair by iterating through the preference lists.

    For the reduction itself, we will start with the reduction of \citet{biro_sm_10} from {\sc (2,2)-e3-sat} to {\sc $k$-BP-AlmostStable-Perfect-smi} and then extend it with an additional new \emph{connector gadget} that will replace the communication edges between vertex and clause gadgets. For completeness, we will outline their reduction below.
    
    Formally, consider a {\sc (2,2)-e3-sat} instance $B$ with $n$ variables $V=\{V_1,V_2,\dots,V_n\}$ and $m$ clauses $C=\{C_1,C_2,\dots,C_m\}$. Below, we show how to derive an {\sc smi} instance $J=(A,\succ)$ from $V$ and $C$. In the construction, $A$ will consist of subsets $A_v=\{x_i^r,y_i^r\;;\;1\leq i\leq n\land 1\leq r\leq 4\}$ (from the variable gadgets), $A_c=\{c_j^s,p_j^s\;;\;1\leq j\leq m\land 1\leq s\leq 3\}\cup\{q_j,z_j\;;\;1\leq j\leq m\}$ (from the clause gadgets), and $A_t=\{t_{i}^{\beta,k}\;;\; 1\leq i\leq n \land 1\leq \beta\leq 4\land 1\leq k\leq 12\}$ (from the connector gadgets). The total number of agents of $J$ is $56n+8m$.

    We will now show how to construct the instance in full detail. Fist, for each variable $V_i\in V$, we construct a \textbf{variable gadget} $G_{V_i}$ consisting of the eight agents $x_i^1,x_i^2,x_i^3,x_i^4,y_i^1,y_i^2,y_i^3,y_i^4$, with the following preference lists: 
    \[
    \begin{minipage}{0.45\textwidth}
    \begin{align*}
    x_i^1 &: y_i^1 \; c(x_i^1) \; y_i^2 \\
    x_i^2 &: y_i^2 \; c(x_i^2) \; y_i^3 \\
    x_i^3 &: y_i^4 \; c(x_i^3) \; y_i^3 \\
    x_i^4 &: y_i^1 \; c(x_i^4) \; y_i^4
    \end{align*}
    \end{minipage}
    \hfill
    \begin{minipage}{0.45\textwidth}
    \begin{align*}
    y_i^1 &: x_i^1 \; x_i^4 \\
    y_i^2 &: x_i^1 \; x_i^2 \\
    y_i^3 &: x_i^2 \; x_i^3 \\
    y_i^4 &: x_i^3 \; x_i^4
    \end{align*}
    \end{minipage}
    \]
    where we will specify the $c(x_i^r)$ entries later. Figure \ref{fig:vsmigadget} illustrates this construction. Importantly, a $G_{V_i}$ gadget admits exactly two perfect matchings: the matchings $M^1_i=\{\{x_i^1,y_i^1\},\{x_i^2,y_i^2\},\{x_i^3,y_i^3\},\{x_i^4,y_i^4\}\}$ and $M^2_i=\{\{x_i^1,y_i^2\},\{x_i^2,y_i^3\},\{x_i^3,y_i^4\},\{x_i^4,y_i^1\}\}$. Notice that $bp(M^1_i)=\{\{x_i^3,y_i^4\}\}$ and $bp(M^2_i)=\{\{x_i^1,y_i^1\}\}$ (not counting any blocking pairs that involve $c(x_i^r)$ agents). Intuitively, $M^1_i$ and $M^2_i$ correspond to $V_i$ being true or false, respectively.

    \begin{figure}[!tbh]
        \centering
        \includegraphics[width=6cm]{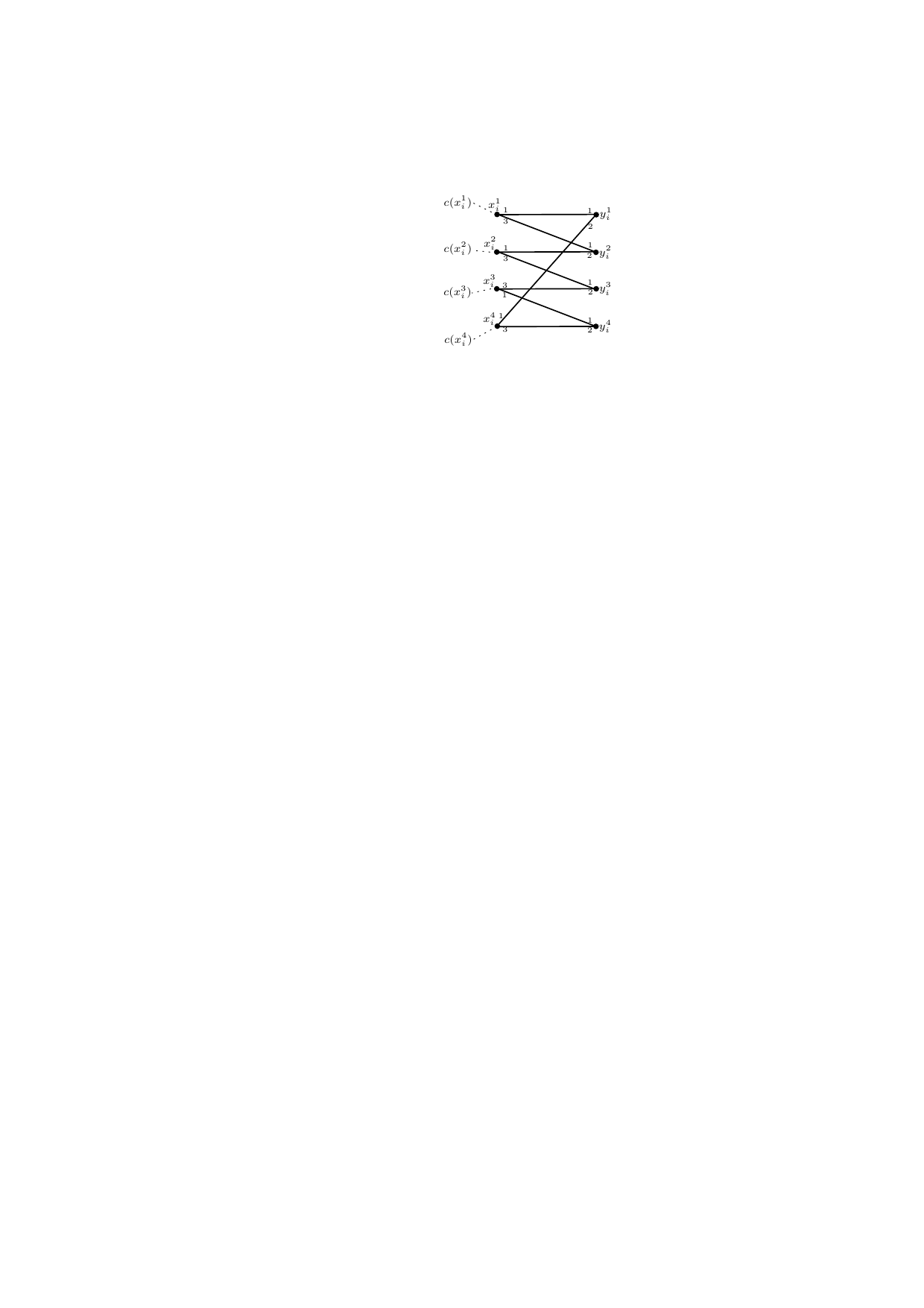}
        \caption{Illustration of the variable gadget construction}
        \label{fig:vsmigadget}
        \Description{A cycle consisting of eight agents.}
    \end{figure}

    Now, for each clause $C_j=(C_j^1\lor C_j^2 \lor C_j^3)\in C$, we construct a corresponding \textbf{clause gadget} $G_{C_j}$. A clause gadget consists of eight agents $c_j^1, c_j^2, c_j^3, p_j^1, p_j^2, p_j^3, q_j, z_j$, with the following preference lists: 
    \[
    \begin{minipage}{0.45\textwidth}
    \begin{align*}
    c_{j}^1 &: p_{j}^1 \; x(c_{j}^1) \; q_{j} \\
    c_{j}^2 &: p_{j}^2 \; x(c_{j}^2) \; q_{j} \\
    c_{j}^3 &: p_{j}^3 \; x(c_{j}^3) \; q_{j} \\
    z_{j} &: p_{j}^1 \; p_{j}^2 \; p_{j}^3
    \end{align*}
    \end{minipage}
    \hfill
    \begin{minipage}{0.45\textwidth}
    \begin{align*}
    p_{j}^1 &: c_{j}^1 \; z_{j} \\
    p_{j}^2 &: c_{j}^2 \; z_{j} \\
    p_{j}^3 &: c_{j}^3 \; z_{j} \\
    q_{j} &: c_{j}^1 \; c_{j}^2 \; c_{j}^3
    \end{align*}
    \end{minipage}
    \]
    where we will specify the $x(c_j^s)$ entries later. Figure \ref{fig:csmigadget} illustrates this construction. Importantly, a $G_{C_j}$ gadget admits the following three perfect matchings: $M^1_j=\{\{c_{j}^1,q_{j}\},$ $\{c_{j}^2,p_{j}^2\},$ $\{c_{j}^3,p_{j}^3\},$ $\{p_{j}^1,z_{j}\}\}$, $M^2_j=\{\{c_{j}^1,p_{j}^1\},\{c_{j}^2,q_{j}\},\{c_{j}^3,p_{j}^3\},\{p_{j}^2,z_{j}\}\}$ and $M^3_j=\{\{c_{j}^1,p_{j}^1\},\{c_{j}^2,p_{j}^2\},\{c_{j}^3,q_{j}\},\{p_{j}^3,z_{j}\}\}$. Notice that $bp(M^1_j)=\{\{c_{j}^1,p_{j}^1\}\}$, $bp(M^2_j)=\{\{c_{j}^2,p_{j}^2\}\}$, and $bp(M^3_j)=\{\{c_{j}^3,p_{j}^3\}\}$ (not considering any blocking pairs involving $x(c_j^s)$ agents). Intuitively, $M^1_j,M^2_j$ and $M^3_j$ correspond to the first, second, and third literals of $C_j$ being true, respectively.

    \begin{figure}[!tbh]
        \centering
        \includegraphics[width=7.3cm]{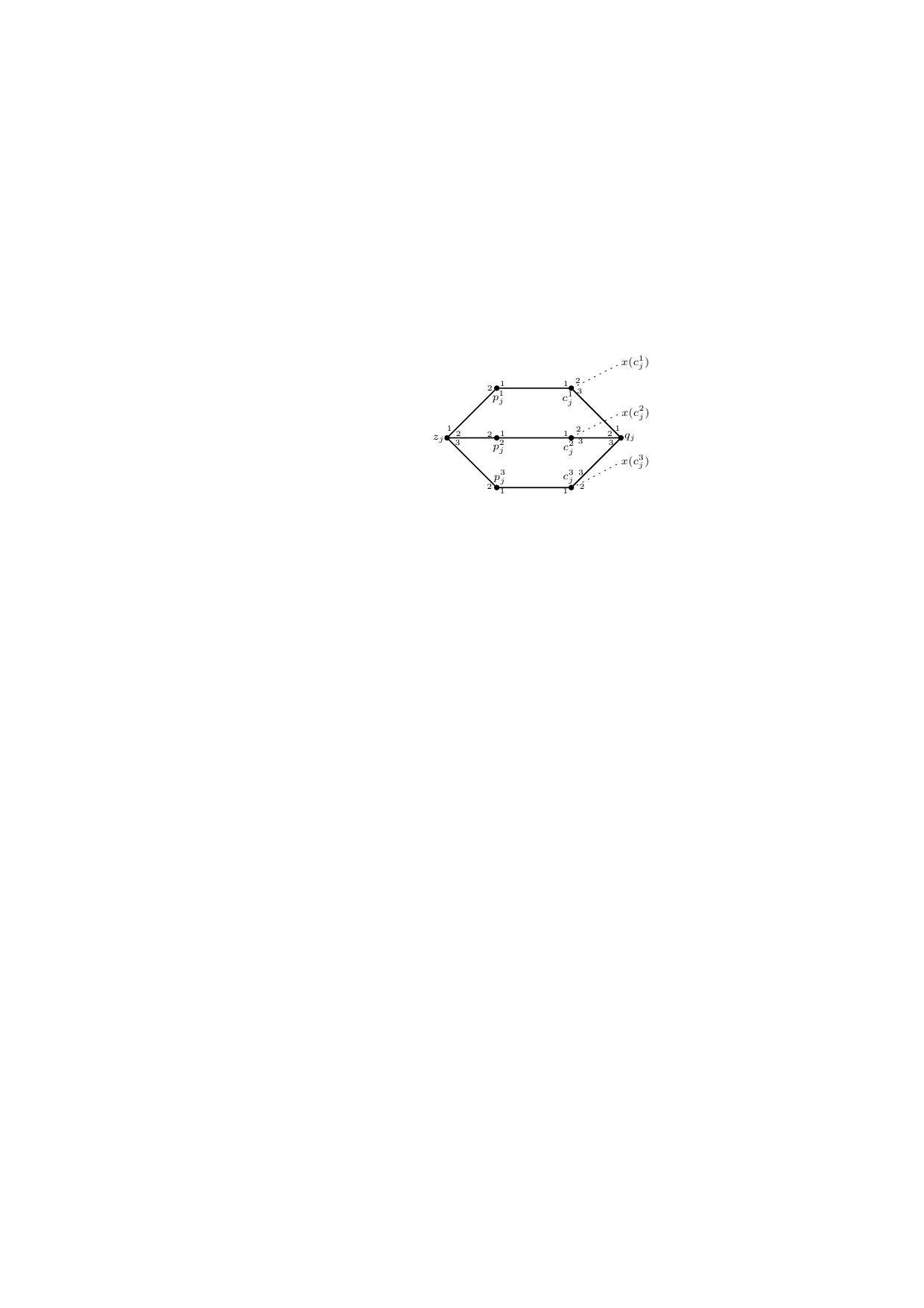}
        \caption{Illustration of the clause gadget construction}
        \label{fig:csmigadget}
        \Description{Eight-figure construction consisting of eight agents.}
    \end{figure}

    \citet{biro_sm_10} then connected the gadgets in the following way (this is where our construction differs, and we will detail the changes below): let $C_j^s$ correspond to the $s$th literal in $C_j$ (for $1\leq j\leq m, 1\leq s\leq 3$), which is either an unnegated or a negated occurrence of variable $V_i$. Assuming that $B$ is read sequentially, if $c_j^s$ is the
    \begin{itemize}
        \item first unnegated occurrence of $V_i$, then let $x(c_j^s)=x_i^1$ and let $c(x_i^1)=c_j^s$;
        \item second unnegated occurrence of $V_i$, then let $x(c_j^s)=x_i^2$ and let $c(x_i^2)=c_j^s$;
        \item first negated occurrence of $V_i$, then let $x(c_j^s)=x_i^3$ and let $c(x_i^3)=c_j^s$;
        \item second negated occurrence of $V_i$, then let $x(c_j^s)=x_i^4$ and let $c(x_i^4)=c_j^s$.
    \end{itemize} 
    \citet{biro_sm_10} proved the following crucial fact about this construction.

    \begin{claim}[\cite{biro_sm_10}]
        \label{claim:birosm}
        $B$ is satisfiable if and only if this instance above admits a perfect matching $M$ such that $\vert bp(M)\vert\leq n+m$.
    \end{claim}

    Moreover, \citet{glitznermanloveminmax} proved the following stronger claim about this construction. 

    \begin{claim}[\cite{glitznermanloveminmax}]
        \label{claim:nocommblocking}
        $B$ is satisfiable if and only if this instance above admits a perfect matching $M$ in which no communication edge between a clause and a variable gadget is blocking.
    \end{claim}    

    We will extend this construction as follows. For each $x_i^r$ agent in each variable gadget $G_{V_i}$, we construct a \textbf{connector gadget} $G_{T_i^r}$ consisting of the twelve agents $t_i^{r,1}, t_i^{r,2},\dots,t_i^{r,12}$, with preferences as follows:
    \[
    \begin{minipage}{0.45\textwidth}
    \begin{align*}
    t_i^{r,1} &: t_i^{r,2} \; c(t_i^{r,1}) \; t_i^{r,12} \\
    t_i^{r,2} &: t_i^{r,3} \; t_i^{r,1} \\
    t_i^{r,3} &: t_i^{r,4} \; t_i^{r,2} \\
    t_i^{r,4} &: t_i^{r,5} \; t_i^{r,3} \\
    t_i^{r,5} &: t_i^{r,6} \; t_i^{r,4} \\
    t_i^{r,6} &: t_i^{r,5} \; t_i^{r,7}
    \end{align*}
    \end{minipage}
    \hfill
    \begin{minipage}{0.45\textwidth}
    \begin{align*}
    t_i^{r,7} &: t_i^{r,6} \; x_i^r \; t_i^{r,8} \\
    t_i^{r,8} &: t_i^{r,7} \; t_i^{r,9} \\
    t_i^{r,9} &: t_i^{r,8} \; t_i^{r,10} \\
    t_i^{r,10} &: t_i^{r,9} \; t_i^{r,11} \\
    t_i^{r,11} &: t_i^{r,10} \; t_i^{r,12} \\
    t_i^{r,12} &: t_i^{r,11} \; t_i^{r,1}
    \end{align*}
    \end{minipage}
    \]
    where the $c(t_i^{r,1})$ entry will be specified later. The construction is illustrated in Figure \ref{fig:connectorsmi}. $G_{T_i^r}$ only admits two perfect matchings: $M^1_{i,r}=\{\{t_i^{r,1},t_i^{r,2}\},$ $\{t_i^{r,3},t_i^{r,4}\},$ $\{t_i^{r,5},t_i^{r,6}\},$ $\{t_i^{r,7},t_i^{r,8}\},$ $\{t_i^{r,9},t_i^{r,10}\},$ $\{t_i^{r,11},t_i^{r,12}\}\}$ and $M^2_{i,r}=\{\{t_i^{r,1},t_i^{r,12}\},\{t_i^{r,2},t_i^{r,3}\},\{t_i^{r,4},t_i^{r,5}\},\{t_i^{r,6},t_i^{r,7}\},\{t_i^{r,8},t_i^{r,9}\},\{t_i^{r,10},t_i^{r,11}\}\}$, where $M^1_{i,r}$ admits no blocking pairs within the gadget and $M^2_{i,r}$ admits the blocking pair $\{t_i^{r,5},t_i^{r,6}\}$ within the gadgets (not counting blocking pairs that may also involve $c(t_i^{r,1})$ or $x_i^r$ entries). 

    \begin{figure}[!tbh]
        \centering
        \includegraphics[width=12.5cm]{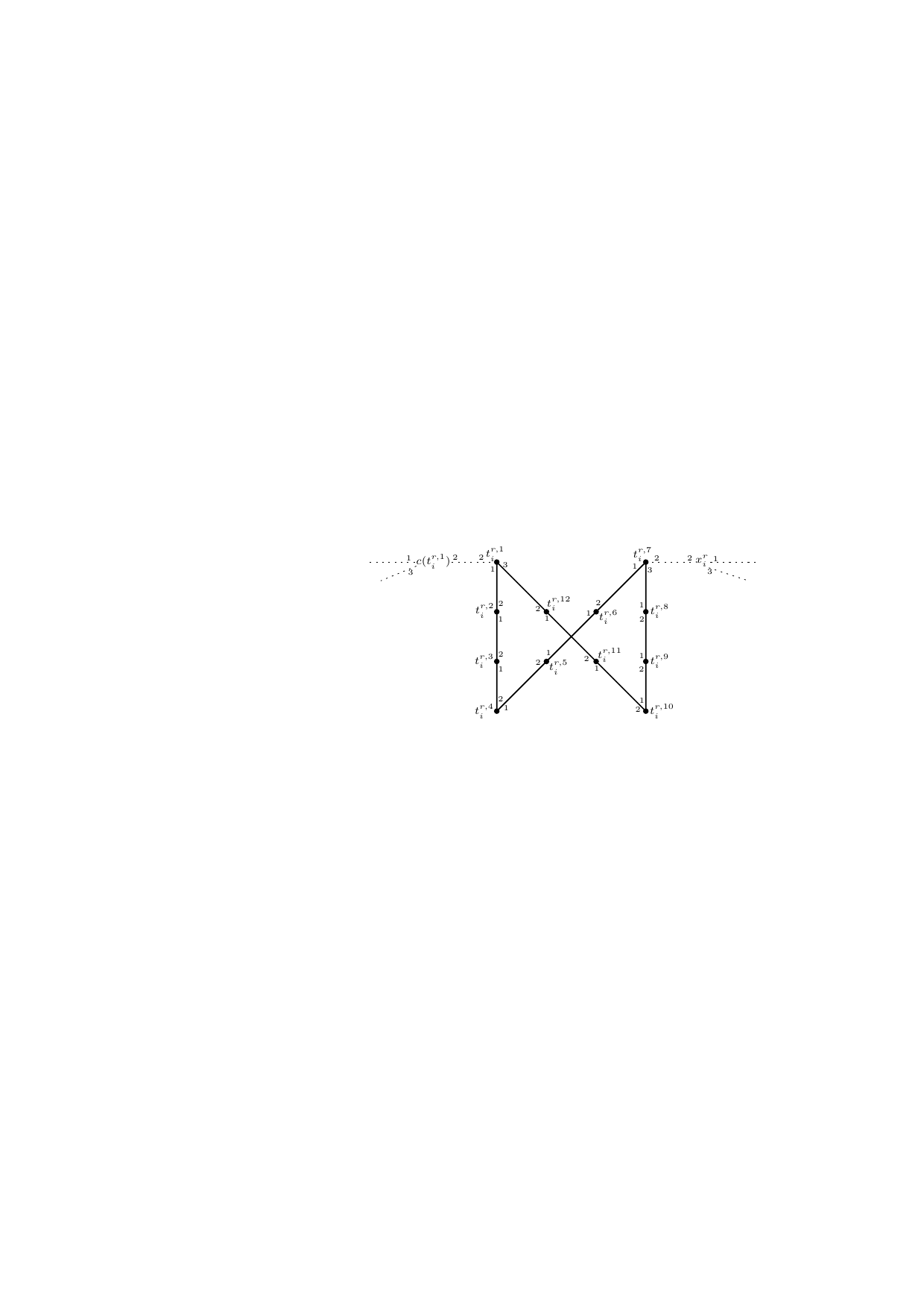}
        \caption{Illustration of the connector gadget construction}
        \label{fig:connectorsmi}
        \Description{Twelve agents forming an eight-figure.}
    \end{figure}
    
    We now connect the gadgets as follows. Vertex gadgets $G_{V_i}$ and connector gadgets $G_{T_i^r}$ are simple: we let $t(x_i^r)=t_i^{r,7}$ ($t(x_i^r)$ replaces entry $c(x_i^r)$ in the preference list of $x_i^r$, which previously connected the variable gadget to the clause gadget, but now connects the variable gadget to the connector gadget, and in the construction of $G_{T_i^r}$ we already included $x_i^r$ as the second-ranked entry in the preference list of $t_i^{r,7}$). For the links between connector gadgets and clause gadgets, let $C_j^s$ correspond to the $s$th literal in $C_j\in C$, which is either an unnegated or a negated occurrence of some variable $V_i$. If, processing $B$ from left to right, $C_j^s$ is the
    \begin{itemize}
        \item first unnegated occurrence of $V_i$, then let $t(c_j^s)=t_i^{1,1}$ (replacing $x(c_j^s)$ in the previous construction) and let $c(t_i^{1,1})=c_j^s$;
        \item second unnegated occurrence of $V_i$, then let $t(c_j^s)=t_i^{2,1}$ (replacing $x(c_j^s)$ in the previous construction) and let $c(t_i^{2,1})=c_j^s$;
        \item first negated occurrence of $V_i$, then let $t(c_j^s)=t_i^{3,1}$ (replacing $x(c_j^s)$ in the previous construction) and let $c(t_i^{3,1})=c_j^s$;
        \item second negated occurrence of $V_i$, then let $t(c_j^s)=t_i^{4,1}$ (replacing $x(c_j^s)$ in the previous construction) and let $c(t_i^{4,1})=c_j^s$.
    \end{itemize} 

    Informally, we simply replace the communication edges between the vertex and the clause gadgets in the construction by \citet{biro_sm_10} with occurrences of the connector gadget.

    Now let $I$ be the instance as constructed in \citet{biro_sm_10} and let $J$ be the instance with our modification as described above. From Claims \ref{claim:birosm}-\ref{claim:nocommblocking}, we know that $B$ is satisfiable $\Leftrightarrow$ $I$ admits a perfect matching with at most $n+m$ blocking pairs $\Leftrightarrow$ $I$ admits a perfect matching in which no communication edge between clause and variable gadgets is blocking. We will establish the following claim.

    \begin{claim}
        \label{claim:psmi}
        $I$ admits a perfect matching $M$ in which no communication edge between clause and variable gadgets is blocking if and only if $J$ admits a perfect matching $M'$ such that $\bigcup_{a_r\in D}bp_{a_r}(M')=\varnothing$, where $D=\{t_i^{\beta,1},t_i^{\beta,7}\;;\; 1\leq i\leq n\land 1\leq \beta\leq 4\}$
    \end{claim}
    \begin{proof}[Proof of Claim \ref{claim:psmi}]
    \renewcommand{\qedsymbol}{$\blacksquare$}
        First, suppose that $I$ admits a perfect matching $M$ in which no communication edge between clause and variable gadgets is blocking. To start, let $M'=M$. Now, for every communication edge $\{x_i^r,c_j^{r'}\}$ in $I$, either $M(x_i^r)\succ_{x_i^r}c_j^{r'}$ or $M(c_j^{r'})\succ_{c_j^{r'}}x_i^r$, otherwise they would be a blocking pair. If $M(x_i^r)\succ_{x_i^r}c_j^{r'}$, add $M_{i,r}^1$ to $M;$, where $M_{i,r}^1$ as defined below the construction of the connector gadgets. Otherwise, add $M_{i,r}^2$ to $M'$. Consider the final matching $M'$. Clearly, $M'$ is a perfect matching of $J$. Furthermore, for every agent $t_i^{r,1}$, either $t_i^{r,2}=M'(t_i^{r,1})\succ_{t_i^{r,1}} c_j^{r'}$ or $p_j^{r'}=M'(c_j^{r'})\succ_{c_j^{r'}} t_i^{r,1}$. Similarly, for every agent $t_i^{r,7}$, either $M'(t_i^{r,7})=t_i^{r,6}\succ_{t_i^{r,7}} x_i^{r}$ or $M'(x_i^{r})\succ_{x_i^{r}} t_i^{r,7}$. Hence, no $t_i^{r,1}$ and no $t_i^{r,7}$ can block with an agent outside the connector gadget, and we already argued below the construction of the connector gadgets that no such agent blocks with agents within the connector gadget. Hence, $\bigcup_{a_r\in D}bp_{a_r}(M')=\varnothing$.

        Conversely, suppose that $J$ admits a perfect matching $M'$ such that $\bigcup_{a_r\in D}bp_{a_r}(M')=\varnothing$. Let $M=M'\setminus(M_{i,r}^1\cup M_{i,r}^2)$ (for all $1\leq i\leq n$ and $1\leq r\leq 4$), i.e., $M$ is the matching $M'$ restricted to agents that are also present in $I$. Now suppose that there exists a blocking communication edge $\{x_i^r,c_j^{r'}\}$ in $I$. Then $c_j^{r'}\succ_{x_i^r}M(x_i^r)$ and $x_i^r\succ_{c_j^{r'}}M(c_j^{r'})$ in $I$. Thus, in $J$, $t_i^{r,7}\succ_{x_i^r}M(x_i^r)$ and $t_i^{r,1}\succ_{c_j^{r'}}M(c_j^{r'})$. However, because $M'$ is perfect, we know that either $M_{i,r}^1\subset M'$ or $M_{i,r}^2\subset M'$. If $M_{i,r}^1\subset M'$, then $x_i^r\succ_{t_i^{r,7}}M'(t_i^{r,7})=t_i^{r,8}$, in which case $\{t_i^{r,7},x_i^r\}\in bp(M')$, which contradicts that $bp_{t_i^{r,7}}(M')=\varnothing$. If $M_{i,r}^2\subset M'$, then $c_j^{r'}\succ_{t_i^{r,7}}M'(t_i^{r,1})=t_i^{r,12}$, in which case $\{t_i^{r,1},c_j^{r'}\}\in bp(M')$, which contradicts that $bp_{t_i^{r,1}}(M')=\varnothing$. Hence, as required, there cannot exist a blocking communication edge.
    \end{proof}

    Thus, by our claim above, $B$ is satisfiable if and only if there exists a perfect matching of $J$ such that no agent in $D=\{t_i^{\beta,1},t_i^{\beta,7}\;;\; 1\leq i\leq n\land 1\leq \beta\leq 4\}$ is in a blocking pair. This establishes that {\sc 0-Deviator-Perfect-smi-Dec} is {\sf NP-hard}. Notice that every agent in our construction has a preference list of length at most 3. This also completes the proof that {\sc 0-Deviator-Perfect-smi-Dec} is {\sf NP-complete}.
\end{proof}

We can immediately conclude the following intractability result for our original optimisation problem.

\begin{corollary}
\label{cor:devmaxsmihard}
    {\sc $k$-Deviator-Max-smi} is {\sf NP-hard}, even when $k=0$ and all preference lists are of length at most 3. Thus, {\sc Deviator-Max-smi} is {\sf para-NP-hard} for $\kappa=\min_{M\in \mathcal{M}^+}\vert\bigcup_{a_i\in D}bp_{a_i}(M)\vert$.
\end{corollary}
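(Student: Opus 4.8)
The plan is to read the corollary as an immediate consequence of Theorem~\ref{thm:preferentialbip}, so the ``proof'' is really a reduction-chaining argument plus an unwinding of definitions. First I would observe that {\sc 0-Deviator-Perfect-smi-Dec} is exactly the restriction of {\sc $k$-Deviator-Max-smi} obtained by setting $k=0$ and confining attention to instances in which every maximum-cardinality matching is perfect, so that $\mathcal{M}^+=\mathcal{M}^p$. The instance $J$ constructed in the proof of Theorem~\ref{thm:preferentialbip} has this property: the whole reduction is phrased in terms of perfect matchings, and $J$ admits at least one (take $M^1_i$ in every variable gadget, the corresponding matchings in the clause gadgets, and $M^1_{i,r}$ in every connector gadget). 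Hence a polynomial-time algorithm for {\sc $0$-Deviator-Max-smi} would decide {\sc 0-Deviator-Perfect-smi-Dec}: run it on $J$ and answer ``yes'' precisely when it returns a matching $M$ with $\bigcup_{a_i\in D}bp_{a_i}(M)=\varnothing$. Since the reduction of Theorem~\ref{thm:preferentialbip} keeps all preference lists of length at most $3$, this establishes that {\sc $k$-Deviator-Max-smi} is {\sf NP-hard} even when $k=0$ and lists have length at most $3$.

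For the para-NP-hardness statement I would invoke the standard fact (see, e.g., \cite{flumgrohe}) that a parametrised problem is {\sf para-NP-hard} as soon as it is {\sf NP-hard} for some fixed value of the parameter. Writing the decision version of {\sc Deviator-Max-smi} as ``given $(I,D)$ and a threshold $t$, does there exist $M\in\mathcal{M}^+$ with $\vert\bigcup_{a_i\in D}bp_{a_i}(M)\vert\le t$?'' and taking $\kappa=\min_{M\in\mathcal{M}^+}\vert\bigcup_{a_i\in D}bp_{a_i}(M)\vert$ as the parameter, the slice $\kappa=0$ is precisely the question of whether a deviator-stable maximum-cardinality matching exists, which is the {\sf NP-hard} problem established in the previous paragraph. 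Therefore the parametrisation by $\kappa$ is {\sf para-NP-hard}.

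I do not expect a real obstacle here; the only point needing care is making the identification between {\sc 0-Deviator-Perfect-smi-Dec} and the $k=0$ slice of {\sc $k$-Deviator-Max-smi} fully rigorous — namely checking that on instances where maximum-cardinality matchings are perfect (which $J$ satisfies), the objects ``$M\in\mathcal{M}^+$ with $\bigcup_{a_i\in D}bp_{a_i}(M)$ of size at most $0$'' and ``$M\in\mathcal{M}^p$ with no blocking deviator'' literally coincide, so that no fresh hardness construction is required. The remaining parts (the bound on preference-list lengths, which is inherited verbatim from Theorem~\ref{thm:preferentialbip}, and the reduction of para-NP-hardness to hardness of a single parameter slice) are routine.
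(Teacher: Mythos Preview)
Your proposal is correct and matches the paper's approach: the corollary is stated in the paper without proof as an immediate consequence of Theorem~\ref{thm:preferentialbip}, and you have simply spelled out the implicit identification (on instances where $\mathcal{M}^+=\mathcal{M}^p$, the $k=0$ slice of {\sc $k$-Deviator-Max-smi} coincides with {\sc 0-Deviator-Perfect-smi-Dec}) together with the standard ``{\sf NP}-hard at a fixed parameter value implies {\sf para-NP-hard}'' observation.
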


Note that these hardness results no longer apply when preference lists are complete, since then every stable matching is a maximum-cardinality matching \cite{alroth86}.

\subsection{Deviator-Stable Matchings in General Acceptability Graphs}
\label{sec:prefsri}

We now turn to the general case, where matchings may have arbitrary size, and the acceptability graph need not be bipartite. We start by defining a restricted version of {\sc Deviator-sri} as follows.

\begin{problemBox}
{\sc $k$-Deviator-sri} \\
\textbf{Input:} {\sc sri} instance $I=(A,\succ)$ and a subset $D\subseteq A$. \\
\textbf{Output:} A matching $M\in\mathcal M$ such that $\vert\bigcup_{a_i\in D}bp_{a_i}(M)\vert\leq k$, if one exists.
\end{problemBox}

Again, we will focus on the very restricted case of $k=0$, where we require a matching in which no $D$ agent is in a blocking pair, and denote the problem of deciding whether {\sc 0-Deviator-sri} admits a solution by {\sc 0-Deviator-sri-Dec}. 

We start with the following observation about a special class of {\sc 0-Deviator-sri} instances.

\begin{theorem}
\label{thm:blank}
    Let $(I=(A,\succ),D)$ be an instance with $n$ agents of {\sc 0-Deviator-sri} and let $G=(A,E)$ be the acceptability graph of $I$. If $G'=(A,E\setminus E[A\setminus D])$ is bipartite, then we can solve {\sc 0-Deviator-sri} in $O(n^2)$ time.
\end{theorem}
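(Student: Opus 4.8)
The plan is to reduce {\sc 0-Deviator-sri} on $(I,D)$ to computing a stable matching of an associated bipartite instance. Write $C=A\setminus D$ for the set of conformists, and note first the following reformulation: a matching $M$ of $I$ satisfies $\bigcup_{a_i\in D}bp_{a_i}(M)=\varnothing$ if and only if no edge of $G'$ is a blocking pair of $M$. Indeed, by construction the edge set of $G'$ is precisely the set of acceptable pairs of $I$ containing at least one agent of $D$, so some deviator lies in a blocking pair of $M$ exactly when some $G'$-edge blocks $M$. In particular the conformist--conformist edges $E[A\setminus D]$, which are exactly the edges deleted from $G$ to form $G'$, are irrelevant to deviator-blocking.

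Given this, I would build the {\sc smi} instance $J$ whose acceptability graph is $G'$ and in which every agent ranks its $G'$-neighbours in the relative order given by $\succ$. By hypothesis $G'$ is bipartite, so a bipartition of $G'$ can be computed in $O(n^2)$ time and $J$ is a legitimate {\sc smi} instance; running the Gale--Shapley algorithm on $J$ then produces a stable matching $M$ of $J$, and such an $M$ always exists. This takes time linear in the total preference-list length of $J$, i.e. $O(|E(G')|)=O(n^2)$, and since constructing $G'$ from $G$ also costs $O(n^2)$, the overall running time is $O(n^2)$.

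Finally I would verify that $M$, regarded as a matching of $I$ (we have $M\subseteq E(G')\subseteq E$, so $M\in\mathcal M$), has no blocking deviator. By the reformulation above it suffices to show that no $G'$-edge blocks $M$ in $I$. Suppose $\{a_i,a_j\}\in E(G')$ were a blocking pair of $M$ in $I$, so $a_j\succ_{a_i}M(a_i)$ and $a_i\succ_{a_j}M(a_j)$. As $M$ is a matching of $J$, each of $M(a_i)$ and $M(a_j)$ is either a $G'$-neighbour of the corresponding agent or that agent itself, and $a_i$, $a_j$ are $G'$-neighbours of each other; since the preferences of $J$ are exactly the restrictions of $\succ$ to $G'$-neighbours, both comparisons continue to hold in $J$ (including the case of an unmatched endpoint, where one uses that an agent's list in $J$ consists of exactly its $G'$-neighbours). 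Hence $\{a_i,a_j\}$ would block $M$ in $J$, contradicting the stability of $M$ in $J$. Therefore $M$ has no blocking deviator, $(I,D)$ is a yes-instance, and $M$ is a valid output.

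The argument is a direct reduction, so there is no genuinely hard step; the point needing the most care is the last paragraph --- checking that passing to the restricted preferences of $J$ preserves the comparisons defining a blocking pair, in particular handling agents left unmatched by $M$. The only conceptual insight is that deleting $E[A\setminus D]$ can never create a blocking deviator (those edges never touch $D$), so the whole problem collapses to the always-solvable task of computing a stable matching of the bipartite instance on $G'$.
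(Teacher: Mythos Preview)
Your proposal is correct and follows essentially the same approach as the paper: restrict to the bipartite instance on $G'$, apply Gale--Shapley, and observe that any blocking pair of the resulting matching in $I$ must lie in $E[A\setminus D]$ and hence involves no deviator. Your write-up is, if anything, more careful than the paper's in verifying that the blocking comparisons survive the restriction to $J$ (in particular the unmatched case), whereas the paper dispatches this in one sentence.
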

\begin{proof}
    Suppose that $G'=(A, E\setminus E[A\setminus D])$ is bipartite. Then the instance $I'=(A,\succ')$ induced by the restricted acceptability graph $G'$ is an {\sc smi} instance. Any {\sc smi} instance admits a stable matching $M$, and we can find one in $O(n^2)$ time  \cite{gale_shapley}. Now suppose that $M$ admits a blocking pair $\{a_r,a_s\}\in bp(M)$ with respect to instance $I$. Then, by construction of $I'$, it must be the case that $\{a_r,a_s\}\in E[A\setminus D]$, in which case $\bigcup_{a_i\in D}bp_{a_i}(M)=\varnothing$ as required.
\end{proof}

\begin{figure}[!tbh]
    \centering
    \includegraphics[width=7cm]{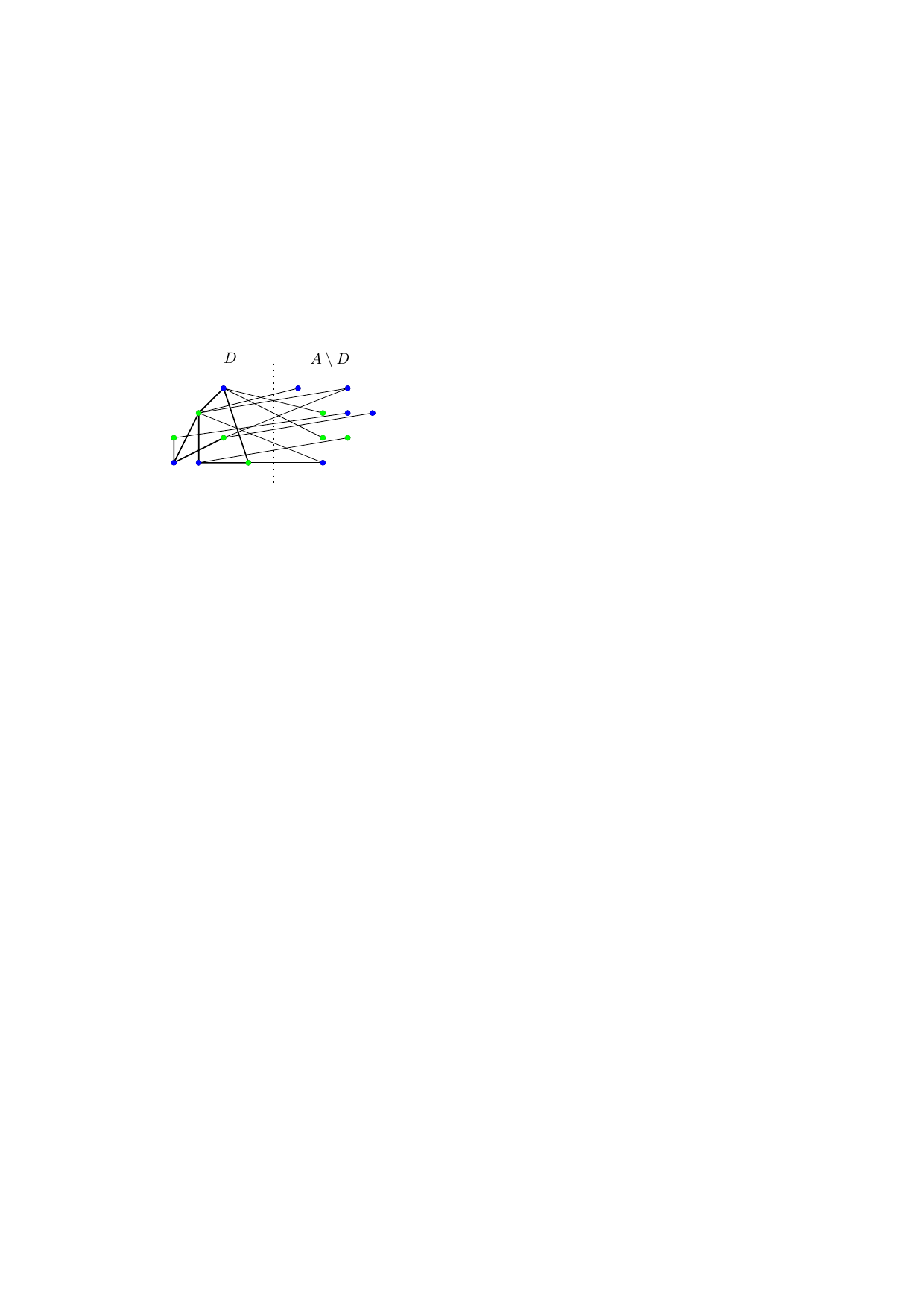}
    \caption{Illustration of restricted acceptability graph (colours indicate a possible bipartition)}
    \label{fig:restricted}
    \Description{Shows a graph divided into a part consisting of deviator agents and a part consisting of conformists.}
\end{figure}

An example of a restricted preference graph can be seen in Figure \ref{fig:restricted}. Unfortunately, this structural insight does not generalise to an efficiently verifiable necessary and sufficient condition for {\sc 0-Deviator-sri-Dec}, as there may be complex interactions among agents between and outside $D$. The following intractability result leaves little hope for an efficiently verifiable condition for the existence of a solution to {\sc 0-Deviator-sri}.

\begin{theorem}
\label{thm:prefsrihard}
    {\sc 0-Deviator-sri-Dec} is {\sf NP-complete}, regardless of whether all preference lists are of length at most 5 or all preference lists are complete.
\end{theorem}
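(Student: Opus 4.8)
The plan is to establish NP-hardness by two separate reductions, mirroring the structure the authors already used for the bipartite case (Theorem~\ref{thm:preferentialbip}) but now exploiting the fact that non-bipartite acceptability graphs give us extra freedom, so that we no longer need to force maximum-cardinality or perfect matchings. Membership in {\sf NP} is immediate, exactly as in Theorem~\ref{thm:preferentialbip}: given a matching $M$, check in polynomial time that no agent of $D$ lies in a blocking pair by scanning preference lists. For the short-list case (lists of length at most 5), I would again reduce from {\sc (2,2)-e3-sat}, reusing the variable gadgets, clause gadgets and connector gadgets of Theorem~\ref{thm:preferentialbip} essentially verbatim, but now \emph{without} the bipartiteness restriction we are free to add, for each gadget, an odd cycle or a small non-bipartite ``anchor'' that pins down which of the gadget's two/three local matchings is selected, so that in any matching avoiding blocking pairs on $D$-agents the gadget must internally behave like one of $M_i^1/M_i^2$ (resp.\ $M_j^1/M_j^2/M_j^3$). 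The connector gadget serves the same purpose as before: its two perfect internal matchings $M^1_{i,r},M^2_{i,r}$ encode the direction in which a would-be communication edge is "resolved", and placing the endpoints $t_i^{r,1}$ and $t_i^{r,7}$ into $D$ forces every communication edge between a clause and a variable gadget to be non-blocking, which by Claim~\ref{claim:nocommblocking} is equivalent to satisfiability of $B$. The one genuinely new ingredient is that, since we no longer demand a perfect matching, I must add to each gadget a mechanism (e.g.\ a pendant agent, or a triangle, with carefully chosen preferences) whose only way to avoid creating a blocking pair incident to a $D$-agent is for the surrounding gadget agents to be matched \emph{within} the gadget rather than left unmatched --- essentially simulating "perfectness locally" using $D$-membership instead of a global cardinality constraint.

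For the complete-preferences case I would give a second, independent reduction, again from {\sc (2,2)-e3-sat} (or alternatively adapt Abraham, Biró and Manlove's reduction showing {\sc MinBP-AlmostStable-sri} is {\sf NP-hard} for complete lists, since that reduction already produces instances where one can distinguish "zero blocking pairs among a distinguished set" from "many"). The idea: take the complete-preference hardness instance for {\sc MinBP-AlmostStable-sri}, and observe that the blocking pairs it is forced to admit all fall on a \emph{structurally identifiable} set of "junk" pairs among a fixed subset $A\setminus D$ of agents; by declaring the remaining, "meaningful" agents to be deviators $D$, a matching with $\bigcup_{a_i\in D} bp_{a_i}(M)=\varnothing$ exists iff the original formula is satisfiable. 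Concretely I expect to pad each gadget with a clique of "absorber" conformist agents whose internal preferences are a cyclic rotation admitting no stable matching, so that blocking pairs are unavoidable among them but can always be confined to edges in $E[A\setminus D]$; this is exactly the situation isolated by Theorem~\ref{thm:blank}, and the reduction must arrange that the bipartiteness hypothesis of that theorem \emph{fails} in the no-instance case, which is what makes the problem hard.

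The main obstacle, in both reductions, is the \emph{converse direction}: showing that if $J$ admits a matching $M'$ with no blocking pair incident to $D$, then the formula is satisfiable. The difficulty is that $M'$ need not restrict to a "canonical" internal matching on each gadget --- unlike in the perfect-matching setting, an agent may simply be left unmatched, which could in principle dodge a blocking pair "for free". So the heart of the argument is a sequence of local lemmas of the form: "if $t_i^{r,1}$ (or any designated $D$-agent) is unmatched or matched badly, then it forms a blocking pair with some neighbour, contradiction", thereby forcing $M'$ to induce $M^1_{i,r}$ or $M^2_{i,r}$ on each connector gadget and analogously constrained matchings on the variable and clause gadgets. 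Once that rigidity is established, extracting a truth assignment and invoking Claim~\ref{claim:nocommblocking} is routine. I would also need a short argument that the construction can be carried out with all lists of length $\le 5$ (the connector-gadget agents already have length $\le 3$; the extra pendant/triangle anchors add at most a constant number of entries, and one checks the worst case stays within $5$), and separately that the complete-preferences variant genuinely has every pair acceptable (achieved by appending all missing agents at the bottom of every list, and verifying these tail entries never create a blocking pair on a $D$-agent in the yes-instance).
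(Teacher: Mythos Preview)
Your overall direction --- use odd-cycle ``anchors'' to force agents to be matched, thereby simulating the perfectness constraint via $D$-membership rather than a global cardinality requirement --- is the right intuition, but you are applying it at the wrong level of abstraction and thereby making the proof far harder than it needs to be. The paper does \emph{not} reduce again from {\sc (2,2)-e3-sat}; it reduces from {\sc 0-Deviator-Perfect-smi-Dec} (Theorem~\ref{thm:preferentialbip}) directly, treating that result as a black box. Concretely: for every agent $a_i$ of the {\sc smi} instance $I$, adjoin two fresh agents $b_i^1,b_i^2$ with preferences $a_i:P_{a_i}\;b_i^1\;b_i^2$, $b_i^1:b_i^2\;a_i$, $b_i^2:a_i\;b_i^1$, and put both $b_i^1,b_i^2$ into $D_J$. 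The triple $\{a_i,b_i^1,b_i^2\}$ is the classical unsolvable $3$-cycle, so if $a_i$ is not matched to someone on its original list $P_{a_i}$, one of the three triangle edges blocks and that blocking pair necessarily contains a $D_J$-agent. This single uniform gadget, applied per agent rather than per variable/clause/connector gadget, forces every $a_i$ to be matched in $P_{a_i}$ --- i.e.\ forces the matching restricted to $A$ to be perfect --- and the correspondence with Theorem~\ref{thm:preferentialbip} is then a two-line argument in each direction. Because $|P_{a_i}|\le 3$ in the source instance, the resulting lists have length at most $5$, which explains the otherwise mysterious bound.

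Your complete-preferences plan is where the proposal is weakest: invoking the Abraham--Bir\'o--Manlove reduction and ``absorber cliques'' is speculative, and the appeal to Theorem~\ref{thm:blank} is misplaced (that theorem is a \emph{tractability} result, not a hardness ingredient). The paper instead handles completeness by the standard trick: take the length-$\le 5$ instance $J$ just built and append all missing agents to the end of every list to get $J'$. Since in any matching $M'$ of $J'$ with no $D_J$-agent blocking, each $a_i$ is still forced (by the same triangle argument) to be matched within $P_{a_i}$, and each $b_i^1,b_i^2$ pair is forced to be matched to each other, the appended tail entries are never relevant. So both variants follow from one short reduction from Theorem~\ref{thm:preferentialbip}, with no need to rebuild the SAT gadgetry or to design a separate complete-lists construction.
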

\begin{proof}
    We will reduce from {\sc 0-Deviator-Perfect-smi-Dec}, which we showed to be {\sf NP-complete} in Theorem \ref{thm:preferentialbip}. Notice that membership in {\sf NP} follows immediately, as we can simply iterate through $D$ and check that no agent in this set is contained in any blocking pairs of a given perfect matching.

    Let $(I=(A,\succ),D)$ be an instance of {\sc 0-Deviator-Perfect-smi-Dec} and let $J=(A_J,\succ^J)$ be the instance of {\sc 0-Deviator-sri-Dec} we construct, where $A_J=A\cup A_B$, and $A_B=\{b_i^1,b_i^2\;;\;1 \leq i\leq \vert A\vert \}$. Overall, $J$ has three times the number of agents as $I$. Now, for every agent $a_i\in A$ with preference list $P_{a_i}$ in $I$, we construct the following preference lists:
    \begin{align*}
        a_i &: P_{a_i} \; b_i^1 \; b_i^2 \\
        b_i^1 &: b_i^2 \; a_i \\
        b_i^1 &: a_i \; b_i^1
    \end{align*}

    Let $D_J=D\cup A_B$. We claim the following correspondence between the two problems. For clarity, we will indicate the relevant instance in the notation of blocking pairs, e.g., $bp^I(M)$ denotes the set of blocking pairs that $M$ admits with respect to instance $I$ and $bp_{a_i}^I(M)$ denotes the set of blocking pairs involving $a_i$ that $M$ admits with respect to instance $I$.

    \begin{claim}
    \label{claim:psri}
        $I$ admits a perfect matching $M$ such that $\bigcup_{a_i\in D}bp_{a_i}^I(M)=\varnothing$ if and only if $J$ admits a matching $M'$ such that $\bigcup_{a_i\in D_J}bp_{a_i}^J(M')=\varnothing$.
    \end{claim}
        \begin{proof}[Proof of Claim \ref{claim:psri}]
        \renewcommand{\qedsymbol}{$\blacksquare$}
        Suppose that $I$ admits a perfect matching $M$ such that $\bigcup_{a_i\in D}bp_{a_i}^I(M)=\varnothing$. Define $M'=M\cup \{\{b_i^1,b_i^2\} \;;\; 1\leq i\leq \vert A\vert\}$. Then $M'$ is a perfect matching of $J$. Furthermore, for every $a_i\in A$, by construction of $J$, we have that $M(a_i)=M'(a_i)\succ_{a_i}^Jb_i^{\gamma}$ (for $\gamma\in \{1,2\}$). Hence, $bp_{a_i}^J(M')=bp_{a_i}^I(M)$ and for all $b_q^\gamma\in A_B$, $bp_{b_q^\gamma}(M')=\varnothing$. Therefore, $\bigcup_{a_i\in D_J}^Jbp_{a_i}(M')=\bigcup_{a_i\in D\cup A_B}bp_{a_i}^J(M')=\varnothing$ as required.

        Conversely, suppose that $J$ admits a matching $M'$ such that $\bigcup_{a_i\in D_J}bp_{a_i}^J(M')=\varnothing$ and consider the matching $M=M'\setminus \{\{b_i^1,b_i^2\} \;;\; 1\leq i\leq \vert A\vert\}$. Then for all $a_i\in A$, it must be true that $M'(a_i)\succ_{a_i}^J b_i^{\gamma}$ (for $\gamma\in\{1,2\}$), otherwise at least one of $\{a_i,b_i^1\}$, $\{a_i,b_i^2\}$ or $\{b_i^1,b_i^2\}$ would block $M'$ in $J$, in which case $\bigcup_{a_i\in A_B}bp_{a_i}^J(M')\neq\varnothing$, a contradiction. Hence, $M$ is a perfect matching of $I$. Furthermore, clearly $\bigcup_{a_i\in D}bp_{a_i}^I(M)=\bigcup_{a_i\in D}bp_{a_i}^J(M')\subseteq \bigcup_{a_i\in D_J}bp_{a_i}^J(M')=\varnothing$ as required.
    \end{proof}

    Due to this correspondence, and by the fact that {\sc 0-Deviator-Perfect-smi-Dec} is {\sf NP-complete} even when preference lists are of length at most 3, we can immediately conclude that {\sc 0-Deviator-sri-Dec} is {\sf NP-complete} even when preference lists are of length at most 5. 

    Now, to extend this result to complete preference lists, we construct the instance $J'=(A_{J},\succ^{J'})$ from $J$ by arbitrarily appending previously unranked agents to the end of the preference lists of all agents. To see that the intractability result holds even when preference lists are complete, observe the following.

    \begin{claim}
    \label{claim:psrc}
        $J$ admits a matching $M$ such that $\bigcup_{a_i\in D_J}bp_{a_i}(M)=\varnothing$ if and only if $J'$ admits a matching $M'$ such that $\bigcup_{a_i\in D_J}bp_{a_i}(M')=\varnothing$.
    \end{claim}
    \begin{proof}[Proof of Claim \ref{claim:psrc}]
    \renewcommand{\qedsymbol}{$\blacksquare$}
        Suppose that $J$ admits a matching $M$ such that $\bigcup_{a_i\in D_J}bp_{a_i}^J(M)=\varnothing$. We already characterised the set of such matchings in the proof of Claim \ref{claim:psri} as complete matchings in $J$. Clearly $M$ also satisfies $\bigcup_{a_i\in D_J}bp_{a_i}^{J'}(M)=\varnothing$.

        Now suppose conversely that $J'$ admits a matching $M'$ such that $\bigcup_{a_i\in D_J}bp_{a_i}^{J'}(M')=\varnothing$. It remains to show that $M'$ only contains pairs that are acceptable in $J$, in which case also $\bigcup_{a_i\in D_J}bp_{a_i}^{J}(M')=\varnothing$. Clearly, it remains true that any agent $a_i\in A$ is matched to someone on $P_{a_i}$ (by the same argument as in Claim \ref{claim:psri}. Furthermore, suppose that some agent $b_i^\gamma$ is not matched to $b_i^{\gamma+1}$ ($\gamma\in\{1,2\}$ and addition taken modulo 2) in $M'$. Then, because previously unranked agents were appended to the end of the preference lists, it must be the case that $b_i^\gamma\succ_{b_i^{\gamma+1}}^{J'} M(b_i^{\gamma+1})$ and $b_i^{\gamma+1}\succ_{b_i^{\gamma}}^{J'} M(b_i^\gamma)$. Hence, $\{b_i^1,b_i^2\}\in bp^{J'}(M')$, which contradicts our assumption $\bigcup_{a_i\in D_J}bp_{a_i}^{J'}(M')=\varnothing$ as $\{b_i^1,b_i^2\}\subseteq D_J$. Thus, $\{b_i^1,b_i^2\}\in M'$, so every match in $M'$ is acceptable with respect to $J$ too.
    \end{proof}
    
    This completes the proof of Theorem \ref{thm:prefsrihard}.
\end{proof}

We can conclude the following intractability result for our original optimisation problem of interest using a similar argument as in the proof of Corollary \ref{cor:devmaxsmihard}.

\begin{corollary}
\label{cor:devsrihard}
    {\sc $k$-Deviator-sri} is {\sf NP-hard}, even when $k=0$ and regardless of whether preference lists are of length at most 5 or complete. Thus, {\sc Deviator-sri} is {\sf para-NP-hard} with respect to $\kappa=\min_{M\in \mathcal{M}}\vert\bigcup_{a_i\in D}bp_{a_i}(M)\vert$.
\end{corollary}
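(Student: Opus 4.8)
The plan is to derive the corollary directly from Theorem~\ref{thm:prefsrihard}, essentially by noting that {\sc 0-Deviator-sri-Dec} is exactly the special case $k=0$ of the search problem {\sc $k$-Deviator-sri}, and that the parameter $\kappa$ of interest is the optimal objective value itself.

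First I would observe that any algorithm solving {\sc $k$-Deviator-sri} with the input parameter set to $k=0$ also solves {\sc 0-Deviator-sri-Dec}: on input $(I,D)$ it either returns a matching $M$ with $\vert\bigcup_{a_i\in D}bp_{a_i}(M)\vert\leq 0$ --- that is, a matching in which no deviator is blocking --- or it correctly reports that no such matching exists. Hence a polynomial-time algorithm for {\sc $0$-Deviator-sri} would yield a polynomial-time algorithm for {\sc 0-Deviator-sri-Dec}, which by Theorem~\ref{thm:prefsrihard} is impossible unless {\sf P} $=$ {\sf NP}. Since the instances produced by the reduction in that theorem have either all preference lists of length at most $5$ or all preference lists complete, the same restrictions transfer, establishing the {\sf NP-hardness} of {\sc $k$-Deviator-sri} when $k=0$ under both restrictions.

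For the parametrised statement, recall that a parametrised problem is {\sf para-NP-hard} with respect to a parameter precisely when it remains {\sf NP-hard} even if that parameter is fixed to a constant \cite{flumgrohe}. Here $\kappa=\min_{M\in\mathcal M}\vert\bigcup_{a_i\in D}bp_{a_i}(M)\vert$ is the optimal objective value of {\sc Deviator-sri}, and $\kappa=0$ holds exactly when $(I,D)$ is a yes-instance of {\sc 0-Deviator-sri-Dec}. Restricted to instances with $\kappa=0$, solving {\sc Deviator-sri} amounts to exhibiting a matching in which no deviator is blocking, which is {\sf NP-hard} by the argument above. Fixing $\kappa=0$ therefore already yields {\sf NP-hardness}, so {\sc Deviator-sri} is {\sf para-NP-hard} for the parameter $\kappa$.

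The argument is routine once Theorem~\ref{thm:prefsrihard} is available; the only points needing a little care are (i) checking that the ``$\leq k$'' formulation of {\sc $k$-Deviator-sri} with $k=0$ really coincides with the existence question answered by {\sc 0-Deviator-sri-Dec}, which it does since the objective is a nonnegative integer, and (ii) invoking the definition of {\sf para-NP-hardness} so that fixing the optimal value to the constant $0$ suffices. I do not expect any genuine obstacle here --- this mirrors exactly the passage from Theorem~\ref{thm:preferentialbip} to Corollary~\ref{cor:devmaxsmihard}.
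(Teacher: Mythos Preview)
Your proposal is correct and follows essentially the same approach as the paper, which simply notes that the corollary follows from Theorem~\ref{thm:prefsrihard} by an argument mirroring Corollary~\ref{cor:devmaxsmihard}. You have spelled out explicitly the two routine observations the paper leaves implicit, namely that {\sc $0$-Deviator-sri} decides {\sc 0-Deviator-sri-Dec} and that {\sf NP}-hardness at the fixed parameter value $\kappa=0$ yields {\sf para-NP-hardness}.
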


We reiterate that {\sc MinBP-AlmostStable-sri} is in {\sf XP} with respect to the optimal parameter value, i.e., efficiently solvable for any fixed optimal parameter value \cite{abraham06}, so the result above is surprising in the sense that even for a fixed optimal parameter value and bounded degree graphs, the problem remains computationally intractable.

\section{Parametrised Algorithms}
\label{sec:paramalgo}

One way to cope with the strong computational intractability we established in the previous section is to explore algorithms with a parametrised time complexity. Due to Corollary \ref{cor:devmaxsmihard}, we know immediately that {\sc $k$-Deviator-Max-smi} is not in the complexity class {\sf XP} (and hence also not in {\sf FPT} \cite{flumgrohe}) with respect to the maximum preference list length $d_{\max}$. However, on the positive side, we will now show that the more general non-bipartite version of the problem, i.e., {\sc $k$-Deviator-Max-sri}, is fixed-parameter tractable (in the complexity class {\sf FPT}) with respect to the combined parameter $(\vert D\vert,d_{\max})$, as well as in {\sf XP} with respect to just $\vert D\vert$ (and hence the same applies also for the bipartite restriction {\sc $k$-Deviator-Max-smi}). We show this by giving an algorithm in the proof of the following theorem, which is efficient for small sets $D$ and small $k$ or $d_{\max}$ values (notice that $k$ is part of the problem, not an input parameter). Note that $d_{\max}$ is naturally small in most practical scenarios: as preference lists usually need to be elicited manually, there is often a limit on how many choices a participant ranks. Indeed, we are aware of multiple real-world applications of matching schemes, such as in the allocation of students to projects, students to campus housing, or pupils to courses, where participants are asked to provide a list of their 3-8 top choices only. Furthermore, notice that when maximising the matching size as a primary objective, it is often beneficial for agents to submit a short preference list, as this may increase their chance of getting a better partner / project / room. Of course, $\vert D\vert$ need not be small in general. However, one possible strategy is to fix $k=0$, in which case the algorithm terminates in $O(d_{\max}^{O(\vert D\vert)}n^{5/2})$ time, and rank agents based on their likelihood to deviate. Then, one could start by including only the strongest deviators in $D$, in which case $D$ is small, and iteratively increase the size of $D$ up until either no more solution exists in which no deviator agent is blocking, or the computation takes too long.

Intuitively, our algorithm enumerates possible partner and blocking-pair configurations for the deviator agents, and tries to extend each feasible configuration to a matching that satisfies our requirements using a maximum-weight matching subroutine.

\begin{theorem}
\label{thm:preferentialbipfpt}
    Let $(I=(A,\succ),D)$ be an instance with $n$ agents and maximum preference list length $d_{\max}$ of {\sc $k$-Deviator-Max-sri}. Then {\sc $k$-Deviator-Max-sri} is solvable in $O(\vert D\vert^{k}d_{\max}^{\vert D\vert+k+1}n^{5/2})$ time.
\end{theorem}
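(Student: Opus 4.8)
The plan is a \emph{guess-and-check} algorithm. For a hypothetical solution $M$, we exhaustively enumerate the ``local data'' of the deviators in $M$, and for each guess we test, essentially via one maximum-weight matching computation, whether it can be completed to a maximum-cardinality matching of $I$ with exactly the guessed blocking structure. Concretely, we guess (a) the set $B:=\bigcup_{a_i\in D}bp_{a_i}(M)$ of blocking pairs incident to $D$, and (b) the partner $M(a_i)$ (possibly ``unmatched'') of every $a_i\in D$. Every member of $B$ consists of a deviator together with one of its at most $d_{\max}$ acceptable partners, so there are at most $\vert D\vert d_{\max}$ candidate pairs, giving at most $O((\vert D\vert d_{\max})^{k})$ choices of a set $B$ with $\vert B\vert\le k$; the partner guesses contribute a further $(d_{\max}+1)^{\vert D\vert}$ factor, and a short calculation bounds the total number of guesses by $O(\vert D\vert^{k}d_{\max}^{\vert D\vert+k+1})$.

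Fix a guess $(B,(M(a_i))_{a_i\in D})$. We first discard it unless it is internally consistent: the guessed partners must form a partial matching $M_D$ of mutually acceptable pairs; every pair of $B$ must be able to block (for a pair with both ends in $D$ this is read off directly, and for a pair $\{a_i,a_j\}$ with $a_i\in D$, $a_j\notin D$ we need $a_j\succ_{a_i}M(a_i)$); and no pair with both ends in $D$ lying outside $B$ may block. Next we translate the target ``$\bigcup_{a_i\in D}bp_{a_i}(M)=B$'' into constraints on the non-deviators: for every $a_i\in D$ and every neighbour $a_j\notin D$ with $a_j\succ_{a_i}M(a_i)$, we require $M(a_j)\prec_{a_j}a_i$ if $\{a_i,a_j\}\in B$ (a ``suffix'' constraint, which still allows $a_j$ to be unmatched) and $M(a_j)\succeq_{a_j}a_i$ if $\{a_i,a_j\}\notin B$ (a ``prefix'' constraint, which in particular forces $a_j$ to be matched). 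Intersecting all constraints on each non-deviator yields an allowed set of partners plus, for some agents, a ``must be matched'' flag; if any such set becomes empty, or a partner already fixed by $M_D$ violates a constraint, we discard the guess. A routine case analysis then shows that every matching $M\supseteq M_D$ obeying these constraints satisfies $\bigcup_{a_i\in D}bp_{a_i}(M)=B$, and, conversely, that the guess read off from a genuine solution $M$ survives all of these checks.

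It remains to decide, for a surviving guess, whether some \emph{maximum-cardinality} matching of $I$ refines $M_D$ and obeys the constraints. Delete from the acceptability graph $G$ the vertices covered by $M_D$ and every edge forbidden by a constraint, obtaining $G_1$. Writing $\nu(\cdot)$ for the maximum matching size, and using that any matching of $G_1$ may be added to $M_D$, a solution exists precisely when $\nu(G_1)=\nu(G)-\vert M_D\vert$ \emph{and} $G_1$ has a maximum matching covering the set $U$ of non-deviators that carry a prefix constraint and are not already in $M_D$. We compute $\nu(G)$ and $\nu(G_1)$ by the Micali--Vazirani algorithm, and handle the covering requirement with one maximum-weight matching call on $G_1$ in which edge $e$ is given weight $1+(n+1)\vert e\cap U\vert$, so that the optimum first maximises the number of covered $U$-vertices and then the matching size; its value reveals whether the guess yields a genuine solution, and if so the matching $M_D$ together with the returned edges (remaining agents left unmatched) is output. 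The matching work dominates the per-guess cost at $O(n^{5/2})$ (bounding the edge count by $n^2$), so multiplying by the number of guesses gives the stated running time.

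The main obstacle is making the guess--solution correspondence tight in both directions while respecting the subtleties of the model: the maximum-cardinality requirement must be measured against \emph{all} matchings of $I$ -- hence the test $\nu(G)=\nu(G_1)+\vert M_D\vert$ -- rather than only against constraint-respecting matchings; a single non-deviator may simultaneously be pinned down by $M_D$ and carry prefix or suffix constraints, which must be checked against its fixed partner; and a prefix constraint genuinely forces its agent to be matched, so it cannot be enforced by edge deletion alone and must enter the matching subroutine as a covering constraint. The matching-theoretic ingredients are standard; the care lies entirely in the reduction.
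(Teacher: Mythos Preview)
Your proof is correct and follows essentially the same guess-and-extend strategy as the paper: enumerate the partners of $D$ together with a candidate blocking set $B$ of size at most $k$, derive ``prefix'' constraints on the neighbours of $D$, and decide feasibility for each guess via a single weighted-matching computation that simultaneously enforces maximum cardinality and coverage of the constrained agents. The only substantive differences are cosmetic---you insist on $\bigcup_{a_i\in D}bp_{a_i}(M)=B$ exactly (hence the extra suffix constraints), whereas the paper only requires containment in $B$, and your weight function lexicographically prioritises covering $U$ before size while the paper does the reverse; both orderings work, and the running-time bookkeeping is the same up to where the stray $d_{\max}$ factor is absorbed.
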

\begin{proof}
    We start by computing a maximum-cardinality matching $M_S$ using the classical Micali-Vazirani algorithm \cite{vaziranicorrect,micalivazirani} to determine the target size of our matching. Recall that we aim to find a maximum-cardinality matching $M_P$ of $I$ such that the agents in $D$ are collectively involved in at most $k$ blocking pairs in $M_P$. We will denote such blocking pairs by $bp(M_P)\vert_{D}$. Clearly, every agent in $a_i\in D$ is either matched to one of the at most $d_{\max}$ many agents on their preference list or remains unmatched. Thus, there are $(d_{\max}+1)^{\vert D\vert}=O(d_{\max}^{\vert D\vert})$ possible combinations of choices of partner, including the possibility of being unmatched, for the agents in $D$. We can immediately discard any combinations of choices that are not matchings, and we denote the remaining set of candidate matchings by $\mathcal{M}_C$. Next, we consider each candidate matching sequentially and aim to extend it to a solution to our problem. Let us fix a candidate matching $M_C\in \mathcal{M}_C$.

    Now, we consider each possible set of blocking pairs involving some agent in $D$ of size at most $k$. There are clearly at most $\sum_{1\leq r\leq k}\binom{\vert D\vert d_{\max}}{r}$ many such sets. We can apply the loose upper bound $\binom{\vert D\vert d_{\max}}{r}\leq (\vert D\vert d_{\max})^r$ and conclude that $\sum_{1\leq r\leq k} (\vert D\vert d_{\max})^r=\frac{(\vert D\vert d_{\max})^{k+1}-1}{\vert D\vert d_{\max}-1}$ by the geometric series. Furthermore, the bound $\frac{(\vert D\vert d_{\max})^{k+1}-1}{\vert D\vert d_{\max}-1}\leq \frac{(\vert D\vert d_{\max})^{k+1}}{\vert D\vert d_{\max}-1}\leq 2(\vert D\vert d_{\max})^{k}=O((\vert D\vert d_{\max})^{k})$ applies. Let us fix a candidate set of blocking pairs $B$. Clearly, no pair of agents can be simultaneously matched and blocking, so if $M_C\cap B\neq \varnothing$, then we reject this configuration $(M_C,B)$.

    Then, for every agent $a_i\in D$ and every agent $a_r\in A$ such that $a_r\succ_i M_C(a_i)$ and $\{a_i,a_r\}\notin B$, agent $a_r$ must end up with a partner better than $a_i$ (according to $\succ_r$). Thus, we truncate $\succ_r$, discarding every agent worse than and including $a_i$ (if multiple agents in $D$ satisfy this criterion with respect to $a_r$, we truncate at the best-ranked such agent according to $\succ_r$). We must keep track of these agents $a_r$ to later verify that they do indeed have a partner, and denote the set of all such agents $a_r$ by $Q$.

    If, after carrying out these truncations, some agent $a_i\in D$ has a partner in $M_C$ that is no longer in their preference list (or vice versa), we can reject this configuration $(M_C,B)$. Otherwise, we need to look for a matching of maximum size among the agents not yet matched in $M_C$ such that every agent in $Q$ is matched. We can do this as follows: we construct a maximum-weight matching instance $(G,w)$ (where $G$ is a graph and $w$ is a weight function from the set of edges to the positive integers) consisting of all agents $A\setminus A(M_C)$ (where $A(M_C)$ denotes the set of agents matched in $M_C$) as vertices and all acceptable matches among these agents as edges. Furthermore, we construct $w$ as follows: for every acceptable pair of agents $a_r,a_s$ present in the instance, we let $w(\{a_r,a_s\})=n + \vert\{a_r,a_s\}\cap Q\vert$, i.e., we assign a weight of $n$ to every edge, plus one extra point for each endpoint in $Q$. 
    
    We can construct $(G,w)$ in linear time in the size of $I$, so in $O(d_{\max}n)$ time. Now we find a maximum-weight matching $M_{mw}$ with weight $w(M_{mw})$ for the instance $(G,w)$, e.g., using the algorithm by \citet{huangkavitha}, which runs in $O(\sqrt{n_*}m_*N\log(n_*^2/m_*)/\log(n_*))$ time, where $n_*$ is the number of vertices in the graph, $m_*$ is the number of edges in the graph and $N$ is the maximum edge weight. Thus, given that $n_*\leq n$ $m_*\leq d_{\max}n$ and $N\leq n+2$, this algorithm runs in $O(d_{\max}n^{5/2})$ time.
    
    Clearly, $M_{mw}$ is a maximum-cardinality matching of $G$, i.e., it matches the maximum possible number $p$ of agents altogether, $q$ of which are agents in $Q$. Suppose not, i.e., suppose that there exists a matching $M_{mw}'$ that matches strictly more (total) agents $p'>p$, $q'$ of which are agents in $Q$. Clearly $w(M_{mw}')= p'n+q'$ and $w(M_{mw})=pn+q$. Because $M_{mw}$ is a maximum-weight matching, it must be the case that $w(M_{mw}')\leq w(M_{mw})$. Then $p'n+q'\leq pn+q$, and after rearranging, $(p'-p)n\leq q-q'$. However, by assumption that $p'>p$, then $n\leq (p'-p)n$, so $n\leq q-q'$. However, because $q\leq n$ and $q'\geq 0$ by construction of $G$, then it must be the case that $q=n$, i.e., $Q=A$, and therefore $M_{mw}$ matches all $n$ agents in $G$, which contradicts that $p'>p$. Thus, $M_{mw}$ is indeed a maximum-cardinality matching.

    Now, suppose that there exists another maximum-cardinality matching $M_{mw}'$ that also matches a total of $p$ agents, but matches strictly more agents $q'>q$ that are in $Q$. Then $w(M_{mw}')= pn+q'>pn+q=w(M_{mw})$, contradicting that $M_{mw}$ is a maximum-weight matching.
    
    Thus, $M_{mw}$ is a maximum-cardinality matching of $(G,w)$ in which every agent in $Q$ is matched if and only if such a matching exists. Thus, if there exists an agent $a_r\in Q$ that is not matched in $M_{mw}$, we can reject this configuration $(M_C,B)$. Otherwise, by construction, $M_C$ and $M_{mw}$ are disjoint, so if $\vert M_C\cup M_{mw}\vert = \vert M_S\vert$, then we can accept $M=M_C\cup M_{mw}$ as our solution. Otherwise, $M_C$ does not extend to a solution with respect to $B$, and we can reject this configuration $(M_C,B)$.

    Of course, if we accept $M$, then $M$ is a maximum-cardinality matching of $I$. Furthermore, by construction, $\vert bp(M)\vert_{D}\vert \leq \vert B\vert\leq k$. Thus, $M$ is a solution to {\sc $k$-Deviator-Max-sri}. Otherwise, if, for every configuration $(M_C,B)$ and an associated maximum-weight matching $M_{mw}$, it is the case that either $\vert M_C\cup M_{mw}\vert<\vert M_S\vert$ or not every agent in $Q$ is matched, then it must be true that $(I,D)$ is a no-instance to {\sc $k$-Deviator-Max-smi}, i.e., that there exists no solution to the problem for this instance. To see this, notice that if $(I=(A,\succ), D)$ has a solution $M_P$, then $M_P$ can be divided into a part $M_1$ in which every pair contains at least one agent from $D$, and a part $M_2$ which does not contain any pair with an agent from $D$. Clearly, $M_1$ satisfies the requirements of the candidate matchings $\mathcal{M}_C$, and $M_2$ must be a maximum-cardinality matching among the remaining agents (otherwise $M_P$ would not be a maximum-cardinality matching). Let us fix this $M_1\in \mathcal{M}_C$. By being a solution, it must furthermore be true that, for every agent $a_i\in D$ and every agent $a_r\in A$ such that $a_r\succ_i M_P(a_i)$, either $\{a_i,a_r\}\in bp(M_P)\vert_{D}$ or ($\{a_i,a_r\}\notin bp(M_P)\vert_{D}$ and $M_P(a_r)\succ_r a_i$). We keep track of such agents $a_r$ that satisfy the second case in the set $Q$. It must also be the case that $\vert bp(M_P)\vert_{D}\vert=\vert bp(M_1\cup M_2)\vert_{D}\vert \leq k$, so we would have considered this set $bp(M_P)\vert_{D}$ as a candidate set $B$ in our execution. Furthermore, $M_2$ must be a maximum-cardinality matching of our maximum-weight matching instance $(G,w)$ (for some configuration $(M_C,B)$) that matches all such agents $Q$, and, by the key property we stated about maximum-weight matchings, $M_2$ must be a maximum-weight matching of $(G,w)$, too. Thus, our algorithm would have identified this solution, or a different solution, to the problem.

    Putting the complexity analysis together, we consider $O(d_{\max}^{\vert D\vert})$ candidate matchings and, for every candidate matching, we consider $O((\vert D\vert d_{\max})^{k})$ sets of blocking pairs, leading to at most $O(d_{\max}^{\vert D\vert}(\vert D\vert d_{\max})^{k})=O(\vert D\vert^{k}d_{\max}^{\vert D\vert+k})$ candidate configurations. For each candidate configuration, we construct a maximum-weight matching instance in $O(d_{\max}n)$ time and, subsequently, compute a maximum-weight matching in $O(d_{\max}n^{5/2})$ time. Hence, we arrive at a final complexity of $O(\vert D\vert^{k}d_{\max}^{\vert D\vert+k+1}n^{5/2})$.
\end{proof}

We remark that the bipartite restriction {\sc $k$-Deviator-Max-smi} can be solved slightly faster (by at least a factor of $\frac{n}{\log n}$) by replacing the maximum-weight matching algorithm by \citet{huangkavitha} with the algorithm by \citet{duansu12}. Using this algorithm, we can conclude the following complexity classifications for our decision and optimisation problems of interest.

\begin{corollary}
\label{cor:preffptxp}
    {\sc $k$-Deviator-Max-sri} is in {\sf FPT} with respect to $(\vert D\vert,d_{\max})$ and in {\sf XP} with respect to $\vert D\vert$. Furthermore, {\sc Deviator-Max-sri} is in {\sf FPT} with respect to $(\vert D\vert,d_{\max})$ and in {\sf XP} with respect to $(\vert D\vert,opt)$, where $opt=\min_{M\in \mathcal{M}^+}\vert\bigcup_{a_i\in D}bp_{a_i}(M)\vert$.
\end{corollary}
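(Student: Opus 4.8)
The plan is to derive all four classifications from the running-time bound $O(\vert D\vert^{k}d_{\max}^{\vert D\vert+k+1}n^{5/2})$ of Theorem~\ref{thm:preferentialbipfpt}, taking care in each case to distinguish which of $k$, $\vert D\vert$, $d_{\max}$ and $opt$ is a fixed constant, which is a parameter, and which should be crudely bounded by $n$.

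For {\sc $k$-Deviator-Max-sri} with parameter $(\vert D\vert,d_{\max})$: here $k$ is a fixed constant (it is part of the problem name, not an input), so $\vert D\vert^{k}$ is polynomial in $\vert D\vert$ and $d_{\max}^{k+1}$ is polynomial in $d_{\max}$. Writing the bound as $\bigl(\vert D\vert^{k}d_{\max}^{\vert D\vert+k+1}\bigr)\cdot n^{5/2}$ exhibits it in the form $f(\vert D\vert,d_{\max})\cdot\mathrm{poly}(n)$, which gives membership in {\sf FPT}. For the {\sf XP} claim with parameter $\vert D\vert$ alone, I would substitute the trivial bounds $d_{\max}\le n$ and $\vert D\vert\le n$ into the same expression to obtain $O\bigl(n^{k}\cdot n^{\vert D\vert+k+1}\cdot n^{5/2}\bigr)=O\bigl(n^{\vert D\vert+2k+7/2}\bigr)$, which is $n^{g(\vert D\vert)}$ for a suitable $g$ since $k$ is constant; hence the problem is in {\sf XP} with respect to $\vert D\vert$. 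The same bounds are inherited by the bipartite restriction {\sc $k$-Deviator-Max-smi}.

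For the optimisation problem {\sc Deviator-Max-sri}, the idea is to invoke the decision procedure of Theorem~\ref{thm:preferentialbipfpt} with successively larger thresholds $k=0,1,2,\dots$ and stop at the first $k$ for which a solution $M$ is returned; since a maximum-cardinality matching always exists and any maximum-cardinality matching achieving the minimum has exactly $opt$ blocking pairs meeting $D$, this first successful $k$ equals $opt$ and $M$ is an optimal output. I would first record the elementary bound $opt\le\vert D\vert d_{\max}$: every blocking pair in $\bigcup_{a_i\in D}bp_{a_i}(M)$ contains some $a_i\in D$ together with one of its at most $d_{\max}$ acceptable partners, so this loop always terminates within the parameter. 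For the {\sf FPT} claim in $(\vert D\vert,d_{\max})$, the total cost is dominated by the last iteration $k=opt\le\vert D\vert d_{\max}$, namely $O\bigl(\vert D\vert^{\vert D\vert d_{\max}}d_{\max}^{\vert D\vert+\vert D\vert d_{\max}+1}n^{5/2}\bigr)$, again of the form $f(\vert D\vert,d_{\max})\cdot n^{5/2}$. For the {\sf XP} claim in $(\vert D\vert,opt)$, I would again substitute $d_{\max},\vert D\vert\le n$ so the $k$-th iteration costs $O(n^{\vert D\vert+2k+7/2})$, and sum over $k=0,\dots,opt$ to get $O\bigl(opt\cdot n^{\vert D\vert+2\,opt+7/2}\bigr)\le n^{h(\vert D\vert,opt)}$ for a suitable $h$, which is the desired {\sf XP} bound.

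I do not expect a genuine obstacle here: the corollary is essentially a bookkeeping consequence of Theorem~\ref{thm:preferentialbipfpt}. The only points requiring a little attention are keeping straight the role of $k$ as a fixed constant in the ``$k$-'' problems versus a searched quantity in the optimisation problems, and supplying the bound $opt\le\vert D\vert d_{\max}$ that guarantees the search-over-$k$ loop stays inside the parameter.
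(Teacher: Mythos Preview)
Your proposal is correct and follows essentially the same route as the paper: both derive all four classifications directly from the $O(\vert D\vert^{k}d_{\max}^{\vert D\vert+k+1}n^{5/2})$ bound of Theorem~\ref{thm:preferentialbipfpt}, treating $k$ as a constant for the {\sc $k$-}problems, substituting $d_{\max}\le n$ for the {\sf XP} claims, and iterating $k=0,1,\dots$ bounded by $opt\le\vert D\vert d_{\max}$ for the optimisation variant. The only cosmetic differences are that the paper keeps $\vert D\vert$ as $\vert D\vert^{O(1)}$ rather than also bounding it by $n$, and multiplies by the iteration count $\vert D\vert d_{\max}$ rather than saying the last iteration dominates; neither changes anything substantive.
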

\begin{proof}
    From Theorem \ref{thm:preferentialbipfpt}, we know that the problem is solvable in $O(\vert D\vert^{k}d_{\max}^{\vert D\vert+k+1}n^{5/2})$ time. Given that $k$ is part of the problem definition, not part of the input, we consider $k=O(1)$. Alternatively, we can justify this as follows: the optimal parameter value $\vert \sum_{a_i\in D}bp_{a_i}(M)\vert$ is bounded from above by $\vert D\vert d_{\max}$. Therefore, for any $k\geq\vert D\vert d_{\max}$, the problem becomes trivial. Thus, we may assume without loss of generality that $k<\vert D\vert d_{\max}$, in which case $k$ does not need to be treated as a parameter when $\vert D\vert, d_{\max}$ are fixed. Hence, the first {\sf FPT} result follows directly from $O(\vert D\vert^{O(1)}d_{\max}^{\vert D\vert+O(1)}n^{5/2})$. For the first {\sf XP} result, notice that $d_{\max}\leq n-1< n$, so we can crudely upper-bound the complexity by $O(\vert D\vert^{O(1)}n^{O(\vert D\vert)})$ as required. Finally, to solve {\sc Deviator-Max-sri}, we can simply solve {\sc $k$-Deviator-Max-sri} with increasing values of $k$ up until a solution is found, i.e., for instance $(I,D)$, we iterate $0\leq k\leq \min_{M\in \mathcal{M}^+}\vert\bigcup_{a_i\in D}bp_{a_i}(M)\vert\leq \vert D\vert d_{\max}$ and find a solution in $O(\vert D\vert d_{\max}\vert D\vert^{\vert D\vert d_{\max}}d_{\max}^{\vert D\vert+\vert D\vert d_{\max}+1}n^{5/2})=O(\vert D\vert^{O(\vert D\vert d_{\max})}d_{\max}^{O(\vert D\vert d_{\max})}n^{5/2})$ time. Furthermore, $O(k\vert D\vert^{k}d_{\max}^{\vert D\vert+k+1}n^{5/2})=O(k\vert D\vert^{O(k)}n^{O(\vert D\vert+k)})$.
\end{proof}

We will now show how to speed up the algorithm when dropping the maximum-cardinality requirement. Notice that the fact that we do not require a maximum-cardinality matching allows us to arrive at a complexity independent of the total number of agents $n$. This is because any agent not in $D$ may be left unmatched, unless they block with an agent in $D$. We performed the analysis assuming that $d_{\max}\ll n$, which leads to a larger exponent for $d_{\max}$, but it is easy to refine the analysis to a smaller exponent in exchange for a sub-quadratic dependency on $n$ when this is not the case.

\begin{restatable}{theorem}{preferentialfpt}
\label{thm:preferentialfpt}
    Let $(I=(A,\succ),D)$ be an instance of problem {\sc $k$-Deviator-sri} with maximum preference list length $d_{\max}$. Then {\sc $k$-Deviator-sri} is solvable in $O(\vert D\vert^{k+3/2}d_{\max}^{\vert D\vert+k+4})$ time.
\end{restatable}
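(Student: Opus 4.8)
The plan is to adapt the algorithm from the proof of Theorem~\ref{thm:preferentialbipfpt} to the setting where a maximum-cardinality matching is no longer required, exploiting the key observation that any conformist not involved in a blocking pair with a deviator may be left unmatched. First I would enumerate, exactly as before, all $O((d_{\max}+1)^{\vert D\vert}) = O(d_{\max}^{\vert D\vert})$ ways to assign each agent in $D$ a partner from its preference list (or leave it unmatched), discard those that are not matchings, and call the resulting set of candidate matchings $\mathcal{M}_C$. For a fixed $M_C \in \mathcal{M}_C$, I would then enumerate all $O((\vert D\vert d_{\max})^k)$ candidate sets $B$ of at most $k$ blocking pairs touching $D$, reject the configuration if $M_C \cap B \neq \varnothing$, and otherwise perform the same preference-list truncations: for each $a_i \in D$ and each $a_r$ with $a_r \succ_i M_C(a_i)$ and $\{a_i,a_r\} \notin B$, truncate $\succ_r$ below (and including) the best such $a_i$, collecting these forced agents into a set $Q$. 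If a truncation destroys an edge of $M_C$, reject the configuration.

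The crucial difference from the bipartite maximum-cardinality case is what remains to be solved after the truncations. Here we do not need to match as many agents as possible overall; we only need to (i) match every agent in $Q$, and (ii) ensure that no agent in $D$, and no agent whose list was truncated, ends up able to block. Since conformists outside $Q$ may be left unmatched, the residual problem on the agents $A \setminus A(M_C)$ reduces to finding \emph{any} matching that saturates $Q$ (using only edges surviving the truncation). I would argue that it suffices to restrict attention to a bounded-size subgraph: only agents within distance one of $Q$ in the acceptability graph can be relevant, and $\vert Q \vert \le \vert D \vert d_{\max}$, so the residual graph has $O(\vert D\vert d_{\max}^2)$ vertices and $O(\vert D\vert d_{\max}^3)$ edges. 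A $Q$-saturating matching in this subgraph can be found via a single maximum-cardinality matching computation (e.g.\ Micali--Vazirani) restricted to this subgraph, checking whether $\vert Q \vert$ of its vertices are matched; by a Hall-type / augmenting-path argument, such a matching saturating $Q$ exists in the subgraph if and only if one exists globally, since augmenting paths for $Q$-vertices need never leave the neighbourhood of $Q$ once we are free to discard the rest. If the configuration yields a $Q$-saturating matching $M_Q$, output $M = M_C \cup M_Q$ together with arbitrary self-matches for everyone else; otherwise reject. Correctness in both directions mirrors the earlier proof: any genuine solution $M_P$ splits into a $D$-incident part in $\mathcal{M}_C$ and a remainder, and the blocking-pair set of $M_P$ restricted to $D$ is one of the enumerated $B$'s, so the algorithm finds $M_P$ or another valid solution.

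For the running time, there are $O(d_{\max}^{\vert D\vert})$ candidate matchings and $O((\vert D\vert d_{\max})^k) = O(\vert D\vert^k d_{\max}^k)$ candidate blocking-pair sets per matching, hence $O(\vert D\vert^k d_{\max}^{\vert D\vert + k})$ configurations. Each configuration requires truncations in $O(\vert D\vert d_{\max})$ time plus one maximum-cardinality matching on a graph with $O(\vert D\vert d_{\max}^2)$ vertices and $O(\vert D\vert d_{\max}^3)$ edges, costing $O(\sqrt{\vert D\vert d_{\max}^2}\cdot \vert D\vert d_{\max}^3) = O(\vert D\vert^{3/2} d_{\max}^{4})$ time by Micali--Vazirani. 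Multiplying gives $O(\vert D\vert^{k+3/2} d_{\max}^{\vert D\vert + k + 4})$, as claimed — notably independent of $n$.

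The main obstacle I expect is justifying rigorously that it is enough to solve the residual matching problem on the bounded neighbourhood of $Q$ rather than on all of $A \setminus A(M_C)$: one must show that restricting to this subgraph does not lose a $Q$-saturating matching. This follows from a standard alternating/augmenting-path argument (if a $Q$-saturating matching exists in the full residual graph, one exists using only vertices adjacent to $Q$, since any matched conformist not adjacent to $Q$ can be unmatched without harm, and any augmenting path for an exposed $Q$-vertex can be truncated at its first return toward $Q$), but it needs to be stated carefully, together with the verification that leaving the discarded conformists unmatched creates no blocking pair involving a deviator — which is exactly what the truncation step and the definition of $Q$ guarantee.
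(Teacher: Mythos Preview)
Your overall plan mirrors the paper's almost exactly: enumerate the $O(d_{\max}^{\vert D\vert})$ candidate matchings $M_C$, enumerate the $O((\vert D\vert d_{\max})^k)$ candidate blocking-pair sets $B$, truncate, collect the forced agents in $Q$, restrict to a subgraph of size $O(\vert D\vert d_{\max}^2)$ around the deviators, and solve a residual matching problem there. The restriction to the bounded neighbourhood and the running-time arithmetic are both fine and match the paper.

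There is, however, a genuine gap in the residual step. You propose to run a single maximum-cardinality matching computation (Micali--Vazirani) on the subgraph and then ``check whether $\vert Q\vert$ of its vertices are matched.'' This is not a correct test for the existence of a $Q$-saturating matching: a maximum-cardinality matching need not saturate $Q$ even when a $Q$-saturating matching exists. A minimal counterexample already occurs on a triangle: take vertices $\{q,a,b\}$ with all three edges present and $Q=\{q\}$ (this is realisable here, e.g.\ with a single deviator $d$ whose list is $[q,x]$, $M_C(d)=x$, and $q$'s truncated list containing $a,b$, which in turn find each other acceptable). The matching $\{a,b\}$ is maximum-cardinality but leaves $q$ exposed, so your algorithm would wrongly reject this configuration even though $\{q,a\}$ saturates $Q$. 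The same phenomenon occurs with two $Q$-vertices sharing a common non-$Q$ neighbour.

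The paper avoids exactly this pitfall by solving a \emph{maximum-weight} matching on the same subgraph with weights $w(\{a_r,a_s\})=\vert\{a_r,a_s\}\cap Q\vert$, which directly maximises the number of matched $Q$-vertices; it then accepts iff all of $Q$ is covered. Using Huang--Kavitha on a graph with $O(\vert D\vert d_{\max}^2)$ vertices, $O(\vert D\vert d_{\max}^3)$ edges and constant weights gives the same $O(\vert D\vert^{3/2}d_{\max}^4)$ bound you computed, so the fix costs nothing and your final running time $O(\vert D\vert^{k+3/2} d_{\max}^{\vert D\vert+k+4})$ stands once you swap in the weighted subroutine.
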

\begin{proof}[Proof sketch]
    The high-level approach is the same as in the proof of Theorem \ref{thm:preferentialbipfpt}. We highlight critical differences below and refer to the appendix for full details.

    When searching for a maximum-cardinality matching among the agents not yet matched in $M_C$ such that every agent in $Q$ is matched, we can drop the maximum-cardinality requirement. This simplifies the construction of $(G,w)$: $G$ consists of all agents that are at most one agent away from some agent in $D$ (i.e., within neighbourhood distance 2 of some agent in $D$ in the acceptability graph of $I$) and not yet matched in $M_C$ as vertices and all acceptable matches among these agents as edges. Furthermore, we construct $w$ as follows: for every acceptable pair of agents $a_r,a_s$ present in the instance, we let $w(\{a_r,a_s\})=\vert\{a_r,a_s\}\cap Q\vert$, i.e., we assign a weight of 1 for every endpoint in $Q$. 
    
    We can construct $(G,w)$ in linear time in the graph we create, assuming that preference lists of any given agent can be explored efficiently. By the preference list lengths, there are at most $\vert D\vert d_{\max}^2$ many agents that are at most one agent away from some agent in $D$ (and, as aruged above, contrary to the algorithm in Theorem \ref{thm:preferentialbipfpt}, we can ignore all agents outwith the neighbourhood distance two of any agent in $D$), and $G$ has at most $\vert D\vert d_{\max}^3$ edges. Thus, we can create $(G,w)$ in $O(\vert D\vert d_{\max}^3)$ time. Now we find a maximum-weight matching $M_{mw}$ using the algorithm by Huang and Kavitha \cite{huangkavitha}, which runs in $O(\sqrt{n_*}m_*N\log(n_*^2/m_*)/\log(n_*))=O(\vert D\vert^{3/2}d_{\max}^4)$ time. Leftover conformists that are distance at least 3 away from any deviator can simply be left unmatched.

    If there exists an agent $a_r\in Q$ that is not matched in $M_{mw}$, we can reject this configuration $(M_C,B)$. Otherwise, by construction, $M_C$ and $M_{mw}$ are disjoint, so we can accept $M=M_C\cup M_{mw}$ as our solution. 
\end{proof}

Notice that dropping the maximum-cardinality requirement here results in a time complexity of at most $O(\vert D\vert^{k+3/2}d_{\max}^{\vert D\vert+k+4}$ compared to the time complexity of at most $O(\vert D\vert^{k}d_{\max}^{\vert D\vert+k+1}n^{5/2}$ we concluded in Theorem \ref{thm:preferentialbipfpt} for {\sc $k$-Deviator-Max-sri}, which is more efficient whenever $n$ is much larger than $D$ and $d_{\max}$.

Again, due to a similar argument as in Corollary \ref{cor:preffptxp}, we can immediately note the following complexity classification.

\begin{corollary}
\label{cor:psrifpt}
    {\sc $k$-Deviator-sri} is in {\sf FPT} with respect to $(\vert D\vert,d_{\max})$ and in {\sf XP} with respect to $\vert D\vert$. Furthermore, {\sc Deviator-sri} is in {\sf FPT} with respect to $(\vert D\vert, d_{\max})$ and in {\sf XP} with respect to $(\vert D\vert, opt)$, where $opt=\min_{M\in \mathcal M}\vert\bigcup_{a_i\in D} bp_{a_i}(M)\vert$.
\end{corollary}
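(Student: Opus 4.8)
The plan is to obtain all four statements as direct consequences of the running-time bound $O(\vert D\vert^{k+3/2}d_{\max}^{\vert D\vert+k+4})$ for {\sc $k$-Deviator-sri} proved in Theorem~\ref{thm:preferentialfpt}, following the same template as in the proof of Corollary~\ref{cor:preffptxp}. The only point that needs care is the status of $k$. Since any matching $M$ satisfies $\vert\bigcup_{a_i\in D}bp_{a_i}(M)\vert\leq\vert D\vert d_{\max}$ (each of the $\vert D\vert$ deviators lies in at most $d_{\max}$ blocking pairs), every {\sc $k$-Deviator-sri} instance with $k\geq\vert D\vert d_{\max}$ is a trivial yes-instance, witnessed for example by the empty matching. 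Hence I may assume $k<\vert D\vert d_{\max}$, so that $k$ need not be treated as a parameter once $\vert D\vert$ and $d_{\max}$ are fixed; alternatively, one simply notes that $k$ is part of the problem specification rather than the input and treats it as $O(1)$.

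For {\sc $k$-Deviator-sri}, membership in {\sf FPT} with respect to $(\vert D\vert,d_{\max})$ then follows immediately: substituting the bound $k<\vert D\vert d_{\max}$ into Theorem~\ref{thm:preferentialfpt} yields a running time that depends only on $\vert D\vert$ and $d_{\max}$ and not on $n$ at all, which is in particular of the form $f(\vert D\vert,d_{\max})\cdot n^{O(1)}$. For membership in {\sf XP} with respect to $\vert D\vert$, I would use the crude bound $d_{\max}\leq n-1<n$ to rewrite the running time (with $k=O(1)$) as $O(\vert D\vert^{O(1)}n^{O(\vert D\vert)})$, as required.

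For {\sc Deviator-sri}, the strategy is to call {\sc $k$-Deviator-sri} for $k=0,1,2,\dots$ and return the first matching found, which occurs at $k=opt\leq\vert D\vert d_{\max}$. Summing the per-call bound of Theorem~\ref{thm:preferentialfpt} over these at most $\vert D\vert d_{\max}+1$ values of $k$ gives a total running time of $O(\vert D\vert d_{\max}\cdot\vert D\vert^{opt+3/2}d_{\max}^{\vert D\vert+opt+4})=O(\vert D\vert^{O(\vert D\vert d_{\max})}d_{\max}^{O(\vert D\vert d_{\max})})$, a function of $(\vert D\vert,d_{\max})$ only, establishing membership in {\sf FPT} with respect to $(\vert D\vert,d_{\max})$. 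Substituting $d_{\max}<n$ once more, the same sum reads $O(\vert D\vert^{O(opt)}n^{O(\vert D\vert+opt)})$, which places {\sc Deviator-sri} in {\sf XP} with respect to $(\vert D\vert,opt)$.

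There is no genuine obstacle here: all the algorithmic substance is already contained in Theorem~\ref{thm:preferentialfpt}, and this corollary is purely a repackaging of its running time. The only things to get right are the bookkeeping around $k$ (justifying $k=O(1)$, or equivalently $k<\vert D\vert d_{\max}$, so that it may be absorbed into the parameters) and verifying that the iterated calls in the optimisation version terminate at $k=opt$ with total cost dominated, up to the $\vert D\vert d_{\max}$ iteration factor, by the final call — both routine and exactly parallel to the argument already given for Corollary~\ref{cor:preffptxp}.
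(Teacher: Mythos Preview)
Your proposal is correct and follows essentially the same approach as the paper, which simply states that the corollary follows ``due to a similar argument as in Corollary~\ref{cor:preffptxp}''. You correctly identify and carry out all the relevant steps of that template: the treatment of $k$ via the $\vert D\vert d_{\max}$ bound (or as $O(1)$), the substitution $d_{\max}<n$ for the {\sf XP} bound, and the iterative search over $k$ for the optimisation version.
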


\section{Minimising the Number of Blocking Deviators Instead}
\label{sec:prefminba}

As discussed in Section \ref{sec:related}, an alternative approach to minimising blocking pairs is to minimise the number of blocking agents. There is an inherent trade-off: minimising blocking pairs minimises the aggregate instability of the matching, but might involve a higher total number of blocking agents that have an individual incentive to deviate. Minimising blocking agents instead aims to give as few agents as possible an incentive to deviate, although their individual incentives or opportunities to deviate might be higher due to possibly more blocking pairs that they are a part of. As a third option, minimax almost-stable matchings, which were recently investigated by \citet{glitznermanloveminmax}, minimise the maximum individual incentive to deviate.

In this section, we will briefly highlight that minimising the number of blocking deviators remains {\sf NP-hard}. However, we will also show how to extend the {\sf FPT} algorithms to this setting. Let {\sc Deviator-BA-sri} and {\sc Deviator-BA-Max-sri} denote versions of {\sc Deviator-sri} and {\sc Deviator-Max-sri} that minimise $\vert\{a_i\in D\;\vert\;bp_{a_i}(M)\neq \varnothing\}\vert$ among all matchings and all maximum-cardinality matchings, respectively. Furthermore, let {\sc $k$-Deviator-BA-sri} and {\sc $k$-Deviator-BA-Max-sri} denote the problems of computing a matching and a maximum-cardinality matching admitting at most $k$ blocking agents in $D$, if they exist, respectively. Also, let {\sc 0-Deviator-BA-sri-Dec} and {\sc 0-Deviator-BA-Perfect-smi-Dec} denote the restricted problems of deciding whether such a solution exists when $k=0$ and, in the {\sc smi} case, where maximum-cardinality matchings are perfect, and the instance is bipartite. We can immediately conclude the following.

\begin{theorem}
\label{thm:prefBAhard}
    {\sc 0-Deviator-BA-sri-Dec} is {\sf NP-complete}, regardless of whether preference lists are of length at most 5 or complete. {\sc 0-Deviator-BA-Perfect-smi-Dec} is {\sf NP-complete}, even if all preference lists are of length at most 3.
\end{theorem}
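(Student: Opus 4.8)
The plan is to observe that the statement follows almost immediately from the intractability results already established for the blocking-pair versions, because in each of the relevant hardness reductions the constructed instances have the property that every blocking pair that can possibly be created involves an agent of $D$ whose \emph{only} possible blocking partner is tightly controlled. In other words, in those reductions, ``no deviator is in a blocking pair'' and ``no deviator is a blocking agent'' are the \emph{same} condition, so the reductions transfer verbatim. Concretely, for the {\sc smi} case I would re-examine the instance $J$ built in the proof of Theorem~\ref{thm:preferentialbip}: the deviator set is $D=\{t_i^{\beta,1},t_i^{\beta,7}\;;\;1\leq i\leq n\land 1\leq\beta\leq 4\}$, and a perfect matching $M'$ of $J$ satisfies $\bigcup_{a_r\in D}bp_{a_r}(M')=\varnothing$ if and only if no agent of $D$ is a blocking agent of $M'$ (the two quantities vanish simultaneously since ``$a_r$ in some blocking pair'' is by definition exactly ``$bp_{a_r}(M')\neq\varnothing$''). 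Hence the same chain $B\text{ satisfiable}\Leftrightarrow J$ admits a perfect matching in which no $D$-agent is blocking shows that $B$ is satisfiable if and only if $J$ admits a perfect matching with at most $0$ blocking deviators, and membership in {\sf NP} is immediate by checking the perfect matching and iterating through the (length-$\leq 3$) preference lists of $D$-agents. This gives {\sf NP-completeness} of {\sc 0-Deviator-BA-Perfect-smi-Dec} even with preference lists of length at most $3$.

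For the {\sc sri} case I would argue the same way, reducing from {\sc 0-Deviator-BA-Perfect-smi-Dec}. Reusing the gadget from the proof of Theorem~\ref{thm:prefsrihard} (appending, for each $a_i\in A$, the two agents $b_i^1,b_i^2$ with the indicated short preference lists, and setting $D_J=D\cup A_B$), the key point is again that the blocking-agent and blocking-pair conditions coincide on $D_J$: a matching $M'$ of $J$ has no blocking deviator among $D_J$ exactly when $\bigcup_{a_i\in D_J}bp_{a_i}(M')=\varnothing$. So Claim~\ref{claim:psri} applies unchanged and yields that $J$ has a matching with no blocking deviator iff $I$ has a perfect matching with no blocking deviator; the padding argument of Claim~\ref{claim:psrc} then extends this to complete preference lists, and the length-$\leq 3$ lists of $I$ become length-$\leq 5$ lists in $J$. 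Together with the trivial {\sf NP} membership, this establishes that {\sc 0-Deviator-BA-sri-Dec} is {\sf NP-complete}, whether preference lists are of length at most $5$ or complete.

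The only subtlety — and the one place where I would be careful rather than merely invoke earlier proofs — is checking that in the cited reductions no ``extra'' blocking pair can ever be created that involves a $D$-agent together with some agent \emph{not} accounted for in the blocking-pair analysis, since the blocking-agent objective is more forgiving about \emph{how many} blocking pairs a given agent lies in but equally unforgiving about whether it lies in \emph{any}. In the {\sc smi} construction each $t_i^{r,1}$ has preference list $t_i^{r,2}\;c(t_i^{r,1})\;t_i^{r,12}$ and each $t_i^{r,7}$ has $t_i^{r,6}\;x_i^r\;t_i^{r,8}$, so the potential blocking partners of these $D$-agents are precisely the three listed agents in each case, all of which are already examined in the proof of Claim~\ref{claim:psmi}; similarly each $b_i^\gamma$ in the {\sc sri} construction has a length-$2$ list whose partners are exactly $b_i^{\gamma'}$ and $a_i$. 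So no unanticipated blocking pair involving a deviator can arise, and the equivalence ``no blocking deviator $\Leftrightarrow$ no blocking pair involving a deviator'' holds on the nose in both constructions. I expect this verification to be short but it is the genuine content of the proof; once it is in place, the theorem is a direct corollary of Theorems~\ref{thm:preferentialbip} and~\ref{thm:prefsrihard}.
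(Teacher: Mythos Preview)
Your proposal is correct and follows essentially the same approach as the paper: both rest on the observation that, for any matching $M$, the condition $\vert\{a_i\in D: bp_{a_i}(M)\neq\varnothing\}\vert=0$ is \emph{definitionally} equivalent to $\bigcup_{a_i\in D}bp_{a_i}(M)=\varnothing$, so {\sc 0-Deviator-BA-sri-Dec} and {\sc 0-Deviator-sri-Dec} (and likewise the {\sc Perfect-smi} versions) have identical yes-instances, and the hardness results of Theorems~\ref{thm:preferentialbip} and~\ref{thm:prefsrihard} transfer immediately. Your third paragraph's ``subtlety'' check of the gadget structure is therefore unnecessary: the equivalence at $k=0$ is a tautology that holds for every instance, not a feature of these particular reductions, and the paper's proof accordingly dispenses with it in one line.
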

\begin{proof}
    Clearly, both problems are in {\sf NP}: we can easily check a possible solution $M$ by iterating through the set of deviator agents and checking that none of them block with any agent that they prefer to their partner in $M$ (which might be none). For the {\sc Perfect-smi} case, we furthermore need to check that every agent in the instance has a partner in $M$, which can be done efficiently by iterating through the set of agents once.
    
    Now, let $I=(A,\succ)$ be an {\sc sri} instance and let $D\subseteq A$. Then there exists a matching $M\in\mathcal M$ such that $\vert\{a_i\in D\;\vert\;bp_{a_i}(M)\neq \varnothing\}\vert=0$ if and only if $\bigcup_{a_i\in D}bp_{a_i}(M)=\varnothing$. Hence, $(I,D)$ is a yes-instance to {\sc 0-Deviator-BA-sri-Dec} if and only if it is a yes-instance to {\sc 0-Deviator-sri-Dec}, which we showed to be {\sf NP-complete} regardless of whether preference lists are of length at most 5 or complete in Theorem \ref{thm:prefsrihard}.

    Similarly, let $I=(A,\succ)$ be an {\sc smi} instance and let $D\subseteq A$. Then there exists a matching $M\in\mathcal{M}^+$ such that $\vert\{a_i\in D\;\vert\;bp_{a_i}(M)\neq \varnothing\}\vert=0$ if and only if $\bigcup_{a_i\in D}bp_{a_i}(M)=\varnothing$. Hence, $(I,D)$ is a yes-instance to {\sc 0-Deviator-BA-Perfect-smi-Dec} if and only if it is a yes-instance to {\sc 0-Deviator-Perfect-smi-Dec}, which we showed to be {\sf NP-complete} even in the case where preference lists are of length at most 3 in Theorem \ref{thm:preferentialbip}.
\end{proof}

Thus, we can immediately conclude the following intractability results for our desired optimisation problems.

\begin{corollary}
\label{cor:devbahard}
    {\sc Deviator-BA-sri} and {\sc Deviator-BA-Max-smi} are {\sf para-NP-hard} with respect to optimal values $\kappa=\min_{M\in\mathcal M}\vert\{a_i\in D\;\vert\;bp_{a_i}(M)\neq \varnothing\}\vert$ and $\kappa=\min_{M\in\mathcal M^+}\vert\{a_i\in D\;\vert\;bp_{a_i}(M)\neq \varnothing\}\vert$, respectively.
\end{corollary}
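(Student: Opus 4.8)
The plan is to read para-NP-hardness off directly from Theorem~\ref{thm:prefBAhard}, mirroring the way Corollary~\ref{cor:devmaxsmihard} was obtained from Theorem~\ref{thm:preferentialbip}. The key observation is that the slice of the parameter space in which the optimal value $\kappa$ equals $0$ coincides exactly with the two decision problems already shown to be {\sf NP-complete}. Concretely, a matching $M\in\mathcal M$ of an {\sc sri} instance $I$ witnesses that $\kappa=0$ for $(I,D)$ precisely when no agent of $D$ is blocking in $M$, i.e. precisely when $(I,D)$ is a yes-instance of {\sc 0-Deviator-BA-sri-Dec}; similarly, a maximum-cardinality (here: perfect) matching $M\in\mathcal M^+$ of an {\sc smi} instance $I$ witnesses $\kappa=0$ precisely when $(I,D)$ is a yes-instance of {\sc 0-Deviator-BA-Perfect-smi-Dec}. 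So the optimal-value-$0$ instances of {\sc Deviator-BA-sri} (respectively {\sc Deviator-BA-Max-smi}) are exactly the yes-instances of the corresponding hard decision problem.

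Given this, the argument is short. By Theorem~\ref{thm:prefBAhard}, {\sc 0-Deviator-BA-sri-Dec} is {\sf NP-complete}, so {\sc Deviator-BA-sri} remains {\sf NP-hard} already on the single fixed parameter value $\kappa=0$ (any procedure solving the optimisation problem on those instances would in particular decide {\sc 0-Deviator-BA-sri-Dec}); by definition this makes {\sc Deviator-BA-sri} {\sf para-NP-hard} with respect to $\kappa=\min_{M\in\mathcal M}\vert\{a_i\in D\;\vert\;bp_{a_i}(M)\neq\varnothing\}\vert$, and in particular it has no {\sf XP} algorithm with respect to $\kappa$ unless {\sf P} $=$ {\sf NP}. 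The identical reasoning, now invoking the {\sf NP-completeness} of {\sc 0-Deviator-BA-Perfect-smi-Dec} from Theorem~\ref{thm:prefBAhard}, yields {\sf para-NP-hardness} of {\sc Deviator-BA-Max-smi} with respect to $\kappa=\min_{M\in\mathcal M^+}\vert\{a_i\in D\;\vert\;bp_{a_i}(M)\neq\varnothing\}\vert$; here one just needs to recall that on the instances produced in Theorem~\ref{thm:preferentialbip} — and hence on those inherited by Theorem~\ref{thm:prefBAhard} — every maximum-cardinality matching is perfect, so ``maximum-cardinality matching with no blocking deviator'' and ``perfect matching with no blocking deviator'' are the same requirement.

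I do not anticipate any genuine obstacle: the statement is a one-line corollary of Theorem~\ref{thm:prefBAhard}. The only points meriting a little care are (i) lining up with the formal definition of para-NP-hardness, for which it suffices to exhibit a single constant parameter value (namely $0$) at which the problem stays {\sf NP-hard}, and (ii) the maximum-cardinality-versus-perfect-matching distinction in the bipartite case, which is harmless because the hard instances already have all maximum-cardinality matchings perfect. If desired, one may additionally remark that, since the underlying reductions preserve short preference lists (see Theorems~\ref{thm:preferentialbip} and~\ref{thm:prefsrihard}), the para-NP-hardness persists under the restriction to preference lists of length at most $5$ for {\sc Deviator-BA-sri} and at most $3$ for {\sc Deviator-BA-Max-smi}.
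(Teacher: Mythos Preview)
Your proposal is correct and follows exactly the approach the paper intends: the corollary is stated without proof, prefaced only by ``Thus, we can immediately conclude the following intractability results for our desired optimisation problems,'' so your derivation from Theorem~\ref{thm:prefBAhard} via the $\kappa=0$ slice (mirroring Corollaries~\ref{cor:devmaxsmihard} and~\ref{cor:devsrihard}) is precisely the implicit argument. Your additional care regarding the perfect-versus-maximum-cardinality distinction and the preservation of short preference lists is accurate and simply makes explicit what the paper leaves to the reader.
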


On the positive side, the {\sf FPT} algorithms that minimise blocking pairs can be adapted naturally to minimise blocking agents instead, as we show below. We start with an algorithm for {\sc sri} instances and no requirements on the matching size. Notice that the bound on the runtime is faster than the one in Theorem \ref{thm:preferentialfpt} by a factor of $d_{\max}^k$, and we point out the reason for this in the following proof.

\begin{restatable}{theorem}{preferentialBAfpt}
\label{thm:preferentialBAfpt}
    Let $(I=(A,\succ),D)$ be an instance of problem {\sc $k$-Deviator-BA-sri} with maximum preference list length $d_{\max}$. Then {\sc $k$-Deviator-BA-sri} is solvable in $O(\vert D\vert^{k+3/2}d_{\max}^{\vert D\vert+4})$ time.
\end{restatable}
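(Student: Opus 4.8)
The plan is to reuse the branching-plus-matching framework from the proof of Theorem~\ref{thm:preferentialfpt} almost verbatim, changing only the second level of enumeration so that it scales with the number of \emph{blocking deviators} rather than with the number of blocking pairs; full details would go in the appendix. First I would enumerate the $O(d_{\max}^{\vert D\vert})$ assignments of a partner (or the unmatched state) to each agent of $D$, discard those that are not matchings, and call a survivor a candidate matching $M_C$. Then, for each $M_C$, instead of guessing the entire set of $D$-incident blocking pairs, I would guess only the subset $B\subseteq D$ with $\vert B\vert\leq k$ of deviators that are \emph{permitted} to block. Since a deviator in $B$ may lie in arbitrarily many blocking pairs without changing the blocking-agent count, one such guess per $M_C$ suffices, and there are only $\sum_{0\leq r\leq k}\binom{\vert D\vert}{r}=O(\vert D\vert^{k})$ of them -- this is exactly the source of the $d_{\max}^{k}$ speedup over Theorem~\ref{thm:preferentialfpt}.

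Next, for a fixed configuration $(M_C,B)$ I would enforce that no agent of $D\setminus B$ blocks: for every $a_i\in D\setminus B$ and every $a_r$ with $a_r\succ_i M_C(a_i)$, agent $a_r$ must be matched to someone ranked strictly above $a_i$, so I truncate $\succ_r$ to the entries ranked above the best (with respect to $\succ_r$) such $a_i$ and add $a_r$ to a set $Q$ of agents that must be matched. If a truncation destroys an $M_C$-pair, I reject the configuration; otherwise I build the bounded-size maximum-weight matching instance $(G,w)$ of Theorem~\ref{thm:preferentialfpt} -- vertices are the unmatched agents within neighbourhood distance $2$ of $D$, edges the truncated acceptable pairs among them, and $w(\{a_r,a_s\})=\vert\{a_r,a_s\}\cap Q\vert$ -- and compute a maximum-weight matching $M_{mw}$ with the algorithm of Huang and Kavitha~\cite{huangkavitha}. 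Since the weight of any matching equals the number of $Q$-agents it covers, $M_{mw}$ covers all of $Q$ iff some matching does; if it does I output $M=M_C\cup M_{mw}$ (leaving conformists at distance $\geq 3$ from $D$ unmatched), which has at most $\vert B\vert\leq k$ blocking deviators, and otherwise proceed to the next configuration. For completeness I would argue, as in Theorem~\ref{thm:preferentialfpt}, that a hypothetical solution $M_P$ yields the candidate matching $M_1$ given by its $D$-incident pairs, the set $B$ of its at most $k$ blocking deviators, and a $Q$-covering matching in $(G,w)$ coming from the rest of $M_P$.

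For the running time I would multiply the $O(d_{\max}^{\vert D\vert})$ candidate matchings by the $O(\vert D\vert^{k})$ choices of $B$, and by the per-configuration cost, which is dominated by the maximum-weight matching call on a graph with $O(\vert D\vert d_{\max}^{2})$ vertices, $O(\vert D\vert d_{\max}^{3})$ edges and weights bounded by $2$, i.e., $O(\vert D\vert^{3/2}d_{\max}^{4})$ time. The product is $O(\vert D\vert^{k+3/2}d_{\max}^{\vert D\vert+4})$, as claimed.

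The step I expect to be the main obstacle is the correctness of the coarser second-level guess: that recording only \emph{which} deviators block, and not \emph{how}, loses nothing. This relies on the objective being the count of blocking agents -- a deviator already in $B$ imposes no further constraint, whereas a deviator outside $B$ must be shielded from \emph{every} agent it prefers to its $M_C$-partner, which is precisely what the truncation-and-$Q$ mechanism, together with the ``weight equals number of covered $Q$-agents'' property of the maximum-weight matching, enforces. Verifying that the distance-$2$ neighbourhood of $D$ still captures every agent relevant to these constraints is the same routine argument as in Theorem~\ref{thm:preferentialfpt}.
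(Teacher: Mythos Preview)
Your proposal is correct and follows essentially the same approach as the paper: enumerate $O(d_{\max}^{\vert D\vert})$ candidate matchings of the deviators, guess the at most $k$ deviators in $B\subseteq D$ that are allowed to block (rather than their blocking pairs, yielding $O(\vert D\vert^{k})$ instead of $O((\vert D\vert d_{\max})^{k})$ choices), enforce non-blocking for $D\setminus B$ via the truncation-and-$Q$ mechanism, and resolve $Q$ with the same bounded-size maximum-weight matching instance as in Theorem~\ref{thm:preferentialfpt}. Your identification of the ``main obstacle'' and its resolution also match the paper's reasoning.
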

\begin{proof}[Proof sketch]
    The high-level approach remains the same as in the proof of Theorem \ref{thm:preferentialfpt}. We highlight critical differences below and refer to the appendix for full details. We consider $(d_{\max}+1)^{\vert D\vert}=O(d_{\max}^{\vert D\vert})$ possible combinations of choices of partner, including the possibility of being unmatched, for the agents in $D$. Then, we consider each possible set of blocking agents in $D$ of size at most $k$. Here, and below, the approach differs slightly from that in Theorem \ref{thm:preferentialfpt}: we do not need to consider the whole set of blocking pairs but rather sets of blocking agents. There are at most $\sum_{1\leq r\leq k}\binom{\vert D\vert}{r}=O(\vert D\vert^{k})$ many such sets. From here, the approach is the same again as in the proof of Theorem \ref{thm:preferentialfpt}. 
\end{proof}

We now show how to adapt the {\sf FPT} algorithm for the {\sc Max-sri} case. Also, for this algorithm, the bound on the runtime is faster than the one in Theorem \ref{thm:preferentialbipfpt} by a factor of $d_{\max}^k$, for the same reason as in the proof of Theorem \ref{thm:preferentialBAfpt}.

\begin{restatable}{theorem}{preferentialBAbipfpt}
\label{thm:preferentialBAbipfpt}
    Let $(I=(A,\succ),D)$ be an instance with $n$ agents and maximum preference list length $d_{\max}$ of {\sc $k$-Deviator-BA-Max-sri}. {\sc $k$-Deviator-BA-Max-sri} is solvable in $O(\vert D\vert^{k}d_{\max}^{\vert D\vert+1}n^{5/2})$ time.
\end{restatable}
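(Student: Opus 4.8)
The plan is to adapt the algorithm from the proof of Theorem~\ref{thm:preferentialbipfpt} in exactly the way that Theorem~\ref{thm:preferentialBAfpt} adapts Theorem~\ref{thm:preferentialfpt}: the only substantive change is that we enumerate candidate sets of blocking \emph{deviator agents} rather than candidate sets of blocking pairs. First I would compute a maximum-cardinality matching $M_S$ of $I$ with the Micali--Vazirani algorithm \cite{vaziranicorrect,micalivazirani} to fix the target size, and then, exactly as before, enumerate the $(d_{\max}+1)^{\vert D\vert}=O(d_{\max}^{\vert D\vert})$ ways of assigning to each agent of $D$ either a partner from its preference list or the ``unmatched'' option, discard those that are not matchings, and call the surviving set $\mathcal M_C$.

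For each candidate $M_C\in\mathcal M_C$, instead of ranging over all sets of at most $k$ blocking pairs I would range over all sets $S\subseteq D$ with $\vert S\vert\leq k$; there are $\sum_{1\leq r\leq k}\binom{\vert D\vert}{r}=O(\vert D\vert^{k})$ of these, which is exactly where the factor $d_{\max}^{k}$ saved relative to Theorem~\ref{thm:preferentialbipfpt} comes from. The set $S$ plays the role of the deviators we permit to be blocking; every agent of $D\setminus S$ must be in no blocking pair. So for every $a_i\in D\setminus S$ and every $a_r$ with $a_r\succ_i M_C(a_i)$, I would require $a_r$ to end up with a partner strictly better than $a_i$: if $a_r\in D$ this is a direct check against $M_C(a_r)$ (reject the configuration $(M_C,S)$ if it fails), and if $a_r\notin D$ I would truncate $\succ_r$ at and including $a_i$ and record $a_r$ in a set $Q$ of conformists that must be matched. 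If after these truncations some $M_C$-pair is no longer acceptable, reject $(M_C,S)$.

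The remaining step is identical to Theorem~\ref{thm:preferentialbipfpt}: build the maximum-weight matching instance $(G,w)$ on the agents $A\setminus A(M_C)$ with $w(\{a_r,a_s\})=n+\vert\{a_r,a_s\}\cap Q\vert$, compute a maximum-weight matching $M_{mw}$ via the algorithm of Huang and Kavitha \cite{huangkavitha} in $O(d_{\max}n^{5/2})$ time, reject if some agent of $Q$ is unmatched in $M_{mw}$, and otherwise accept $M=M_C\cup M_{mw}$ provided $\vert M\vert=\vert M_S\vert$. The correctness argument carries over essentially verbatim: the weight function forces $M_{mw}$ to be a maximum-cardinality matching of $G$ that saturates $Q$ whenever one exists, so any accepted $M$ is a maximum-cardinality matching of $I$ in which every agent of $D\setminus S$ is non-blocking, hence at most $\vert S\vert\leq k$ deviators are blocking; conversely, any solution $M_P$ decomposes as some $M_1\in\mathcal M_C$ together with a maximum-cardinality (hence, by the weight argument, maximum-weight) matching $M_2$ on the rest, and the set of blocking deviators of $M_P$ is an eligible choice of $S$, so the algorithm finds $M_P$ or an equally good matching. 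Multiplying $O(d_{\max}^{\vert D\vert})$ candidate matchings by $O(\vert D\vert^{k})$ candidate sets by the $O(d_{\max}n^{5/2})$ cost per configuration yields the claimed $O(\vert D\vert^{k}d_{\max}^{\vert D\vert+1}n^{5/2})$ bound.

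The main obstacle I anticipate is getting the bookkeeping right for which potential blocking pairs are actually forbidden once $S$ is fixed: permitting a deviator $a_i\in S$ to block does not on its own make every pair through $a_i$ admissible, since the other endpoint could lie in $D\setminus S$. This is why the forbidden-pair check must be driven by iterating over $D\setminus S$ (checking the fixed partners of any $D$-endpoints directly, and truncating preference lists of conformist endpoints) rather than over $S$. The only other delicate point, re-justifying that the weighted-matching subroutine simultaneously attains maximum cardinality and saturates $Q$, is precisely the argument already given in the proof of Theorem~\ref{thm:preferentialbipfpt}, which we may invoke unchanged.
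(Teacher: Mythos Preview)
Your proposal is correct and follows essentially the same approach as the paper: enumerate $O(d_{\max}^{\vert D\vert})$ candidate partner assignments for $D$, enumerate $O(\vert D\vert^{k})$ candidate sets of permitted blocking deviators, truncate and collect $Q$, and then reuse the weighted-matching subroutine from Theorem~\ref{thm:preferentialbipfpt} to extend to a maximum-cardinality matching saturating $Q$. Your explicit case split between $a_r\in D$ and $a_r\notin D$ and your remark that the forbidden-pair check must be driven from $D\setminus S$ are slightly more detailed than the paper's presentation but amount to the same argument.
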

\begin{proof}[Proof sketch]
    Again, the high-level approach remains the same as in the proof of Theorem \ref{thm:preferentialBAfpt}. We highlight critical differences below and refer to the appendix for full details. We consider at most $(d_{\max}+1)^{\vert D\vert}=O(d_{\max}^{\vert D\vert})$ possible candidate matchings $\mathcal{M}_C$ and at most $O(\vert D\vert^k)$ many possible sets of blocking agents in $D$. Similarly, for each combination of candidate matching $M_C$ and candidate set of blocking agents $B$, we create the set $Q$ of agents that must be matched in a corresponding solution as in the proof of Theorem \ref{thm:preferentialBAfpt}. 
\end{proof}

With these results, we may conclude the following complexity classification for our optimisation problems of interest.

\begin{corollary}
\label{cor:prefba}    
    The problems {\sc $k$-Deviator-BA-sri} and {\sc $k$-Deviator-BA-Max-sri} are in {\sf FPT} with respect to $(\vert D\vert,d_{\max})$ and in {\sf XP} with respect to $\vert D\vert$. Furthermore, {\sc Deviator-BA-sri} and {\sc Deviator-BA-Max-sri} are in {\sf FPT} with respect to $(\vert D\vert, d_{\max})$ and in {\sf XP} with respect to $(\vert D\vert, opt)$, where $opt=\min_{M\in \mathcal M}\vert ba(M)\cap D\vert$ and $opt=\min_{M\in \mathcal {M}^+}\vert ba(M)\cap D\vert$, respectively.
\end{corollary}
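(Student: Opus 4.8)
The plan is to replay the bookkeeping of Corollaries~\ref{cor:preffptxp} and~\ref{cor:psrifpt}, but feeding in the running-time bounds of Theorems~\ref{thm:preferentialBAfpt} and~\ref{thm:preferentialBAbipfpt} rather than those of Theorems~\ref{thm:preferentialfpt} and~\ref{thm:preferentialbipfpt}. The single structural point that changes is the range of the optimal value: since the objective counts a subset of $D$, no matching can have more than $\vert D\vert$ blocking deviators, so $opt\le\vert D\vert$ in both the unrestricted and the maximum-cardinality versions. In particular, for every $k\ge\vert D\vert$ the instances of {\sc $k$-Deviator-BA-sri} and {\sc $k$-Deviator-BA-Max-sri} are trivially yes-instances (a matching, respectively a maximum-cardinality matching, always exists and has at most $\vert D\vert$ blocking deviators), so we may assume $k\le\vert D\vert$; hence $k$ need not be treated as a separate parameter once $\vert D\vert$ is fixed.

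First I would handle the four problems {\sc $k$-Deviator-BA-sri} and {\sc $k$-Deviator-BA-Max-sri}. By Theorem~\ref{thm:preferentialBAfpt}, {\sc $k$-Deviator-BA-sri} runs in $O(\vert D\vert^{k+3/2}d_{\max}^{\vert D\vert+4})$ time; treating $k$ as part of the problem (or as bounded by $\vert D\vert$), this is a function of $\vert D\vert$ and $d_{\max}$ alone, giving {\sf FPT} with respect to $(\vert D\vert,d_{\max})$, and substituting $d_{\max}\le n-1<n$ turns it into $n^{O(\vert D\vert)}$, giving {\sf XP} with respect to $\vert D\vert$. The same two moves applied to the $O(\vert D\vert^{k}d_{\max}^{\vert D\vert+1}n^{5/2})$ bound of Theorem~\ref{thm:preferentialBAbipfpt} give {\sf FPT} with respect to $(\vert D\vert,d_{\max})$ for {\sc $k$-Deviator-BA-Max-sri} (the $n^{5/2}$ factor being polynomial) and, via $d_{\max}^{\vert D\vert+1}\le n^{\vert D\vert+1}$, membership in {\sf XP} with respect to $\vert D\vert$.

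For the optimisation problems {\sc Deviator-BA-sri} and {\sc Deviator-BA-Max-sri} I would call the corresponding algorithm with $k=0,1,2,\dots$ and output the first matching returned; since $opt\le\vert D\vert$, at most $\vert D\vert+1$ iterations occur and the loop always terminates with an optimal matching. Summing the per-$k$ bounds over $0\le k\le\vert D\vert$ yields, for {\sc Deviator-BA-sri}, $O(\vert D\vert^{O(\vert D\vert)}d_{\max}^{O(\vert D\vert)})$ and, for {\sc Deviator-BA-Max-sri}, the same with an extra $n^{5/2}$ factor --- both {\sf FPT} in $(\vert D\vert,d_{\max})$. For {\sf XP} with respect to $(\vert D\vert,opt)$ I would instead stop the loop once $k=opt$, so the total time is $O(opt\cdot\vert D\vert^{opt}d_{\max}^{\vert D\vert+O(1)})$ (with an extra $n^{5/2}$ in the maximum-cardinality case); bounding $\vert D\vert\le n$ and $d_{\max}\le n$ rewrites this as $n^{O(\vert D\vert+opt)}$.

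There is no genuine obstacle in this corollary --- all the substance sits in Theorems~\ref{thm:preferentialBAfpt} and~\ref{thm:preferentialBAbipfpt}. The only points needing a line of justification are the bound $opt\le\vert D\vert$ (which both legitimises truncating the search over $k$ and lets $k$ be dropped as a parameter) and the observation that the substitutions $d_{\max}\le n$ and $\vert D\vert\le n$ leave the exponents as functions of the advertised parameters only.
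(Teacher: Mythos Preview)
Your proposal is correct and follows exactly the approach the paper intends: the paper states this corollary without proof, expecting the reader to replay the argument of Corollary~\ref{cor:preffptxp} with the running-time bounds of Theorems~\ref{thm:preferentialBAfpt} and~\ref{thm:preferentialBAbipfpt} substituted in. Your observation that here $opt\le\vert D\vert$ (rather than $\vert D\vert d_{\max}$ as in the blocking-pair version) is the one structural tweak, and you handle it correctly.
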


\section{Efficient Algorithms for Instances with Short Lists}
\label{sec:prefshort}

In addition to the {\sf FPT} algorithms for our collection of deviator-stable matching problems, we will now show that all these problems are solvable in polynomial time whenever preference lists are of length at most 2. Although this setting is, of course, highly restricted, these algorithms can still present practical tools to compute optimal solutions whenever choices are very limited. Furthermore, the algorithms would also be applicable when agents are asked to only provide their two favourite choices, after which we first find a deviator-stable maximum-cardinality matching, and then arbitrarily assign the remaining agents to each other. While not giving an optimal solution, this approach does ensure desirable stability guarantees with respect to the top 2 choices of agents, which agents likely feel the most strongly about anyway, and, subject to this, minimises the number of unmatched agents. From a computational complexity perspective, the new algorithms also establish tight and almost-tight complexity dichotomies for our problems of interest.

We start with Algorithm \ref{alg:prefexact2sri}, which returns an optimal solution to {\sc Deviator-sri} and {\sc Deviator-BA-sri} in polynomial time. Notice that because preference lists are of length at most 2, the acceptability graph of $I$ has maximum degree at most 2, i.e., it consists only of paths and cycles. Algorithm \ref{alg:prefexact2sri} exploits this as follows: every path component involving any number of agents and every cycle component involving an even number of agents is bipartite, so we can run the classical Gale-Shapley algorithm \cite{gale_shapley} on these components (we use the notation $A(C)$ to denote the set of agents involved in a component $C$ and the notation $I\vert_{A(C)}$ to denote the part of instance $I$ involving the agents $A(C)$) to find stable matchings among the involved agents. The trickier case occurs for cycles involving an odd number of agents, as these are, of course, not bipartite and might not admit stable matchings. Here, we include a stable matching within the component, if one exists, and, otherwise, include a maximum-cardinality matching that leaves exactly one agent unmatched -- which agent to exclude depends on the occurrences of deviator agents in the cycle.

\begin{algorithm}[!htb]
\renewcommand{\algorithmicrequire}{\textbf{Input:}}
\renewcommand{\algorithmicensure}{\textbf{Output:}}

\begin{algorithmic}[1]
\Require{$I$ : an {\sc sri} instance}
\Require{$D$ : deviator agents}
\Ensure{$M$ : a matching}

\State $M \gets \varnothing$ \Comment{Initialise an empty solution}

\For{each path $P=(a_{r_1}\;a_{r_2}\;a_{r_3}\dots a_{r_k})$ of $I$ involving $k$ agents}
    \State $M\gets M\;\cup$ {\sf GaleShapley}$(I\vert_{A(P)})$
\EndFor
\For{each cycle $C=(a_{r_1}\;a_{r_2}\;a_{r_3}\dots a_{r_k})$ of $I$ involving an even number of $k$ agents}
    \State $M\gets M\;\cup$ {\sf GaleShapley}$(I\vert_{A(C)})$
\EndFor
\For{each cycle $C=(a_{r_1}\;a_{r_2}\;a_{r_3}\dots a_{r_k})$ of $I$ involving an odd number of $k$ agents}
    \State $M_t\gets$ {\sf Irving}$(I\vert_{A(C)})$ \Comment{Find a stable matching, if one exists}
    \If{$M_t$ exists}
        \State $M\gets M\cup M_t$
    \Else
        \If{there exist $a_{r_u},a_{r_v}$ in $A(C)\setminus D$ such that $a_{r_v}$ is $a_{r_u}$'s first choice}
            \State $M\gets M\cup \{\{a_{r_{v+2x-1}},a_{r_{v+2x}}\}\;\vert\;1\leq x\leq \lfloor\frac{k}{2}\rfloor\}$ \Comment{Leave $a_{r_v}$ unmatched}
        \ElsIf{there exist $a_{r_u}\in A(C)\setminus D$, $a_{r_v} \in A(C)\cap D$ such that $a_{r_v}$ is $a_{r_u}$'s first choice}
            \State $M\gets M\cup \{\{a_{r_{v+2x-1}},a_{r_{v+2x}}\}\;\vert\;1\leq x\leq \lfloor\frac{k}{2}\rfloor\}$ \Comment{Leave $a_{r_v}$ unmatched}
        \Else
            \State $M\gets M\cup \{\{a_{r_1},a_{r_2}\},\{a_{r_3},a_{r_4}\},\dots,\{a_{r_{k-2}},a_{r_{k-1}}\}\}$ \Comment{Leave $a_{r_k}$ unmatched}
        \EndIf
    \EndIf
\EndFor

\State\Return{$M$}

\end{algorithmic}
\caption{Exact algorithm for {\sc sri} problems with lists of length at most 2}
\label{alg:prefexact2sri}
\end{algorithm}

\begin{theorem}
\label{thm:prefexact2sri}
    Let $I=(A,\succ)$ be an {\sc sri} instance with $n$ agents and let $D\subseteq A$ be a set of deviators. If all preference lists of $I$ are of length at most 2, then Algorithm \ref{alg:prefexact2sri} computes an optimal solution to {\sc Deviator-sri} and {\sc Deviator-BA-sri} in $O(n)$ time.
\end{theorem}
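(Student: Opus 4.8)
The plan is to prove correctness of Algorithm~\ref{alg:prefexact2sri} component-by-component, exploiting the fact that when all preference lists have length at most~$2$, the acceptability graph $G$ of $I$ is a disjoint union of paths and cycles, and that the objective functions of both {\sc Deviator-sri} and {\sc Deviator-BA-sri} decompose additively over the connected components of $G$: a blocking pair lies entirely within one component, and so does a blocking deviator. Hence it suffices to show that on each component $C$ the algorithm picks a matching minimising, respectively, $\vert\bigcup_{a_i\in D\cap A(C)}bp_{a_i}(\cdot)\vert$ and $\vert\{a_i\in D\cap A(C) : bp_{a_i}(\cdot)\neq\varnothing\}\vert$, and moreover that a single matching attains both minima simultaneously (so one run of the algorithm serves both problems). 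The running time is immediate: each component is handled by one Gale--Shapley or Irving call on a graph of maximum degree~$2$, which is linear in the component size, so the total is $O(n)$.

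For path components and even cycles, the induced subinstance $I|_{A(C)}$ is bipartite (a path is always bipartite; an even cycle is bipartite), so it is an {\sc smi} instance and Gale--Shapley returns a stable matching $M_C$ with $bp(M_C)=\varnothing$; this is trivially optimal for both objectives, contributing $0$ in each case. For an odd cycle $C$, first note that any matching of $I|_{A(C)}$ leaves at least one agent of $C$ unmatched, since $C$ has odd order. If Irving's algorithm returns a stable matching on $I|_{A(C)}$, we use it and again contribute $0$, which is optimal. The substantive case is when $I|_{A(C)}$ is unsolvable. Here I would argue two things. First, a lower bound: since no matching of the odd cycle is stable, every matching $M$ of $I|_{A(C)}$ has $bp(M)\neq\varnothing$; I would show more precisely that if the unmatched agent is $a$, then a blocking pair must exist among the two neighbours-structure around $a$ --- essentially, walking around the cycle from $a$, the ``rotation'' forced by leaving $a$ out creates at least one blocking pair, and crucially every blocking pair of such a near-perfect matching of the cycle involves the edge incident to the unmatched vertex in a controlled way. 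The key structural claim to pin down is: in an odd cycle with a single unmatched vertex $a_{r_v}$, the set of blocking pairs depends only on which vertex is left out, and it is empty of deviators precisely when $a_{r_v}$ can be chosen so that the unique induced blocking pair (if any) avoids $D$; and even when it cannot be avoided entirely, leaving out a vertex that is the first choice of some conformist pushes the blocking structure onto conformist--conformist or conformist-only agents.

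Concretely, the three branches of the odd-cycle case correspond to a priority ordering for which agent to leave unmatched: (i) if some conformist $a_{r_u}$ has a conformist $a_{r_v}$ as first choice, leaving $a_{r_v}$ unmatched and matching the rest around the cycle makes $a_{r_u}$'s only possible ``complaint'' be toward the now-unmatched $a_{r_v}$, which is fine since $a_{r_u}$ got its top choice's slot filled in the matched direction --- I would verify that with this choice the induced blocking pair (if it exists at all) consists only of conformists, so the deviator objective is $0$, hence optimal; (ii) failing that, if a conformist $a_{r_u}$ has a \emph{deviator} $a_{r_v}$ as first choice, leaving $a_{r_v}$ unmatched still removes $a_{r_v}$'s incentive structurally (an unmatched agent whose top choice got matched appropriately) --- here I would check that the resulting blocking deviators number at most the unavoidable minimum, and that this matches a lower bound showing no matching can do better; (iii) otherwise (no conformist has any neighbour as first choice, i.e.\ first-choice edges all emanate from deviators), any choice of unmatched vertex is symmetric in the relevant sense and leaving out $a_{r_k}$ is as good as any. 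The main obstacle will be case~(ii) and the accompanying lower bound: I must show that when the unsolvable odd cycle is such that every first-choice edge has a deviator as its tail, the minimum number of blocking deviators (and blocking-pair count restricted to $D$) is exactly what branch (iii) achieves, which requires a careful case analysis of how the single ``defect'' (unmatched vertex) propagates blocking pairs around the cycle and a matching argument that no placement of the defect does better. I would also need to confirm that the matching $\{\{a_{r_{v+2x-1}},a_{r_{v+2x}}\} : 1\le x\le\lfloor k/2\rfloor\}$ written in the algorithm indeed leaves exactly $a_{r_v}$ unmatched and uses only edges of the cycle (indices taken cyclically), which is a routine check.
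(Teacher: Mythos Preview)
Your high-level plan matches the paper's: decompose by component, handle bipartite components via Gale--Shapley, and for unsolvable odd cycles choose which vertex to leave unmatched according to the three-way priority. The gap is that you never state the structural fact that drives the entire odd-cycle analysis: an \emph{unsolvable} odd cycle on degree-$2$ lists is necessarily \emph{consistently oriented}, i.e.\ (up to reversal) $a_{r_{i+1}}\succ_{r_i}a_{r_{i-1}}$ for every $i$. The paper invokes this explicitly (citing Tan). With it in hand, the near-perfect matching that leaves $a_{r_v}$ unmatched has \emph{exactly one} blocking pair, namely $\{a_{r_{v-1}},a_{r_v}\}$, and the three branches become transparent: branch~(i) picks $v$ so that both $a_{r_{v-1}},a_{r_v}\notin D$ (cost $0$); if (i) fails then every oriented edge of the cycle has a deviator endpoint, so every blocking pair of any matching touches $D$, which is the lower bound of $1$ that branch~(ii) attains; and branch~(iii) fires precisely when $A(C)\subseteq D$. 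Without the orientation fact you cannot conclude that a single well-chosen unmatched vertex yields exactly one blocking pair, nor run the clean lower-bound argument you acknowledge is needed for~(ii) and~(iii).

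Your informal case descriptions also contain errors that the missing lemma would have prevented. In branch~(i), $a_{r_u}$ does \emph{not} get ``its top choice's slot filled'': $a_{r_u}$ is matched to its second choice $a_{r_{u-1}}$ and \emph{does} form the blocking pair $\{a_{r_u},a_{r_v}\}$; the cost is $0$ only because both are conformists. In branch~(ii), leaving the deviator $a_{r_v}$ unmatched does \emph{not} ``remove $a_{r_v}$'s incentive'': $a_{r_v}$ blocks with $a_{r_u}$ and is precisely the one blocking deviator that the construction incurs; the fact that $a_{r_v}$'s top choice $a_{r_{v+1}}$ is satisfied only prevents a \emph{second} blocking pair. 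Your reading of branch~(iii) as ``first-choice edges all emanate from deviators'' happens to be equivalent to $A(C)\subseteq D$ (since every agent's first choice is a neighbour), but obscures that simple characterisation.
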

\begin{proof}
    Let $M$ be the matching returned by Algorithm \ref{alg:prefexact2sri}. First, notice that distinct connected components have -- by definition -- independent preference systems, so no two agents from distinct connected components can form a blocking pair together. Hence, any blocking pairs and any blocking agents admitted by $M$ must stem from our treatment of cycles $C$ involving an odd number of agents, because all other components are bipartite and must admit stable matchings, which we include in $M$.

    Now, let $C=(a_{r_1}\;a_{r_2}\;a_{r_3}\dots a_{r_k})$ be a cycle involving an odd number of agents $k$. If $C$ admits a stable matching, then we find it using Irving's algorithm \cite{irving_sr}. Otherwise, $C$ must be an ordered cycle such that either $a_{r_{i+1}}\succ_{r_i}a_{r_{i-1}}$ or $a_{r_{i-1}}\succ_{r_i}a_{r_{i+1}}$ for all $1\leq i\leq k$ (addition and subtraction taken modulo $k$) \cite{tan91_1}. Without loss of generality, suppose that the former case holds, i.e., that every agent prefers their successor over their predecessor (as indicated by the increasing index within $C$). Then, when we include one of the maximum-cardinality matchings of $C$, i.e., a matching that leaves exactly one $a_{r_j}$ agent unmatched, the minimum number of (two) blocking agents (one blocking pair) is created. Hence, to minimise the number of blocking $D$ agents contained in $C$ (or to minimise the number of blocking pairs involving agents in $D$), we shall include a maximum-cardinality matching that leaves an agent $a_{r_j}\notin D$ unmatched ($a_{r_v}$ in the algorithm), such that also $a_{r_{j-1}}\notin D$ ($a_{r_u}$ in the algorithm), if such a pair of agents exists within $C$, causing no blocking $D$ agents and no blocking pairs involving agents in $D$. Otherwise, if no such pair of agents exists in $C$, we leave any one agent $a_{r_j}\notin D$ unmatched ($a_{r_v}$ in the algorithm), if one exists within $C$, causing one blocking $D$ agent and one blocking pair involving agents in $D$. Otherwise, if all agents within $C$ are also in $D$, then we can leave an arbitrary agent from $C$ unmatched ($a_{r_k}$ in the algorithm), causing two blocking $D$ agents and one blocking pair involving agents in $D$. By construction, this is optimal.
    
    Asymptotically, the Gale-Shapley algorithm and Irving's algorithm require at most linear time in the sum of the preference list lengths of the components \cite{gale_shapley,irving_sr}. As preference lists are of bounded length, this is thus linear in the number of agents of the components. Our treatment of odd-length cycles also requires linear time in the number of agents within the cycle. Each component is dealt with exactly once, and before running the algorithm, we may create three different stacks for the three different types of components by exploring the instance in linear time of $I$. Hence, the algorithm runs in at most linear time in the number of agents, i.e., in $O(n)$ time, as required.
\end{proof}

This positive result implies an almost-tight tractability frontier: {\sc Deviator-sri} and {\sc Deviator-BA-sri} can both be solved very efficiently when preference lists are of length at most 2. However, the problems are {\sf NP-hard} when preference lists are of length at most 5, as previously established in Corollaries \ref{cor:devsrihard} and \ref{cor:devbahard}, respectively.

Now, for the maximum-cardinality variants of our problems, we provide Algorithm \ref{alg:prefexact2smi} for {\sc sri} instances of length at most 2, which returns an optimal solution to both {\sc Deviator-Max-sri} and {\sc Deviator-BA-Max-sri} in polynomial time. Again, we only need to deal with paths and cycles of agents. Furthermore, any path involving an even number of agents admits only one maximum-cardinality matching, so this needs to be included in any solution to these problems. Notice that any cycle involving an odd number of $k$ agents admits $k$ different maximum-cardinality matchings, each of which is uniquely determined by the agent that remains unmatched. We include the one that minimises our target. All cycles involving an even number of agents admit only two different maximum-cardinality matchings, so we include the one in the final matching that minimises our target. Lastly, paths that involve an odd number of agents $k$ require some care: each such path admits $\frac{k+1}{2}$ different maximum-cardinality matchings, so we consider each of them and include the one in the final matching that minimises our target. Below, we argue the correctness and time complexity of this approach.

\begin{algorithm}[!htb]
\renewcommand{\algorithmicrequire}{\textbf{Input:}}
\renewcommand{\algorithmicensure}{\textbf{Output:}}

\begin{algorithmic}[1]
\Require{$I$ : an {\sc sri} instance}
\Require{$D$ : deviator agents}
\Ensure{$M$ : a matching}

\State $M \gets \varnothing$ \Comment{Initialise an empty solution}

\For{each path $P=(a_{r_1}\;a_{r_2}\;a_{r_3}\dots a_{r_k})$ of $I$ involving an even number of $k$ agents}
    \State $M\gets M\cup \{\{a_{r_1},a_{r_2}\},\{a_{r_3},a_{r_4}\},\dots,\{a_{r_{k-1}},a_{r_k}\}\}$ \Comment{Add the only max-card matching}
\EndFor
\For{each path $P=(a_{r_1}\;a_{r_2}\;a_{r_3}\dots a_{r_k})$ of $I$ involving an odd number of $k$ agents}
    \State $M_t\gets \{\{a_{r_{1+2x-1}},a_{r_{1+2x}}\}\;\vert\;1\leq x\leq \lfloor\frac{k}{2}\rfloor\}$ \Comment{Leave $a_{r_1}$ unmatched}
    \For{each $3\leq s\leq k$ in steps of 2}
        \State $M_t'\gets \{\{a_{r_{s+2x-1}},a_{r_{s+2x}}\}\;\vert\;1\leq x\leq \lfloor\frac{k}{2}\rfloor\}$ \Comment{Leave $a_{r_s}$ unmatched}
        \If{$\vert \bigcup_{a_i\in D}bp_i(M_t') \vert < \vert \bigcup_{a_i\in D}bp_i(M_t) \vert$} \Comment{Replace with $\vert ba(M_t')\cap D\vert < \vert ba(M_t)\cap D\vert$ to minimise the number of blocking $D$ agents instead}
            \State $M_t\gets M_t'$
        \EndIf
    \EndFor
    \State $M\gets M\cup M_t$
\EndFor
\For{each cycle $C=(a_{r_1}\;a_{r_2}\;a_{r_3}\dots a_{r_k})$ of $I$ involving an even number of $k$ agents}
    \State $M_1\gets \{\{a_{r_1},a_{r_2}\},\{a_{r_3},a_{r_4}\},\dots,\{a_{r_{k-1}},a_{r_k}\}\}$
    \State $M_2\gets \{\{a_{r_1},a_{r_k}\},\{a_{r_2},a_{r_3}\},\dots,\{a_{r_{k-2}},a_{r_{k-1}}\}\}$
    \If{$\vert \bigcup_{a_i\in D}bp_i(M_1) \vert \leq \vert \bigcup_{a_i\in D}bp_i(M_2) \vert$} \Comment{Replace with $\vert ba(M_1)\cap D\vert \leq \vert ba(M_2)\cap D\vert$ to minimise the number of blocking $D$ agents instead}
        \State $M\gets M \cup M_1$
    \Else
        \State $M\gets M \cup M_2$
    \EndIf
\EndFor
\For{each cycle $C=(a_{r_1}\;a_{r_2}\;a_{r_3}\dots a_{r_k})$ of $I$ involving an odd number of $k$ agents}
    \State $M_t\gets \{\{a_{r_2},a_{r_3}\},\{a_{r_4},a_{r_5}\},\dots,\{a_{r_{k-1}},a_{r_k}\}\}$
    \For{each $2\leq s\leq k$}
        \State $M_t'\gets \{\{a_{r_{s+x+1}},a_{r_{s+x+2}}\}\;\vert\; 0\leq x\leq k-2 \text{ in steps of 2}\}$
        \If{$\vert \bigcup_{a_i\in D}bp_i(M_t') \vert < \vert \bigcup_{a_i\in D}bp_i(M_t) \vert$} \Comment{Replace with $\vert ba(M_t')\cap D\vert < \vert ba(M_t)\cap D\vert$ to minimise the number of blocking $D$ agents instead}
            \State $M_t\gets M_t'$
        \EndIf
    \EndFor
    \State $M\gets M\cup M_t$
\EndFor

\State\Return{$M$}

\end{algorithmic}
\caption{Exact algorithm for {\sc Max-sri} problems with lists of length at most 2}
\label{alg:prefexact2smi}
\end{algorithm}

\begin{theorem}
\label{thm:prefexact2smi}
    Let $I=(A,\succ)$ be an {\sc sri} instance with $n$ agents and let $D\subseteq A$ be a set of deviators. If all preference lists of $I$ are of length at most 2, then Algorithm \ref{alg:prefexact2smi} computes an optimal solution to {\sc Deviator-Max-sri} and {\sc Deviator-BA-Max-sri} in $O(n^2)$ time.
\end{theorem}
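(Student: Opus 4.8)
The plan is to reduce the problem to an independent, per-component computation, exactly as in the proof of Theorem~\ref{thm:prefexact2sri}, and then argue that within each component the algorithm performs an exhaustive search over the relevant maximum-cardinality matchings. Since every preference list has length at most $2$, the acceptability graph $G$ of $I$ has maximum degree at most $2$, so each of its connected components is a path or a cycle. Two observations make the component decomposition work. First, any blocking pair consists of two mutually acceptable agents, hence of two agents adjacent in $G$, hence of two agents in the same component; moreover, whether such a pair blocks $M$ depends only on the restriction of $M$ to that component. Therefore both objectives, $\vert\bigcup_{a_i\in D}bp_{a_i}(M)\vert$ and $\vert ba(M)\cap D\vert$, are sums of per-component contributions. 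Second, because the components are vertex-disjoint, $M\in\mathcal M^+$ if and only if the restriction of $M$ to each component $C$ is a maximum-cardinality matching of $I\vert_{A(C)}$. Combining the two, it suffices to prove that, for each component, Algorithm~\ref{alg:prefexact2smi} selects a maximum-cardinality matching of that component minimising the corresponding per-component objective.

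Then I would check, component type by component type, that the loops enumerate \emph{all} maximum-cardinality matchings of a component (or, when only two exist, both of them). An even path admits a unique perfect matching, which the first loop adds. An odd path on $k$ agents admits exactly $\frac{k+1}{2}$ maximum-cardinality matchings, one for each odd-indexed agent left unmatched; the inner loop over $s$ constructs precisely these. An even cycle admits exactly the two perfect matchings $M_1$ and $M_2$ tested by the algorithm. An odd cycle on $k$ agents admits exactly $k$ maximum-cardinality matchings, one for each agent left unmatched, enumerated (with indices taken cyclically) by the loop over $s$. Since the algorithm evaluates the appropriate objective on every candidate it generates and keeps a minimiser — breaking ties arbitrarily, which is harmless — it outputs a per-component optimum; together with the decomposition above, $M$ is globally optimal for {\sc Deviator-Max-sri}, and for {\sc Deviator-BA-Max-sri} after substituting the commented comparisons.

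For the running time: splitting $G$ into its path and cycle components and classifying them by parity takes $O(n)$ time. A component on $k_C$ agents has at most $k_C$ edges, so at most $k_C$ candidate blocking pairs, each testable in $O(1)$ time since lists have length at most $2$; hence the per-component objective of one candidate matching is computed in $O(k_C)$ time. Each such component yields $O(k_C)$ candidate maximum-cardinality matchings, so processing it costs $O(k_C^2)$ time, and $\sum_C k_C^2\le\bigl(\sum_C k_C\bigr)^2=n^2$, giving the claimed $O(n^2)$ bound.

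The step I expect to require the most care is the correctness of the per-component enumeration: one must confirm that the terse index arithmetic in the odd-path and odd-cycle loops (with cyclic addition in the cycle case) really ranges over every maximum-cardinality matching of the component, and that conversely every maximum-cardinality matching of $I$ arises as a union of the per-component choices. Once the two decomposition facts — that the objectives are additive over components and that $\mathcal M^+$ factors through the components — are in place, optimality is immediate from the exhaustive per-component search, and the remainder is bookkeeping.
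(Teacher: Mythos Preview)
Your proposal is correct and follows essentially the same approach as the paper: decompose into path and cycle components, argue that the objective and the maximum-cardinality property both factor through components, and verify that each loop enumerates all maximum-cardinality matchings of its component type. Your running-time argument via $\sum_C k_C^2 \le n^2$ is slightly cleaner than the paper's, which simply bounds the inner loop by $O(n)$ and the number of components by $O(n)$, but the structure is the same.
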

\begin{proof}
    Let $M$ be the matching returned by Algorithm \ref{alg:prefexact2smi}. Again, notice that distinct connected components have -- by definition -- independent preference systems, so no two agents from distinct connected components can form a blocking pair together. The correctness of our treatment of paths involving $k$ agents follows easily: when $k$ is even, then the matches we add are the only maximum-cardinality matching of these agents, and any other way to match agents in such components leads to a matching of strictly smaller size. Furthermore, when $k$ is odd, the matchings $M_s= \{\{a_{r_{s+2x-1}},a_{r_{s+2x}}\}\;\vert\;1\leq x\leq \lfloor\frac{k}{2}\rfloor\}$ (where addition and subtraction are taken modulo $k$) that leave one $a_{r_s}$ agent (with an odd index $s$) unmatched are all the maximum-cardinality matchings of such components. Thus, we simply include the matching that minimises the number of blocking pairs involving $D$ agents (or the number of blocking $D$ agents). 
    
    It remains to argue why the way we deal with cycles is correct. Notice that each cycle $C=(a_{r_1}\;a_{r_2}\;a_{r_3}\dots a_{r_k})$ with an even number of agents $k$ admits only two different maximum-cardinality matchings of the agents in $C$. We consider both matchings and include the one that minimises the number of blocking pairs involving $D$ agents (or the number of blocking $D$ agents), so the correctness of this approach is easy to see. Notice, furthermore, that each such cycle with an odd number of agents $k$ admits $k$ different maximum-cardinality matchings, which are precisely maximum-cardinality matchings of the even-length paths resulting from the deletion of any one agent. We enumerate all such matchings and include the one that minimises the number of blocking pairs involving $D$ agents (or the number of blocking $D$ agents), so the correctness of this approach is easy to see too. Thus, the resulting matching we return is a maximum-cardinality matching of $I$ and minimises the total number of blocking pairs involving $D$ agents (or the total number of blocking $D$ agents).      

    Regarding time complexity, note that each component is dealt with exactly once. Before running the algorithm, we can create three different stacks for the three different types of components by exploring the instance in linear time. Now, we do have a nested for-loop for the case that deals with paths involving an odd number of agents and for the case that deals with cycles involving an odd number of agents. Asymptotically, each execution of the inner loop requires at most linear time in the length of the path or cycle, i.e., linear in the number of agents. Furthermore, we have at most a linear number of components in the number of agents. The quadratic upper bound follows as required.
\end{proof}

This positive result implies a tight tractability frontier: {\sc Deviator-Max-sri} and {\sc Deviator-BA-Max-sri} can both be solved very efficiently when preference lists are of length at most 2. However, the problems are {\sf NP-hard} even in the bipartite case and when preference lists are of length at most 3, as previously established in Corollaries \ref{cor:devmaxsmihard} and \ref{cor:devbahard}, respectively.

\section{Resolving Further Open Cases}
\label{sec:furtheropen}

When aggregating an overview of known and newly established complexity results in Table \ref{table:results} (to follow in Section \ref{sec:conclusion}), we identified two further cases that, to the best of our knowledge, have not yet been detailed explicitly in existing literature. Both concern the minimisation of blocking agents in {\sc sri} across the whole instance, i.e., the {\sc MinBA-AlmostStable-sri} problem, which is also the special case of {\sc Deviator-BA-sri} where $D=A$. It is known that {\sc MinBA-AlmostStable-sri} is {\sf NP-hard} when preference lists are of length 5 \cite{chen17}.

The first result shows that when preference lists are very short, then {\sc MinBA-AlmostStable-sri} is solvable in polynomial time. This follows immediately from our algorithm in Section \ref{sec:prefshort}.

\begin{theorem}
\label{thm:minbaSRIshort}
    {\sc MinBA-AlmostStable-sri} can be solved in $O(n)$ time when preference lists are of length at most 2 (where $n$ is the number of agents).
\end{theorem}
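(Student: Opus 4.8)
The plan is to derive the statement as an immediate consequence of Theorem~\ref{thm:prefexact2sri}. First I would observe that {\sc MinBA-AlmostStable-sri} is precisely {\sc Deviator-BA-sri} restricted to instances with $D = A$: when every agent is a deviator, the objective $\vert\{a_i\in D\;\vert\;bp_{a_i}(M)\neq\varnothing\}\vert$ becomes $\vert ba(M)\vert$, the number of blocking agents of $M$ across the whole instance. Hence Algorithm~\ref{alg:prefexact2sri}, invoked with $D = A$, already returns an optimal solution, and Theorem~\ref{thm:prefexact2sri} guarantees it does so in $O(n)$ time whenever all preference lists have length at most $2$. This is exactly the reduction promised by the sentence ``This follows immediately from our algorithm in Section~\ref{sec:prefshort}.''

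For completeness I would briefly trace how Algorithm~\ref{alg:prefexact2sri} behaves in this degenerate regime. With $D = A$ the sets $A(C)\setminus D$ appearing in the odd-cycle branch are empty, so the first two conditionals never trigger and control falls through to the final branch, which leaves one (arbitrary) agent of each unsolvable odd cycle unmatched. Then I would invoke the per-component analysis already carried out in the proof of Theorem~\ref{thm:prefexact2sri}: path components and even cycles are bipartite and contribute stable sub-matchings with no blocking agents; odd cycles that admit a stable matching contribute one via Irving's algorithm; and an odd cycle with no stable matching is necessarily an ``odd party'' cycle in which every maximum-cardinality matching leaves exactly one agent unmatched, creating exactly one blocking pair and hence exactly two blocking agents, which is the unavoidable minimum for such a component. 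Since agents in distinct connected components can never block together, summing the per-component optima shows the returned matching minimises $\vert ba(M)\vert$ globally.

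The running-time claim needs no new argument: the three component types can be separated in linear time, each component is processed exactly once, and the Gale--Shapley subroutine, Irving's algorithm, and the odd-cycle handling all run in time linear in the number of agents of the component because preference lists have bounded length; summing gives $O(n)$. I do not expect any real obstacle here — the only points requiring a moment's care are confirming that the ``all agents are deviators'' case is correctly absorbed by the catch-all branch of the algorithm and that the optimality statement of Theorem~\ref{thm:prefexact2sri} specialises, as it does, to the blocking-agent objective under $D = A$.
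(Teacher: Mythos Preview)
Your proposal is correct and matches the paper's own proof essentially verbatim: the paper simply observes that {\sc MinBA-AlmostStable-sri} is the special case of {\sc Deviator-BA-sri} with $D=A$ and invokes Theorem~\ref{thm:prefexact2sri} to conclude. Your additional ``for completeness'' trace of how Algorithm~\ref{alg:prefexact2sri} degenerates when $D=A$ is extra detail the paper omits, but it is accurate and does no harm.
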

\begin{proof}
    We already noted above that {\sc MinBA-AlmostStable-sri} is the special case of the problem {\sc Deviator-BA-sri} where all agents are part of the deviator set. Therefore, due to Theorem \ref{thm:prefexact2sri}, Algorithm \ref{alg:prefexact2sri} solves {\sc MinBA-AlmostStable-sri} to optimality in, asymptotically, at most linear time in the number of agents.
\end{proof}

The second result shows that {\sc MinBA-AlmostStable-sri} remains intractable even when preference lists are complete.

\begin{theorem}
\label{thm:minbaSRIcomplete}
    {\sc MinBA-AlmostStable-sri} is {\sf NP-hard} even when all preference lists are complete.
\end{theorem}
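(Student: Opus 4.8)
The plan is to reduce from {\sc MinBA-AlmostStable-sri} restricted to preference lists of length at most $5$, which is {\sf NP-hard} by \citet{chen17}, and to \emph{complete} the preference lists without changing the minimum number of blocking agents by more than a fixed, instance-independent amount. The starting point is the same padding idea used in Claim~\ref{claim:psrc}: appending previously unranked agents to the tails of the preference lists can never create a blocking pair between two agents that are both matched to partners on their original lists, because every newly added entry sits strictly below every original entry. The genuine obstacle is that, once preference lists are complete, the matching is essentially forced to be near-perfect -- any two simultaneously unmatched agents form a blocking pair with one another, and an agent matched across one of the freshly added (padding) edges tends to block with the top of its original list -- so the reduction must be arranged so that the ``intended'' matchings, namely those projecting onto good matchings of the source instance, stay (near-)optimal, while every deviation from this shape is charged at least one extra blocking agent.

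Concretely, I would proceed as follows. (i) Normalise the source instance $I=(A,\succ)$ so that some matching attaining the minimum number of blocking agents leaves at most one agent unmatched; this is achieved by attaching to every agent $a_i$ a short private ``fallback'' gadget (a constant number of new agents ranking $a_i$ near the top and ranked by $a_i$ immediately below its genuine list), so that any agent not matched within $I$ can instead be matched inside its fallback gadget at no cost in blocking agents, and one checks that this changes the optimum by a known additive constant only. (ii) Pad every remaining non-edge to obtain a complete-preference instance $I'$, placing all new entries strictly below the fallback entries, which in turn sit below the genuine entries. (iii) Prove a \emph{cleanness lemma}: if $M'$ is a matching of $I'$ with at most $k$ blocking agents, for the threshold $k$ obtained from the hard instances of \citet{chen17} shifted by the additive constants from step~(i), then $M'$ uses no padding edge and matches every original agent either on its genuine list or inside its fallback gadget. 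The proof is the familiar case analysis: a padding edge at $a_i$ forces the pair of $a_i$ with the best entry $b$ of its genuine list to block unless $b$ is already ``satisfied'', which merely pushes the same argument one step along the list, and breaking a rigid pair inside a fallback gadget immediately yields a blocking pair among fresh agents. (iv) Conclude that such an $M'$ restricts to a matching of the normalised source instance with exactly the same number of blocking agents, and conversely that every matching of the normalised source instance lifts to $I'$ without gaining blocking agents; hence the optimum is preserved up to the fixed additive constant, which completes the reduction.

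I expect step~(iii), the cleanness lemma, to be the main obstacle: the interplay between complete lists, the fallback gadgets, and leftover ``conformist-like'' agents has to be controlled tightly enough that the padding cannot be exploited to reduce the blocking-agent count. If the fallback bookkeeping becomes unwieldy, an alternative line is to reduce instead from {\sc MinBP-AlmostStable-sri} on complete lists (\citet{abraham06}), replacing each agent by a wide gadget that routes every one of its potential blocking pairs through a private pair of fresh agents, so that distinct blocking pairs of the source instance become blocking pairs on disjoint agent sets and the number of blocking agents of the new instance equals twice the number of blocking pairs of the old one; there the difficulty shifts to designing that routing gadget with complete preferences, but the near-perfectness subtleties are avoided.
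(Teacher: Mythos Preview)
Your overall strategy --- attach per-agent fallback gadgets, pad the remaining non-edges to obtain complete lists, and then argue a cleanness lemma forcing every good matching to avoid the padding at original agents --- is exactly the paper's approach. The paper's reduction, however, turns on a parameter choice that your proposal gets wrong: the number of dummy (fallback) agents per original agent must be $k$, not an instance-independent constant.

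Here is why a constant-size gadget breaks your cleanness lemma. If each $a_i$ carries only $c=O(1)$ fallback agents, then matching $a_i$ along a padding edge produces at most $c+1$ new blocking agents (the fallback agents that now prefer $a_i$, plus $a_i$ itself; possibly fewer, since some fallback agents may be content with each other). When $k>c$, several original agents can be matched via padding while the total blocking-agent count stays at or below $k$. Restricting such an $M'$ to original edges leaves those agents unmatched in $I$, where they may enter many fresh blocking pairs, so the restriction need not have $\le k$ blocking agents. Your sketched case analysis (``a padding edge at $a_i$ forces $a_i$ to block with its best genuine entry $b$ unless $b$ is satisfied, which pushes the argument along'') does not close this gap: if every genuine neighbour of $a_i$ is already satisfied, the only penalty is the $c$ fallback agents, and the chain terminates.

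The paper makes the lemma trivial by assigning each $a_i$ exactly $k$ dummies $a_i^1,\dots,a_i^k$, each ranking $a_i$ first, with $a_i$ ranking them immediately below its genuine list. If $a_i$ is matched strictly below $a_i^k$ in the completed instance, then all $k$ dummies together with $a_i$ are blocking agents, giving $k+1>k$ at once; hence every original agent is matched on its genuine list or to one of its own dummies, and the restriction to $I$ is lossless. For the forward direction, unmatched original agents are paired with their first dummy, and the leftover dummies are matched greedily along a single global order $\succ_G$; a short check shows this introduces no blocking pairs among dummies. With this $k$-dependent dummy count, your steps~(iii)--(iv) collapse to a two-paragraph argument, and the alternative {\sc MinBP} routing-gadget line is unnecessary.
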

\begin{proof}
    Suppose we could minimise the number of blocking agents in polynomial time. Then we could decide in polynomial time whether, for a given {\sc sri} instance $I$ with complete preference lists (which would be more appropriately denoted by {\sc src}) and a non-negative integer $k$, $I$ admits a matching with at most $k$ blocking agents. Note that this problem is {\sf NP-complete} when preference lists are permitted to be incomplete, as \citet{chen17} showed. We will denote the general problem (not requiring complete preference lists) by {\sc $k$-AlmostStable-sri} and the restricted problem requiring complete preference lists by {\sc $k$-AlmostStable-src}. Below, we give a simple reduction from the former problem to the latter.

    Let $(I=(A,\succ),k)$ be an instance of {\sc $k$-AlmostStable-sri} with $n$ agents. We construct an instance $(I'=(A',\succ'),k)$ of {\sc $k$-AlmostStable-src} as follows. Start with $I'=I$, i.e., every agent in $A$ is also in $A'$ and the preferences $\succ'$ are initially the same as in $\succ$. Then, for every agent $a_i\in A$, create $k$ additional dummy agents $a_i^1,a_i^2,\dots,a_i^{k}$ in $A'$. Furthermore, append these agents in increasing order of superscript to the end of $\succ_{a_i}'$, and let $\succ_{a_i^s}'$ (for all $1\leq s\leq k$) rank just $a_i$ to start with. After all dummy agents have been introduced and the modifications to $\succ'$ have been made, complete all preference lists of $\succ'$ by appending all previously unranked agents to the end of the preference list according to the global complete ranking $\succ_G$ on the agents in $A'$ such that $a_1\succ_G a_1^1\succ_G\dots\succ_G a_1^{k}\succ_G a_2\succ_G a_2^1\succ_G\dots\succ_G a_2^{k}\succ_G a_3\succ_G a_3^1\succ_G\dots \succ_G a_n^{k}$.

    To illustrate this, consider the following example with three agents $A=\{a_1,a_2,a_3\}$, and let $k=1$. Then, because $k=1$, we first create three dummy agents, one dummy agent per original agent in $A$), and let $A'=A\cup\{a_1^1,a_2^1,a_3^1\}$. The global ranking $\succ_G$ is $a_1\succ_G a_1^1 \succ_G a_2\succ_G a_2^1 \succ_G a_3\succ_G a_3^1$. Now suppose $a_1$'s preferences $\succ_{a_1}$ (in $\succ$) consist of just $a_3$. Then our final construction of $I'$ results in the preference list $\succ'_{a_1}$ (in $\succ'$) such that $a_3\succ'_{a_1} a_1^1 \succ'_{a_1} a_2\succ'_{a_1} a_2^1\succ'_{a_1} a_3^1$. If $a_1$'s preferences $\succ_{a_1}$ (in $\succ$) were $a_3\succ_{a_1}a_2$ instead, then then our final construction of $I'$ would result in the preference list $\succ'_{a_1}$ (in $\succ'$) such that $a_3\succ'_{a_1} a_2\succ'_{a_1} a_1^1 \succ'_{a_1} a_2^1\succ'_{a_1} a_3^1$.
    
    Returning to the abstract setting, clearly $\vert A'\vert=(k+1)\vert A\vert$ and the preference lists have a total of $\Theta(\vert A'\vert^2)$ entries, so the construction is polynomial in our original instance $(I=(A,\succ),k)$. We will now establish a parameter-preserving correspondence between the two instances $I$ and $I'$ (with respect to parameter $k$) in the two claims below.

    \begin{claim}
    \label{claim:inccompba}
        If $I$ admits a matching $M$ such that $\vert ba^I(M)\vert \leq k$,\footnote{We use the notation $ ba^I(M)$ for clarity to indicate that we mean the set of blocking agents that $M$ admits with respect to instance $I$.} then $I'$ admits a matching $M'$ such that $\vert ba^{I'}(M')\vert \leq k$.
    \end{claim}
        \begin{proof}[Proof of Claim \ref{claim:inccompba}]
        \renewcommand{\qedsymbol}{$\blacksquare$}
        We prove this by constructing $M'$ explicitly from $M$, starting with $M'=M$. Now, for every unmatched agent $a_i\in A$, we can add $\{a_i,a_i^1\}$ to $M'$. Now, for the remaining unmatched dummy agents $a_j^s\in A'$, we match according to $\succ_G$: add the two best-ranked unmatched agents as a match to $M'$, then the third and fourth best-ranked unmatched agents, and so on. If the number of unmatched agents is odd, leave the last such agent unmatched.

        We will now prove that $bp^{I'}(M')\subseteq bp^I(M)$, in which case $ba^{I'}(M')\subseteq ba^I(M)$ and thus $\vert ba^{I'}(M')\vert\leq\vert ba^I(M)\vert \leq k$. Let $\{a_r,a_s\}\in bp^{I'}(M')$ (where either or both agents may be either dummy or non-dummy agents in $A'$). Then, by definition, both $a_s\succ_{a_r}' M'(a_r)$ and $a_r\succ_{a_s}' M'(a_s)$. Notice that no agent in $A$ can be in more blocking pairs in $M'$ than in $M$ because they either remain with the same partner or get their first choice in the extended preference list. Thus, if either $a_r\in A$ or $a_s\in A$, then both $a_r,a_s\in A$ and $\{a_r,a_s\}\in bp^{I}(M)$. Now suppose that both $a_r,a_s\in A'\setminus A$, i.e., both agents are dummy agents. Clearly, no dummy agent that is matched to their first choice is blocking, so both agents are matched according to $\succ_G$. Without loss of generality, suppose that $a_r\succ_G a_s$. Then, by construction of $M'$, either $a_r$ has a partner in $M'$ that is strictly better than $a_s$ (according to $\succ_G$ and thus also according to $a_s$' preferences) or $a_r$ is matched to $a_s$ in $M'$ (as we iteratively match the best two remaining unmatched agents according to $\succ_G$). Thus, $M'(a_r)\succ_G a_s$ or $M'(a_r)= a_s$, but, by construction of $\succ'$, then either $M'(a_r)\succ_{a_r}' a_s$ or $M'(a_r)= a_s$, contradicting that $a_s\succ_{a_r}' M'(a_r)$. Thus, no two dummy agents can form a blocking pair with respect to $I'$ in $M'$. Therefore, $bp^{I'}(M')=bp^I(M)$ and so $\vert ba^{I'}(M')\vert \leq k$ as required.
    \end{proof}

    \begin{claim}
    \label{claim:compincba}
        If $I'$ admits a matching $M'$ such that $\vert ba^I(M')\vert \leq k$, then $I$ admits a matching $M$ such that $\vert ba^{I'}(M)\vert \leq k$.
    \end{claim}
        \begin{proof}[Proof of Claim \ref{claim:compincba}]
        \renewcommand{\qedsymbol}{$\blacksquare$}
        Clearly, any agent $a_i\in A$ has a partner at least as good as $a_i^k$ in $M'$ according to $\succ_{a_i}'$, as otherwise every $a_i^s$ (for $1\leq s\leq k$) would be blocking with $a_i$, in which case $\vert ba^{I'}(M')\vert \geq k+1$ (all $k$ dummy agents block and $a_i$ blocks with them), contradicting our assumption. Thus, we can simply construct a matching $M$ of $I$ in which, for every agent $a_i\in A$, if $M'(a_i)\in A$, then $M(a_i)=M'(a_i)$, and $M(a_i)=a_i$ (i.e., we leave $a_i$ unmatched) otherwise. It is easy to see that $\vert ba^I(M)\vert=\vert ba^{I'}(M')\cap A\vert \leq \vert ba^{I'}(M')\vert\leq k$.
    \end{proof}

    The stated result follows immediately.
\end{proof}

\section{Conclusion}
\label{sec:conclusion}

\begin{table*}[!tbh]
    \centering
    \footnotesize
    \setlength{\aboverulesep}{0pt}
    \setlength{\belowrulesep}{0pt}
    \renewcommand{\arraystretch}{1.1}
    \renewcommand\cellgape{\Gape[2pt]}  
    \caption{Overview of complexity results. Here, $d$ denotes the maximum preference list length (also denoted by $d_{\max}$ in other parts of the paper), and CL indicates complete preference lists. Our contributions are coloured in \textcolor{blue}{blue}. The following symbols are used: ${\top}$ (result holds trivially), $\dagger$ (positive result only established for bipartite {\sc smi} instances), $\ddagger$ (negative result holds even for bipartite {\sc smi} instances), $^*$ (we newly introduced these problems).}
    \begin{tabular}{c | c | c}
        \toprule
        & {\sc sri} & {\sc Max-sri}  \\
        \midrule
        \makecell{{\sc MinBP}\\($\kappa=\min \vert bp(M)\vert$)} 
                   & \makecell{{\sf P} ($d\leq 2$) \cite{biro12}, {\sf XP} ($\kappa$) \cite{abraham06} \\ 
                   {\sf NP-h} ($d\leq 3$ and CL) \cite{abraham06,biro12}, {\sf W[1]-h} ($\kappa$) \cite{chen17}}
                   & \makecell{{\sf P}$^\dagger$ ($d\leq 2$ and CL) \cite{biro_sm_10}, {\sf XP}$^\dagger$ ($\kappa$) \cite{biro_sm_10}\\
                   {\sf NP-h}$^\ddagger$ ($d\leq 3$) \cite{biro_sm_10}} \\
        \midrule
        \makecell{{\sc MinBA}\\($\kappa=\min \vert ba(M)\vert$)} 
                   & \makecell{\textcolor{blue}{{\sf P} ($d\leq 2$) [T\ref{thm:minbaSRIshort}]}, {\sf XP} ($\kappa$) \cite{chen17}\\
                   {\sf NP-h} ($d\leq 5$ \cite{chen17} \textcolor{blue}{and CL [T\ref{thm:minbaSRIcomplete}]}), {\sf W[1]-h} ($\kappa$) \cite{chen17}}
                   & \makecell{{\sf P}$^\dagger$ ($d\leq 2$ and CL) \cite{biro_sm_10}, {\sf XP}$^\dagger$ ($\kappa$) \cite{biro_sm_10}\\
                   {\sf NP-h}$^\ddagger$ ($d\leq 3$) \cite{biro_sm_10}} \\
        \midrule
        \makecell{{\sc Minimax}\\($\kappa=\min\max \vert bp_i(M)\vert$)} 
                   & {\makecell{{\sf P} ($d\leq 2$) \cite{glitznermanloveminmax}\\
                   {\sf NP-h} ($\kappa=1$, $d\leq 10$ and CL) \cite{glitznermanloveminmax}}}
                   & {\makecell{{\sf P}$^\dagger$ ($d\leq 2$ and CL) [\cite{glitznermanloveminmax},$\top$]\\
                   {\sf NP-h}$^\ddagger$ ($\kappa=1$, $d\leq 3$) \cite{glitznermanloveminmax}}} \\
        \midrule
        \makecell{{\sc Deviator}$^*$\\($\kappa_1=\vert D\vert$,\\$\kappa_2=\min \vert \bigcup_{a_i\in D}bp_i(M)\vert$)} 
                   & \textcolor{blue}{\makecell{{\sf P} ($d\leq 2$) [T\ref{thm:prefexact2sri}]\\ {\sf FPT} ($(d,\kappa_1)$) [C\ref{cor:psrifpt}]\\ {\sf XP} ($(\kappa_1,\kappa_2)$) [C\ref{cor:psrifpt}]\\
                   {\sf NP-h} ($\kappa_2=0$, $d\leq 5$ and CL) [T\ref{thm:prefsrihard}]}}
                   & \textcolor{blue}{\makecell{{\sf P} ($d\leq 2$) [T\ref{thm:prefexact2smi}]\\ {\sf FPT} ($(d,\kappa_1)$) [C\ref{cor:preffptxp}]\\ {\sf XP} ($(\kappa_1,\kappa_2)$) [C\ref{cor:preffptxp}]\\
                   {\sf NP-h}$^\ddagger$ ($\kappa_2=0$, $d\leq 3$) [T\ref{thm:preferentialbip}]}} \\
        \midrule
        \makecell{{\sc Deviator-BA}$^*$\\($\kappa_1=\vert D\vert$,\\$\kappa_2=\min \vert ba(M)\cap D\vert$)} 
                   & \textcolor{blue}{\makecell{{\sf P} ($d\leq 2$) [T\ref{thm:prefexact2sri}]\\ {\sf FPT} ($(d,\kappa_1)$) [C\ref{cor:prefba}]\\ {\sf XP} ($(\kappa_1,\kappa_2)$) [C\ref{cor:prefba}]\\
                   {\sf NP-h} ($\kappa_2=0$, $d\leq 5$ and CL) [T\ref{thm:prefBAhard}]}}
                   & \textcolor{blue}{\makecell{{\sf P} ($d\leq 2$) [T\ref{thm:prefexact2smi}]\\ {\sf FPT} ($(d,\kappa_1)$) [C\ref{cor:prefba}]\\ {\sf XP} ($(\kappa_1,\kappa_2)$) [C\ref{cor:prefba}]\\
                   {\sf NP-h}$^\ddagger$ ($\kappa_2=0$, $d\leq 3$) [T\ref{thm:prefBAhard}]}} \\
        \bottomrule
    \end{tabular}
    \label{table:results}
\end{table*}

In this paper, we considered a simple premise: some agents can initiate blocking pair-based deviations, and some agents cannot. We aimed to answer the following question from a computational complexity point of view: Can we find desirable matchings in which no deviator is incentivised to deviate? The short answer is yes, but only under strong assumptions on the input. Our key insights are summarised in Table \ref{table:results} and reveal a sharp contrast between strong intractability and tractable special cases. 

We conclude with a number of open questions. For instance, on the parametrised complexity front, are the problems {\sc Deviator-sri} and {\sc Deviator-Max-sri} in {\sf FPT} with respect to just $\vert D\vert$, rather than the combined parameter $(\vert D\vert, d_{\max})$? We conjecture that the answer is no, as it might be the case that $\vert D\vert\ll d_{\max}$, but our current reductions rely on the fact that $D$ is sufficiently large. It would also be interesting to study approximation algorithms in the setting of deviator stability, although we highlight that strong inapproximability results for almost-stable matchings are known \cite{abraham06,biro_sm_10,chen2025fptapproximabilitystablematchingproblems} and that we proved that even deciding whether there exists a matching where no deviator is blocking is {\sf NP-complete}. Thus, it might be interesting to study algorithms that give guarantees with respect to the number of deviators that are in no or in few blocking pairs, while ignoring the remaining deviators. Lastly, on a more practical front, it would be interesting to design, deploy and study real-world multi-agent systems using deviator-stable matchings -- how do we accurately identify deviators in practice, and are there sufficiently few to render our parametrised algorithms efficient?

\begin{acks}
Frederik Glitzner is supported by a Minerva Scholarship from the School of Computing Science, University of Glasgow. We would like to thank the anonymous AAMAS reviewers for helpful comments and suggestions on an earlier version of this paper.
\end{acks}

\bibliographystyle{ACM-Reference-Format}
\bibliography{papers}

\appendix

\section{Omitted full proofs of {\sf FPT} Algorithms}

\preferentialfpt*
\begin{proof}
    Recall that we aim to find a matching $M_P$ of $I$ such that the agents in $D$ are collectively involved in at most $k$ blocking pairs in $M_P$. Again, we will denote such blocking pairs by $bp(M_P)\vert_{D}$. Clearly, every agent in $a_i\in D$ is either matched to one of the at most $d_{\max}$ many agents on their preference list or remains unmatched. Thus, there are $(d_{\max}+1)^{\vert D\vert}=O(d_{\max}^{\vert D\vert})$ possible combinations of choices of partner, including the possibility of being unmatched, for the agents in $D$. We can immediately discard any combinations of choices that are not matchings, and we denote the remaining set of candidate matchings by $\mathcal{M}_C$. Next, we consider each candidate matching sequentially and aim to extend it to a solution to our problem. Let us fix a candidate matching $M_C\in \mathcal{M}_C$.

    Now, we consider each possible set of blocking pairs involving some agent in $D$ of size at most $k$. There are clearly at most $\sum_{1\leq r\leq k}\binom{\vert D\vert d_{\max}}{r}$ many such sets. We can apply the loose upper bound $\binom{\vert D\vert d_{\max}}{r}\leq (\vert D\vert d_{\max})^r$ and conclude that $\sum_{1\leq r\leq k} (\vert D\vert d_{\max})^r=\frac{(\vert D\vert d_{\max})^{k+1}-1}{\vert D\vert d_{\max}-1}$ by the geometric series. Furthermore, the bound $\frac{(\vert D\vert d_{\max})^{k+1}-1}{\vert D\vert d_{\max}-1}\leq \frac{(\vert D\vert d_{\max})^{k+1}}{\vert D\vert d_{\max}-1}\leq 2(\vert D\vert d_{\max})^{k}=O((\vert D\vert d_{\max})^{k})$ applies. Let us fix a candidate set of blocking pairs $B$. Clearly, no pair of agents can be simultaneously matched and blocking, so if $M_C\cap B\neq \varnothing$, then we reject this configuration $(M_C,B)$.

    Then, for every agent $a_i\in D$ and every agent $a_r\in A$ such that $a_r\succ_i M_C(a_i)$ and $\{a_i,a_r\}\notin B$, agent $a_r$ must end up with a partner better than $a_i$ (according to $\succ_r$). Thus, we truncate $\succ_r$, discarding every agent worse than and including $a_i$ (if multiple agents in $D$ satisfy this criterion with respect to $a_r$, we truncate at the best-ranked such agent according to $\succ_r$). We must keep track of these agents $a_r$ to later verify that they do indeed have a partner, and denote the set of all such agents $a_r$ by $Q$.

    If, after carrying out these truncations, some agent $a_i\in D$ has a partner in $M_C$ that is no longer in their preference list (or vice versa), we can reject this configuration $(M_C,B)$. Otherwise, we need to look for a matching among the agents not yet matched in $M_C$ such that every agent in $Q$ is matched. We can do this as follows: we construct a maximum-weight matching instance $(G,w)$ (where $G$ is a graph and $w$ is a weight function from the set of edges to non-negative integers) consisting of all agents that are at most one agent away from some agent in $D$ and not yet matched in $M_C$ as vertices\footnote{Here, and below, the construction differs from that in Theorem \ref{thm:preferentialbipfpt}: we do not need to consider the whole set of agents because we are no longer requiring a maximum-cardinality matching overall. This also simplifies the weight function.} and all acceptable matches among these agents as edges. Furthermore, we construct $w$ as follows: for every acceptable pair of agents $a_r,a_s$ present in the instance, we let $w(\{a_r,a_s\})=\vert\{a_r,a_s\}\cap Q\vert$, i.e., we assign a weight of 1 for every endpoint in $Q$. 
    
    We can construct $(G,w)$ in linear time in the graph we create, assuming that preference lists can be explored efficiently given an agent. For example, given that the set of agents $D$ is given explicitly, the preference lists of each such agent could be implemented in a linked-list-style representation, which would allow the exploration of all relevant matchings without necessarily exploring the whole instance. Note that by the preference list length, there are at most $\vert D\vert d_{\max}^2$ many agents that are at most one agent away from some agent in $D$, i.e, $G$ contains at most $\vert D\vert d_{\max}^2$ vertices, and every such agent has a preference list of length at most $d_{\max}$, so $G$ has at most $\vert D\vert d_{\max}^3$ edges. Thus, we can create $(G,w)$ in $O(\vert D\vert d_{\max}^3)$ time. Now we find a maximum-weight matching $M_{mw}$ with weight $w(M_{mw})$ for (not necessarily bipartite) instance $(G,w)$, e.g., using the algorithm by \citet{huangkavitha}, which runs in $O(\sqrt{n_*}m_*N\log(n_*^2/m_*)/\log(n_*))$ time, where $n_*$ is the number of vertices in the graph, $m_*$ is the number of edges in the graph and $N$ is the maximum edge weight. Thus, given that $n_*\leq \vert D\vert d_{\max}^2$, $m_*\leq \vert D\vert d_{\max}^3$ and $N\leq 2$, we have that $\sqrt{n_*}m_*N\leq2(\vert D\vert d_{\max}^2)^{1/2}\vert D\vert d_{\max}^3=O(\vert D\vert^{3/2}d_{\max}^4)$ and $\log(n_*^2/m_*)/\log(n_*)\leq\log((\vert D\vert d_{\max}^2)^2/\vert D\vert d_{\max}^3)/\log(\vert D\vert d_{\max}^2)=\log(\vert D\vert d_{\max})/\log(\vert D\vert d_{\max}^2)=O(1)$. Thus, this algorithm runs in $O(\vert D\vert^{3/2}d_{\max}^4)$ time.

    Clearly, by construction of $w$, $M_{mw}$ matches the maximum possible number of agents in $Q$. Thus, $M_{mw}$ is a matching of $(G,w)$ in which every agent in $Q$ is matched if and only if such a matching exists. Therefore, if there exists an agent $a_r\in Q$ that is not matched in $M_{mw}$, we can reject this configuration $(M_C,B)$. Otherwise, by construction, $M_C$ and $M_{mw}$ are disjoint, so we can accept $M=M_C\cup M_{mw}$ as our solution. 

    Of course, if we accept $M$, then $M$ is a matching of $I$ and, by construction, $\vert bp(M)\vert_{D}\vert \leq \vert B\vert\leq k$. Thus, $M$ is a solution to {\sc $k$-Deviator-sri}. Otherwise, if, for every configuration $(M_C,B)$ and an associated maximum-weight matching $M_{mw}$, it is the case that not every agent in $Q$ is matched, then it must be true that $(I,D)$ is a no-instance to {\sc $k$-Deviator-sri}, i.e., that there exists no solution to the problem for this instance. To see this, notice that if $(I=(A,\succ), D)$ has a solution $M_P$, then $M_P$ can be divided into a part $M_1$ in which every pair contains at least one agent from $D$, and a part $M_2$ which does not contain any pair with an agent from $D$. Clearly, $M_1$ satisfies the requirements of the candidate matchings $\mathcal{M}_C$, and $M_2$ must be a matching among the remaining agents. Let us fix this $M_1\in \mathcal{M}_C$. By being a solution, it must furthermore be true that, for every agent $a_i\in D$ and every agent $a_r\in A$ such that $a_r\succ_i M_P(a_i)$, either $\{a_i,a_r\}\in bp(M_P)\vert_{D}$ or ($\{a_i,a_r\}\notin bp(M_P)\vert_{D}$ and $M_P(a_r)\succ_r a_i$). We keep track of agents satisfying the latter case in set $Q$. Note that it must be the case that $\vert bp(M_P)\vert_{D}\vert=\vert bp(M_1\cup M_2)\vert_{D}\vert \leq k$, so we would have considered this set $bp(M_P)\vert_{D}$ as a candidate set $B$ in our execution. Furthermore, by construction, $M_P(a_r)\succ_r a_i$ for every $a_r\in Q$. Hence, there exists a maximum-weight matching $M_{mw}$ of $(G,w)$ such that $M_{mw}\subseteq M_2$ and $M_{mw}(a_r)=M_P(a_r)$ for every $a_r\in Q$. Thus, our algorithm would have identified solution $M_1\cup M_{mw}$, or a different solution, to the problem.

    Overall, we consider $O(d_{\max}^{\vert D\vert})$ candidate matchings and, for every candidate matching, we consider $O((\vert D\vert d_{\max})^{k})$ sets of blocking pairs, leading to $O(d_{\max}^{\vert D\vert}(\vert D\vert d_{\max})^{k})=O(\vert D\vert^{k}d_{\max}^{\vert D\vert+k})$ candidate configurations. For each configuration, we construct a maximum-weight matching instance in $O(\vert D\vert d_{\max}^3)$ time and subsequently compute a maximum-weight matching in $O(\vert D\vert^{3/2}d_{\max}^4)$ time. Hence, we arrive at a final complexity of $O(\vert D\vert^{k}d_{\max}^{\vert D\vert+k}\vert D\vert^{3/2}d_{\max}^4)=O(\vert D\vert^{k+3/2}d_{\max}^{\vert D\vert+k+4})$.
\end{proof}

\preferentialBAfpt*
\begin{proof}
    The high-level approach remains the same as in the proof of Theorem \ref{thm:preferentialfpt}. We consider $(d_{\max}+1)^{\vert D\vert}=O(d_{\max}^{\vert D\vert})$ possible combinations of choices of partner, including the possibility of being unmatched, for the agents in $D$. We can immediately discard any combinations of choices that are not matchings, and we denote the remaining set of candidate matchings by $\mathcal{M}_C$. Next, we consider each candidate matching sequentially and aim to extend it to a solution to our problem. Let us fix a candidate matching $M_C\in \mathcal{M}_C$.

    Now, we consider each possible set of blocking agents in $D$ of size at most $k$.\footnote{Here, and below, the approach differs slightly from that in Theorem \ref{thm:preferentialfpt}: we do not need to consider the whole set of blocking pairs but rather sets of blocking agents.} There are clearly at most $\sum_{1\leq r\leq k}\binom{\vert D\vert}{r}$ many such sets. We can apply the loose upper bound $\binom{\vert D\vert}{r}\leq \vert D\vert^r$ and conclude that $\sum_{1\leq r\leq k} \vert D\vert^r=\frac{\vert D\vert^{k+1}-1}{\vert D\vert-1}$ by the geometric series. Furthermore, the bound $\frac{\vert D\vert^{k+1}-1}{\vert D\vert -1}\leq \frac{\vert D\vert^{k+1}}{\vert D\vert-1}\leq 2\vert D\vert^{k}=O(\vert D\vert^{k})$ applies. Let us fix a candidate set of blocking agents $B$. 
    
    Then, for every agent $a_i\in D\setminus B$ and every agent $a_r\in A$ such that $a_r\succ_i M_C(a_i)$, agent $a_r$ must end up with a partner better than $a_i$ (according to $\succ_r$). Thus, we truncate $\succ_r$, discarding every agent worse than and including $a_i$ (if multiple agents in $D\setminus B$ satisfy this criterion with respect to $a_r$, we truncate at the best-ranked such agent according to $\succ_r$). We must keep track of these agents $a_r$ to later verify that they do indeed have a partner, and denote the set of all such agents $a_r$ by $Q$.

    From here, the approach is the same again as in the proof of Theorem \ref{thm:preferentialfpt}. We find a maximum-weight matching $M_{mw}$ that maximises matches involving agents in $Q$ in $O(\vert D\vert^{3/2}d_{\max}^4)$ time. If there exists an agent $a_r\in Q$ that is not matched in $M_{mw}$, we can reject this configuration $(M_C,B)$. Otherwise, by construction, $M_C$ and $M_{mw}$ are disjoint, so we can accept $M=M_C\cup M_{mw}$ as our solution. 

    Of course, if we accept $M$, then $M$ is a matching of $I$ and, by construction, $\vert ba(M)\cap D\vert \leq \vert B\vert\leq k$. Thus, $M$ is a solution to {\sc $k$-Deviator-BA-sri}. Otherwise, if, for every configuration $(M_C,B)$ and an associated maximum-weight matching $M_{mw}$, it is the case that not every agent in $Q$ is matched, then it must be true that $(I,D)$ is a no-instance to {\sc $k$-Deviator-BA-sri}, i.e., that there exists no solution to the problem for this instance. To see this, notice that if $(I=(A,\succ), D)$ has a solution $M_P$, then $M_P$ can be divided into a part $M_1$ in which every pair contains at least one agent from $D$, and a part $M_2$ which does not contain any pair with an agent from $D$. Clearly, $M_1$ satisfies the requirements of the candidate matchings $\mathcal{M}_C$, and $M_2$ must be a matching among the remaining agents. Let us fix this $M_1\in \mathcal{M}_C$. By being a solution, it must furthermore be true that, for every agent $a_i\in D$ and every agent $a_r\in A$ such that $a_r\succ_i M_P(a_i)$, either $a_i\in ba(M_P)\cap D$ or ($a_i\notin ba(M_P)\cap D$ and $M_P(a_r)\succ_r a_i$). We keep track of agents satisfying the latter case in the set $Q$. Note that it must be the case that $\vert ba(M_P)\cap D\vert \leq k$, so we would have considered this set $ba(M_P)\cap D$ as a candidate set $B$ in our execution. Furthermore, by construction, $M_P(a_r)\succ_r a_i$ for every $a_r\in Q$. Hence, there exists a maximum-weight matching $M_{mw}$ of $(G,w)$ such that $M_{mw}\subseteq M_2$ and $M_{mw}(a_r)=M_P(a_r)$ for every $a_r\in Q$. Thus, our algorithm would have identified solution $M_1\cup M_{mw}$, or a different solution, to the problem.

    Overall, we consider $O(d_{\max}^{\vert D\vert})$ candidate matchings and, for every candidate matching, we consider $O(\vert D\vert^{k})$ sets of blocking agents, leading to $O(d_{\max}^{\vert D\vert}\vert D\vert ^{k})$ candidate configurations. For each configuration, we construct a maximum-weight matching instance in $O(\vert D\vert d_{\max}^3)$ time and subsequently compute a maximum-weight matching in $O(\vert D\vert^{3/2}d_{\max}^4)$ time. Hence, we arrive at a final complexity of $O(\vert D\vert^{k}d_{\max}^{\vert D\vert}\vert D\vert^{3/2}d_{\max}^4)=O(\vert D\vert^{k+3/2}d_{\max}^{\vert D\vert+4})$.
\end{proof}

\preferentialBAbipfpt*
\begin{proof}
    Again, the high-level approach remains the same as in the proof of Theorem \ref{thm:preferentialBAfpt}. We consider at most $(d_{\max}+1)^{\vert D\vert}=O(d_{\max}^{\vert D\vert})$ possible candidate matchings $\mathcal{M}_C$ and at most $O(\vert D\vert^k)$ many possible sets of blocking agents in $D$. Similarly, for each combination of candidate matching $M_C$ and candidate set of blocking agents $B$, we create the set $Q$ of agents that must be matched in a corresponding solution as in the proof of Theorem \ref{thm:preferentialBAfpt}. 
    
    From hereon, the proof mirrors that of Theorem \ref{thm:preferentialbipfpt}. We find a maximum-weight matching $M_{mw}$ that maximises the number of overall matches and, subject to this, maximises the number of matched agents in $Q$. This can be done in  $O(d_{\max}n^{5/2})$ time. If there exists an agent $a_r\in Q$ that is not matched in $M_{mw}$, we can reject this configuration $(M_C,B)$. Otherwise, by construction, $M_C$ and $M_{mw}$ are disjoint, so if $\vert M_C\cup M_{mw}\vert = \vert M_S\vert$, then we can accept $M=M_C\cup M_{mw}$ as our solution. Otherwise, $M_C$ does not extend to a solution with respect to $B$, and we can reject this configuration $(M_C,B)$.

    Of course, if we accept $M$, then $M$ is a maximum-cardinality matching of $I$. Furthermore, by construction, $\vert ba(M)\cap D\vert \leq \vert B\vert\leq k$. Thus, $M$ is a solution to {\sc $k$-Deviator-BA-Max-sri}. Otherwise, if, for every configuration $(M_C,B)$ and an associated maximum-weight matching $M_{mw}$, it is the case that either $\vert M_C\cup M_{mw}\vert<\vert M_S\vert$ or not every agent in $Q$ is matched, then it must be true that $(I,D)$ is a no-instance to {\sc $k$-Deviator-BA-Max-sri}, i.e., that there exists no solution to the problem for this instance. To see this, notice that if $(I=(A,\succ), D)$ has a solution $M_P$, then $M_P$ can be divided into a part $M_1$ in which every pair contains at least one agent from $D$, and a part $M_2$ which does not contain any pair with an agent from $D$. Clearly, $M_1$ satisfies the requirements of the candidate matchings $\mathcal{M}_C$, and $M_2$ must be a maximum-cardinality matching among the remaining agents (otherwise $M_P$ would not be a maximum-cardinality matching). Let us fix this $M_1\in \mathcal{M}_C$. 

    By being a solution, it must furthermore be true that, for every agent $a_i\in D$ and every agent $a_r\in A$ such that $a_r\succ_i M_P(a_i)$, either $a_i\in ba(M_P)\cap D$ or ($a_i\notin ba(M_P)\cap D$ and $M_P(a_r)\succ_r a_i$). We keep track of agents satisfying the latter case in the set $Q$. Note that it must be the case that $\vert ba(M_P)\cap D\vert \leq k$, so we would have considered this set $ba(M_P)\cap D$ as a candidate set $B$ in our execution. Furthermore, $M_2$ must be a maximum-cardinality matching of our maximum-weight matching instance $(G,w)$ (for some configuration $(M_C,B)$) that matches all such agents $Q$, and, by the key property we stated about maximum-weight matchings in Theorem \ref{thm:preferentialbipfpt}, $M_2$ must be a maximum-weight matching of $(G,w)$, too. Thus, our algorithm would have identified this solution $M_1\cup M_{mw}$, or a different solution, to the problem.

    Putting the complexity analysis together, we consider $O(d_{\max}^{\vert D\vert})$ candidate matchings and, for every candidate matching, we consider $O(\vert D\vert^{k})$ sets of blocking agents, leading to $O(d_{\max}^{\vert D\vert}\vert D\vert^{k}=O(\vert D\vert^{k}d_{\max}^{\vert D\vert})$ candidate configurations. For each candidate configuration, we construct a maximum-weight matching instance in $O(d_{\max}n)$ time and, subsequently, compute a maximum-weight matching in $O(d_{\max}n^{5/2})$ time. Hence, we arrive at a final complexity of $O(\vert D\vert^{k}d_{\max}^{\vert D\vert+1}n^{5/2})$.
\end{proof}

\end{document}